\setlist[enumerate]{nosep,topsep=0.3em}
\setlist[enumerate,1]{label=(\roman*)}
\setlist[enumerate,2]{label=(\alph*)}
\setlist[itemize]{nosep,topsep=0.1em}
\newtcolorbox{thmEnvBox}[2][]{
enhanced,
attach boxed title to top left={yshift=-3mm,xshift=3mm},
title=#2,
top=1em,
colback=white,
colbacktitle=white,
coltitle=black,
before skip=1.0em,
boxrule=1pt,
top=2mm,
#1
}
\tikzstyle{ns}+=[thick,draw=black,fill=white,circle,minimum size=1.5mm, inner sep=0pt]
\definecolor{darkblue}{rgb}{0,0,0.38}
\definecolor{darkred}{rgb}{0.6,0,0}
\definecolor{darkgreen}{rgb}{0.1,0.35,0}
\colorlet{cutcol}{white!50!black}
\newcommand{\labeltarget}[1]{\Hy@raisedlink{\hypertarget{#1}{}}}
\newtheorem{theorem}{Theorem}
\newtheorem{lemma}[theorem]{Lemma}
\newtheorem{proposition}[theorem]{Proposition}
\newtheorem{definition}[theorem]{Definition}
\newtheorem{corollary}[theorem]{Corollary}
\newtheorem{example}[theorem]{Example}
\newcommand\OPT{\ensuremath{\mathsf{OPT}}\xspace}
\newcommand\expval{\mathbb{E}}
\newcommand\Prb{\mathbb{Pr}}
\DeclareMathOperator{\conv}{conv}
\DeclareMathOperator{\supp}{supp}
\newcommand{\manuallabel}[1]{\def\@currentlabel{#1}\label{#1}}
 \def\@cite#1#2{\textup{[{#1\if@tempswa , #2\fi}]}}
\newcommand{\bipmatch}[1][G]{\mathcal{M}(#1)}
\newcommand{\match}[1][G]{\mathcal{M}(#1)}
\newcommand{\bipmatchP}[1][G]{P_{\bipmatch[#1]}}
\newcommand{\matchP}[1][G]{P_{\match[#1]}}
\newcommand{\degreeP}[1][G]{P^{#1}_{\mathrm{deg}}}
\newcommand{\PoisD}[2][]{\mathsf{Pois}_{#1}( #2 )}
\newcommand{\ExpD}[2][]{\mathsf{Exp}_{#1}( #2 )}
\newcommand{\BernD}[2][]{\mathsf{Bern}_{#1}( #2 )}
\newcommand{\BinD}[3][]{\mathsf{Bin}_{#1}( #2 , #3 )}
\newcommand{\myexp}{\mathsf{e}}
\renewcommand{\epsilon}{\varepsilon}
\newcommand{\dcup}{\mathbin{\dot{\cup}}}
\newcommand{\mlinext}{F_{\mathrm{ML}}}
\newcommand{\colorofx}{blue}
\newcommand{\colorofy}{violet}
\newcommand{\colorofset}{green!50!black}
\title{An Optimal Monotone Contention Resolution Scheme for Bipartite Matchings via a Polyhedral Viewpoint}
\author{%
Simon Bruggmann\thanks{%
Department of Mathematics, ETH Zurich, Zurich, Switzerland.
Email: \href{mailto:simon.bruggmann@ifor.math.ethz.ch}%
{simon.bruggmann@ifor.math.ethz.ch}.
Supported by Swiss National Science Foundation grant 200021\_165866.
}
\and 
Rico Zenklusen\thanks{%
Department of Mathematics, ETH Zurich, Zurich, Switzerland.
Email: \href{mailto:ricoz@math.ethz.ch}%
{ricoz@math.ethz.ch}.
Supported by Swiss National Science Foundation grant 200021\_165866.
}
}
\date{}
\begin{document}

\maketitle

\begin{abstract}
Relaxation and rounding approaches became a standard and extremely versatile tool for constrained submodular function maximization. One of the most common rounding techniques in this context are contention resolution schemes. Such schemes round a fractional point by first rounding each coordinate independently, and then dropping some elements to reach a feasible set. Also the second step, where elements are dropped, is typically randomized. This leads to an additional source of randomization within the procedure, which can complicate the analysis. 
We suggest a different, polyhedral viewpoint to design contention resolution schemes, which avoids to deal explicitly with the randomization in the second step. This is achieved by focusing on the marginals of a dropping procedure. Apart from avoiding one source of randomization, our viewpoint allows for employing polyhedral techniques. Both can significantly simplify the construction and analysis of contention resolution schemes.

We show how, through our framework, one can obtain an optimal monotone contention resolution scheme for bipartite matchings. So far, only very few results are known about optimality of monotone contention resolution schemes. Our contention resolution scheme for the bipartite case also improves the lower bound on the correlation gap for bipartite matchings. 

Furthermore, we derive a monotone contention resolution scheme for matchings that significantly improves over the previously best one. 
At the same time, our scheme implies that the currently best lower bound on the correlation gap for matchings is not tight.

Our results lead to improved approximation factors for various constrained submodular function maximization problems over a combination of matching constraints with further constraints.
 \end{abstract}

\thispagestyle{empty}
\addtocounter{page}{-1}

\newpage

\section{Introduction}

Submodular function maximization problems enjoyed a surge of interest recently, both within the theory community and in application-focused areas. This is due to a wide set of applications in fields as diverse as combinatorial optimization, economics, algorithmic game theory, or machine learning~(see, e.g.,~\cite{LehmannLehmannNisan2006,BalcanBlumMansour2008,HartlineMirrokniSundararajan2008,CalinescuChekuriPalVondrak2011,MirzasoleimanKarbasiSarkarKrause2013,WeiIyerBilmes2014} and references therein).
The breadth of settings where submodular functions are found is not surprising in view of the fact that submodularity formalizes a very natural property, namely the property of diminishing marginal returns. More formally, given a finite ground set $E$, a function $f\colon 2^E\to \mathbb{R}_{\geq 0}$ defined on all subsets %
 of $E$ is \emph{submodular} if
\begin{equation*}
f(S\cup \{e\}) - f(S) \geq  f(T\cup \{e\}) - f(T)  \quad \forall S\subseteq T \subseteq E \text{ and } e\in E\setminus T\enspace.
\end{equation*}
In words, the marginal change in $f$ when adding an element $e$ to a set $S$ is the higher the smaller the set $S$ is. Equivalently, submodularity can also be defined by the following uncrossing property
\begin{equation*}
f(S) + f(T) \geq f(S\cup T) + f(S\cap T) \quad \forall S\subseteq T\subseteq E\enspace.
\end{equation*}
Some well-known examples of submodular functions include cut functions of graphs (undirected or directed), coverage functions, joint entropy, and rank functions of matroids.

Whereas there are some interesting unconstrained submodular maximization problems, as for example the $\mathsf{MaximumCut}$ problem, one often wants to maximize a submodular function only over some subfamily $\mathcal{F}\subseteq 2^E$ of feasible sets. This leads to the problem of constrained submodular function maximization~\ref{eq:CSFM} 
\begin{equation}\label{eq:CSFM}
\max_{S\in \mathcal{F}} f(S)\enspace.\tag{\ensuremath{\mathsf{CSFM}}}
\end{equation}
The family $\mathcal{F}$ is referred to as the family of \emph{feasible subsets} of $E$. For instance, if $E$ is the set of edges of a graph $G = (V,E)$, we could be interested in the family $\mathcal{F} \subseteq 2^E$ of all forests or all matchings in $G$.
Moreover, we assume the classical value oracle model, where the submodular function $f$ is given through a value oracle that, for any $S\subseteq E$, returns the value $f(S)$.

Submodular maximization is well-known to be hard to solve exactly, as even $\mathsf{MaximumCut}$---which is an unconstrained submodular maximization problem with an explicitly given submodular function---is $\mathsf{APX}$-hard. Moreover, the value oracle model leads to information-theoretic hardnesses, which do not rely on complexity-theoretic assumptions like $\mathsf{P} \neq \mathsf{NP}$. For example, Feige, Mirrokni, and Vondr\'ak~\cite{FeigeMirrokniVondrak2007,feige_2011_maximizing} showed that, in the value oracle model, without an exponential number of value queries it is impossible to obtain an approximation factor better than $\sfrac{1}{2}$ for unconstrained submodular function maximization. %
Clearly, in these hardness results, submodular functions $f$ are considered that are not monotone, because otherwise the ground set $E$ is trivially an optimal solution to the unconstrained %
problem.\footnote{%
 	A set function $f \colon 2^E \to \mathbb{R}_{\geq 0}$ is said to be \emph{monotone} if  $f(S) \leq f(T)$ whenever $S \subseteq T \subseteq E$.%
 } 
However, the problem remains hard to approximate beyond constant factors even for monotone submodular functions as soon as constraints are involved. 
Indeed, already for a single cardinality constraint, i.e., $\mathcal{F}\coloneqq \{S\subseteq E \mid |S| \leq k\}$, Nemhauser and Wolsey~\cite{NemhauserWolsey1978} proved that, for any $\epsilon >0 $, an exponential number of oracle calls is needed to obtain a $(1 - \sfrac{1}{\myexp} + \epsilon)$-approximation for~\ref{eq:CSFM}.\footnote{%
  	In this work, the symbol $e$ usually denotes an element of the ground set $E$. In order to avoid confusion, we therefore use the symbol $\myexp$ for Euler's number $\exp(1)=2.71828\ldots$.%
  }
Therefore, theoretical research in the area of~\ref{eq:CSFM} is mostly centered around the development of approximation algorithms, with (strong) constant-factor approximations being the gold standard.

The focus in~\ref{eq:CSFM} lies on down-closed feasibility families~$\mathcal{F}$.\footnote{%
	A family $\mathcal{F} \subseteq 2^E$ is called \emph{down-closed} if $T \in \mathcal{F}$ and $S \subseteq T$ imply $S \in \mathcal{F}$.%
}
If the function $f$ is monotone, this is without loss of generality. Indeed, we can simply optimize over the down-closure $\{S \subseteq E \mid S\subseteq T \text{ for some } T\in \mathcal{F}\}$ of $\mathcal{F}$, and then raise the found set to a feasible set by potentially adding additional elements, which can only increase the function value due to monotonicity.
 For non-monotone $f$, however, the problem quickly gets very hard to approximate. More precisely, Vondr\'ak~\cite{Vondrak2013} showed that exponentially many value oracle queries are needed to get a constant-factor approximation for non-monotone~\ref{eq:CSFM} already over the set of bases of a matroid.
Due to this, we assume throughout this work that $\mathcal{F}$ is down-closed.

\medskip

Most current approaches for~\ref{eq:CSFM} fall into one of three main classes:
\begin{enumerate}
\item greedy procedures~\cite{NemhauserWolseyFisher1978,FisherNemhauserWolsey1978,NemhauserWolsey1978, ConfortiCornuejols1984, Sviridenko2004, GuptaRothSchoenebeckTalwar2010, FeldmanHarshawKarbasi2017},
\item local search algorithms~\cite{LeeMirrokniNagarajanSviridenko2009,lee_2010_maximizing,LeeSviridenkoVondrak2010,feige_2011_maximizing,FeldmanNaorSchwartzWard2011,Ward2012, BruggmannZenklusen2017},
\item and relaxation and rounding approaches~\cite{AgeevSviridenko2004,Vondrak2008, chekuri_2009_dependent,ChekuriVondrakZenklusen2010, FeldmanNaorSchwartz2011a, FeldmanNaorSchwartz2011b, CalinescuChekuriPalVondrak2011, kulik_2013_approximations, ChekuriVondrakZenklusen2011, chekuri_2014_submodular, ene_2016_constrained, buchbinder_2016_constrained}.
\end{enumerate}

Here we are interested in relaxation and rounding approaches for~\ref{eq:CSFM}, two key advantages of which are the following: First, they are very versatile in terms of constraints they can handle, and, second, they allow for breaking the problem down into clean subproblems.  
Relaxation and rounding approaches work with a relaxation $P$ of the family of feasible sets $\mathcal{F}$.
We call a polytope $P$ a \emph{relaxation} of $\mathcal{F} \subseteq 2^E$ if $P \subseteq [0,1]^E$ and $P$ has the same integral points as the combinatorial polytope $P_{\mathcal{F}} \coloneqq \conv(\{\chi^S \mid S\in \mathcal{F}\})\subseteq [0,1]^E$ corresponding to $\mathcal{F}$,\footnote{For any set $S\subseteq E$, we denote by $\chi^S\in \{0,1\}^E$ the characteristic vector of $S$.} i.e., $P\cap \{0,1\}^E = P_{\mathcal{F}}\cap \{0,1\}^E$ (note that this implies $P_{\mathcal{F}}\subseteq P$).
Moreover, relaxation and rounding approaches require an extension $F \colon [0,1]^E \to\mathbb{R}_{\geq 0}$ of $f$. Having fixed a relaxation $P$ of $\mathcal{F}$ and an extension $F$ of $f$, the following two steps comprise a typical relaxation and rounding approach:

\begin{thmEnvBox}[width=15cm,center]{}
\textbf{Relaxation and rounding for~$\bm{\mathsf{CSFM}}$} (with relaxation $P$ of $\mathcal{F}$ and extension $F$ of $f$)
\begin{enumerate}[label=(\arabic*)]
\item\label{item:maxExt} Approximately maximize $F$ over $P$, i.e., $\max\{F(z) \mid z\in P\}$, to obtain $x\in P$.

\item\label{item:roundPoint} Round $x$ to a point $\chi^S \in P$ and return $S$.
\end{enumerate}
\end{thmEnvBox}

The by far most successful extension $F$ of $f$ used in~\ref{eq:CSFM} is the so-called \emph{multilinear extension} $\mlinext$. For a point $x \in [0,1]^E$, it is defined by the expected value $\mlinext(x) \coloneqq \expval [ f(R(x))]$, where $R(x)$ is a random subset of $E$ containing each element $e \in E$ independently with probability $x_e$.
The success of the multilinear extension is based on the fact that it has rich structural properties that can be exploited in both steps of a relaxation and rounding approach.

Very general results are known for step~\ref{item:maxExt} when $F=\mlinext$.
More precisely, Vondr\'ak~\cite{Vondrak2008} and C{\u{a}}linescu, Chekuri, P\'al, and Vondr\'ak~\cite{CalinescuChekuriPalVondrak2011} proved the following result that holds for the extension $\mlinext$ of any monotone submodular function $f \colon 2^E \to \mathbb{R}_{\geq 0}$ and any down-closed polytope $P \subseteq [0,1]^E$ that is \emph{solvable}, which means that any linear function can be optimized efficiently over $P$.
For any $\epsilon >0$ and $b\in [0,1]$, one can efficiently compute a point $x\in b P$ with $\mlinext(x) \geq (1-\myexp^{-b} - \epsilon)\cdot f(\OPT)$, where $\OPT$ denotes an optimal solution to our~\ref{eq:CSFM} problem.
As shown by Mirrokni, Schapira, and Vondr\'ak~\cite{MirrokniSchapiraVondrak2008}, this result is essentially optimal as it cannot be improved by any constant factor.

Analogous results hold also for non-monotone $f$, with a weaker constant approximation guarantee, as first shown by Chekuri, Vondr\'ak, and Zenklusen~\cite{chekuri_2014_submodular}. Better constants were later obtained by Feldman, Naor, and Schwartz~\cite{FeldmanNaorSchwartz2011b}, Ene and Nguy$\tilde{\hat{\mathrm{e}}}$n~\cite{ene_2016_constrained}, and Buchbinder and Feldman~\cite{buchbinder_2016_constrained}. 
More precisely, Feldman, Naor, and Schwartz~\cite{FeldmanNaorSchwartz2011b} provide an elegant algorithm that, for any $b\in [0,1]$ and $\epsilon >0$, returns a point $x \in b P$ with $\mlinext(x) \geq (b \cdot \myexp^{-b} - \epsilon) \cdot f(\OPT)$. For $b = 1$, the factor was increased to $0.372$ in~\cite{ene_2016_constrained}, and subsequently to $0.385$ in~\cite{buchbinder_2016_constrained}.\footnote{%
On the inapproximability side, Oveis Gharan and Vondr\'ak~\cite{OveisGharanVondrak2011} proved an information-theoretic hardness in the value oracle model of $0.478$ for $b = 1$.
}
The ideas in~\cite{ene_2016_constrained} and~\cite{buchbinder_2016_constrained} also seem to work for $b < 1$. It is not immediate to see, however, what the corresponding approximation guarantees would be in that case.
Obtaining a point $x$ in a down-scaled version $b P$ of $P$ can often be interesting because such points are typically easier to round in step~\ref{item:roundPoint}.

In summary, constant-factor approximations for step~\ref{item:maxExt} with $F = \mlinext$ (with respect to $\OPT$) can be obtained for any down-closed solvable polytope $P$.

For step~\ref{item:roundPoint}, i.e., the rounding of a fractional point, different techniques have been developed. For example, when $\mathcal{F}$ is the family of independent sets of a matroid and $P=P_{\mathcal{F}}$, then lossless rounding with respect to the multilinear extension is possible via pipage rounding~\cite{AgeevSviridenko2004} or swap rounding~\cite{ChekuriVondrakZenklusen2010}.
However, the currently most versatile rounding technique is based on so-called monotone contention resolution schemes (CR schemes), which were introduced in~\cite{chekuri_2014_submodular}. They also have applications beyond~\ref{eq:CSFM} (see, e.g.,~\cite{GuptaNagarajan2013, Adamczyk2015, FeldmanSvenssonZenklusen2016}).

The goal of this paper is to introduce a new polyhedral viewpoint on CR schemes that avoids some of its complications. Through this new point of view, we design an optimal monotone CR scheme for the bipartite matching polytope, and also beat the previously strongest monotone CR scheme for the general matching polytope. Before expanding on our contributions, we first present a brief introduction to CR schemes in the next subsection, which also allows us later to clearly highlight the benefits of our new viewpoint.

\subsection{Contention resolution schemes}\label{sec:CRschemes}

The key goal when rounding a point $x\in P$ to a feasible set $S\in \mathcal{F}$ is not to lose too much in terms of objective value, i.e., we would like that $f(S)$ is not much less than $\mlinext(x)$.
In view of the way the multilinear extension $\mlinext$ is defined, the arguably most natural strategy is to randomly round $x$ to a set $R(x)\subseteq E$ that includes each element $e \in E$ independently with probability $x_e$. By definition of the multilinear extension $\mlinext$, it immediately follows that $\expval[ f(R(x))] = \mlinext(x)$, meaning that, at least in expectation, the rounding is lossless. The resulting set $R(x)$, however, will typically not be feasible. 
Contention resolutions schemes address this issue by carefully dropping elements from $R(x)$ to achieve feasibility.

\begin{definition}\label{def:CRScheme}
Let $b,c \in [0,1]$. A \emph{$(b,c)$-balanced contention resolution scheme} $\pi$ for a relaxation $P$ of $\mathcal{F}$ is a procedure that, for every $x \in b P$ and $A \subseteq \supp(x)$, returns a (possibly random) set $\pi_x(A) \subseteq A$ with $\pi_x(A) \in \mathcal{F}$ and
\begin{equation}\label{eq:CRbalancedness}
\Prb [e \in \pi_x(R(x)) \mid e \in R(x) ]
= \frac{1}{x_e} \cdot \Prb[e\in \pi_x(R(x))]
\geq c \qquad \forall e\in \supp(x)\enspace .
\end{equation}
The scheme is \emph{monotone} if, for all $x \in bP$, it holds $\Prb[ e \in \pi_x(A) ] \geq \Prb[ e \in \pi_x(B) ]$ whenever $e \in A \subseteq B \subseteq \supp(x)$.
Moreover, a $(1,c)$-balanced CR scheme is also called a \emph{$c$-balanced CR scheme}.
\end{definition}

Notice that the equality in~\eqref{eq:CRbalancedness} always holds because $e\in \pi_x(R(x))$ can only happen if $e\in R(x)$, and, moreover, $\Pr[e\in R(x)]=x_e$.
Below, we exemplify the notion of contention resolution schemes by considering a matching constraint.

\begin{example}\label{ex:CRschemeMatch}
We describe a CR scheme for the relaxation $\degreeP[G] \coloneqq \big\{ x \in \mathbb{R}^E_{\geq 0}  \mid  x(\delta(v)) \leq 1 \;\;\forall v \in V     \big\}$ of the family of matchings in a simple graph $G=(V,E)$.
The box below describes the CR scheme (see Figure~\ref{fig:exampleCRS} for an illustration).
\begin{thmEnvBox}[width=15cm,center]{}
	Given $x \in \degreeP[G]$ and $A \subseteq \supp(x)$, we construct a matching $M^A \subseteq A$ in $G$ as follows:
	\begin{enumerate}[label=\normalfont(\roman*)]
		\item\label{enum:exampleCR-subsampling} Select each $e \in A$ independently with probability $\frac{1}{2}$. Denote the resulting set by $\overline{A}$.
		\item\label{enum:exampleCR-isolated} Return all isolated edges $M^A \subseteq \overline{A}$ in the graph $G'=(V, \overline{A})$.
	\end{enumerate}
\end{thmEnvBox}
We recall that an edge is isolated in a graph if it is not touched by any other edge in the graph, i.e., it does not have a common endpoint with another edge.
Because we only return isolated edges in some subgraph of $G$, the CR scheme indeed returns a matching.

Note that for an edge $e \in A$ to be included in the resulting matching $M^A$, all its adjacent edges in $A$ must be discarded in step~\ref{enum:exampleCR-subsampling} while $e$ needs to be selected.
The probability of the latter is $\frac{1}{2}$ whenever $e\in A$. Moreover, since the probability that all edges in $A$ neighboring $e$ get discarded in step~\ref{enum:exampleCR-subsampling} gets smaller the more such edges there are, the scheme is monotone.

To determine the balancedness, let $x \in \degreeP[G]$ and fix an edge $e = \{u,v\} \in \supp(x)$. 
Recall that $x$ is first rounded independently to obtain $A \coloneqq R(x)$, and then, in step~\ref{enum:exampleCR-subsampling}, each edge in $A$ is selected independently with probability $\frac{1}{2}$ to obtain $\overline{A}$. Hence, the distribution of $\overline{A}$ is the same as $R(\sfrac{x}{2})$. 
Using a union bound, the probability that $R(\sfrac{x}{2}) \setminus \{e\}$ contains an edge incident to $u$ is at most 
\begin{equation*}
\sum_{g \in \delta(u) \setminus \{e\}} \frac{1}{2} \cdot x_{g}
\leq \frac{1}{2} \cdot x(\delta(u)) \leq \frac{1}{2}
\enspace ,
\end{equation*}
where we used that $x \in \degreeP[G]$.	
Clearly, the same bound also holds for the probability that an edge in $R(\sfrac{x}{2}) \setminus \{e\}$ is incident to $v$. Moreover, these two events are independent because $e$ is the only edge with endpoints $u$ and $v$ (because $G$ is simple), implying that with probability at least $(1-\frac{1}{2})^2 = \frac{1}{4}$, no edge in $R(\sfrac{x}{2}) \setminus \{e\}$ has $u$ or $v$ as an endpoint.
When the latter happens and $e$ itself is contained in $R(\sfrac{x}{2})$, $e$ is included in the final matching $M^A = M^{R(x)}$. We therefore get that
\begin{equation*}
\Pr[ e \in M^{R(x)}] \geq \frac{1}{4} \cdot
\Pr[e \in R(\sfrac{x}{2})] 
= \frac{1}{8} \cdot x_e
\enspace ,
\end{equation*}
which shows that the scheme is $\frac{1}{8}$-balanced.

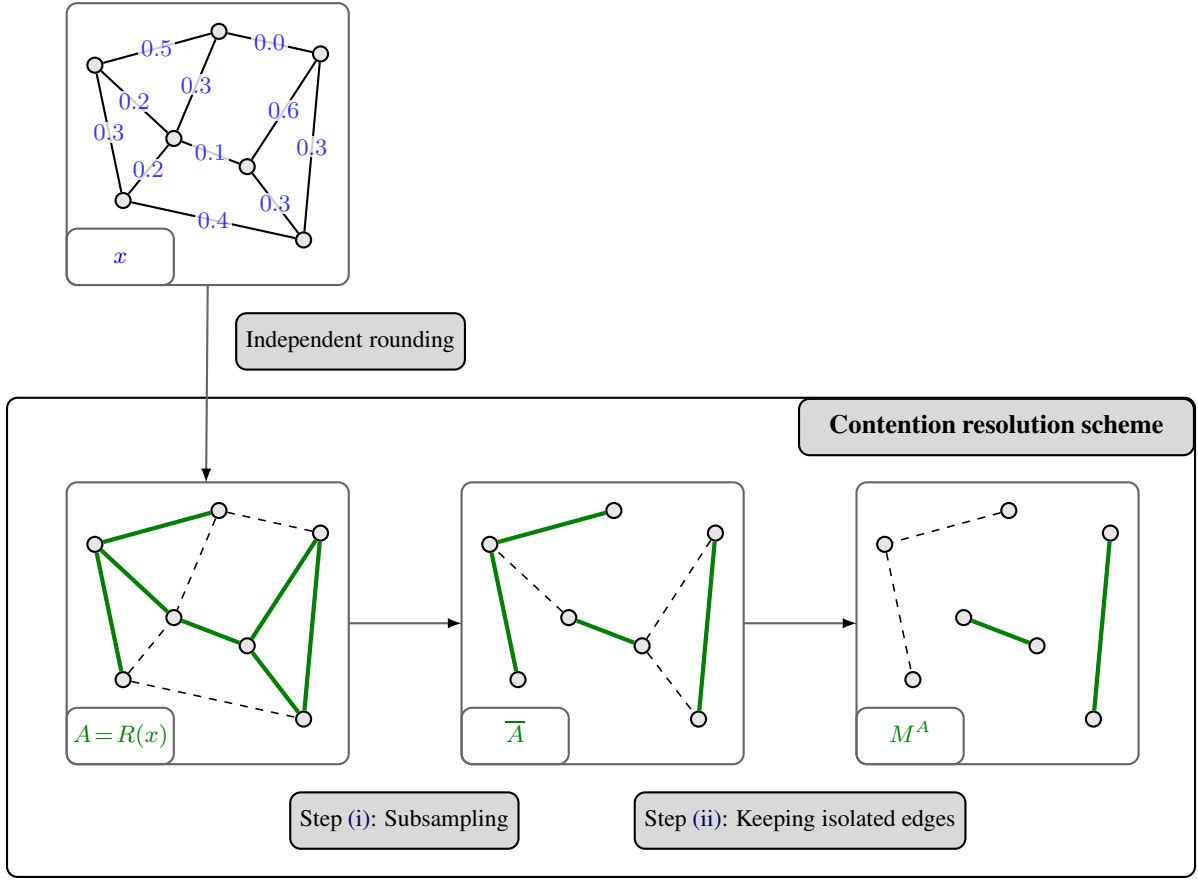
\begin{figure}[ht]
	\centering

\tikzstyle{vertex}=[circle, draw=black, thick, inner sep=1pt, minimum size=2mm, fill=black!10]
\tikzstyle{edgenode}=[rectangle, fill=white, fill opacity=0.8, text opacity=1, inner sep=1pt]
\tikzstyle{edge}=[thick, draw=black]
\tikzstyle{selectedEdge}=[ultra thick, draw=\colorofset]
\tikzstyle{discardedEdge}=[semithick, dashed, draw=black]
\tikzstyle{arc}=[thick, draw=black!60, -latex]
\tikzstyle{boxed}=[thick, draw=black!60, rounded corners]
\tikzstyle{boxedFat}=[thick, draw=black, rounded corners]
\tikzstyle{boxedLight}=[thick, draw=black, rounded corners, fill = black!15, minimum height=0.75*1cm]

\begin{tikzpicture}[scale=0.75]

\begin{scope}[shift={(0,1.5)}, local bounding box=random0]

\draw[boxed] (0,0) rectangle (5,5);
\draw[boxed] (0,0) rectangle (1.9,1);
\node[font=\footnotesize, color=\colorofx] at (0.95,0.5) {$\strut x$};

\node[vertex] (1) at (0.5,3.9) {};
\node[vertex] (2) at (2.7,4.5) {};
\node[vertex] (3) at (4.5,4.1) {};
\node[vertex] (4) at (3.2,2.1) {};
\node[vertex] (5) at (1.0,1.5) {};
\node[vertex] (6) at (1.9,2.6) {};
\node[vertex] (7) at (4.2,0.8) {};

\draw[edge] (6) -- node[font=\footnotesize, color=\colorofx, edgenode] {$0.2$} (1);
\draw[edge] (6) -- node[font=\footnotesize, color=\colorofx, edgenode] {$0.1$} (4);
\draw[edge] (1) -- node[font=\footnotesize, color=\colorofx, edgenode] {$0.3$} (5);
\draw[edge] (2) -- node[font=\footnotesize, color=\colorofx, edgenode] {$0.5$} (1);
\draw[edge] (5) -- node[font=\footnotesize, color=\colorofx, edgenode] {$0.2$} (6);
\draw[edge] (2) -- node[font=\footnotesize, color=\colorofx, edgenode] {$0.3$} (6);
\draw[edge] (3) -- node[font=\footnotesize, color=\colorofx, edgenode] {$0.0$} (2);
\draw[edge] (3) -- node[font=\footnotesize, color=\colorofx, edgenode] {$0.6$} (4);
\draw[edge] (5) -- node[font=\footnotesize, color=\colorofx, edgenode] {$0.4$} (7);
\draw[edge] (4) -- node[font=\footnotesize, color=\colorofx, edgenode] {$0.3$} (7);
\draw[edge] (3) -- node[font=\footnotesize, color=\colorofx, edgenode] {$0.3$} (7);

\end{scope}

\begin{scope}[shift={(0,-7)}, local bounding box=random1]

\draw[boxed] (0,0) rectangle (5,5);
\draw[boxed] (0,0) rectangle (1.9,1);
\node[font=\footnotesize, color=\colorofset] at (0.95,0.5) {$\strut A\!=\!R(x)$};

\node[vertex] (1) at (0.5,3.9) {};
\node[vertex] (2) at (2.7,4.5) {};
\node[vertex] (3) at (4.5,4.1) {};
\node[vertex] (4) at (3.2,2.1) {};
\node[vertex] (5) at (1.0,1.5) {};
\node[vertex] (6) at (1.9,2.6) {};
\node[vertex] (7) at (4.2,0.8) {};

\draw[selectedEdge] (6) --  (1);
\draw[selectedEdge] (6) -- (4);
\draw[selectedEdge] (1) -- (5);
\draw[selectedEdge] (2) -- (1);
\draw[discardedEdge] (5) -- (6);
\draw[discardedEdge] (2) -- (6);
\draw[discardedEdge] (3) -- (2);
\draw[selectedEdge] (3) -- (4);
\draw[discardedEdge] (5) -- (7);
\draw[selectedEdge] (4) -- (7);
\draw[selectedEdge] (3) -- (7);

\end{scope}

\begin{scope}[shift={(7,-7)}, local bounding box=random2]

\draw[boxed] (0,0) rectangle (5,5);
\draw[boxed] (0,0) rectangle (1.9,1);
\node[font=\footnotesize, color=\colorofset] at (0.95,0.5) {$\strut \overline{A}$};

\node[vertex] (1) at (0.5,3.9) {};
\node[vertex] (2) at (2.7,4.5) {};
\node[vertex] (3) at (4.5,4.1) {};
\node[vertex] (4) at (3.2,2.1) {};
\node[vertex] (5) at (1.0,1.5) {};
\node[vertex] (6) at (1.9,2.6) {};
\node[vertex] (7) at (4.2,0.8) {};

\draw[discardedEdge] (6) --  (1);
\draw[selectedEdge] (6) -- (4);
\draw[selectedEdge] (1) -- (5);
\draw[selectedEdge] (2) -- (1);
\path[] (5) -- (6);
\path[] (2) -- (6);
\path[] (3) -- (2);
\draw[discardedEdge] (3) -- (4);
\path[] (5) -- (7);
\draw[discardedEdge] (4) -- (7);
\draw[selectedEdge] (3) -- (7);

\end{scope}

\begin{scope}[shift={(14,-7)}, local bounding box=random3]

\draw[boxed] (0,0) rectangle (5,5);
\draw[boxed] (0,0) rectangle (1.9,1);
\node[font=\footnotesize, color=\colorofset] at (0.95,0.5) {$\strut M^A$};

\node[vertex] (1) at (0.5,3.9) {};
\node[vertex] (2) at (2.7,4.5) {};
\node[vertex] (3) at (4.5,4.1) {};
\node[vertex] (4) at (3.2,2.1) {};
\node[vertex] (5) at (1.0,1.5) {};
\node[vertex] (6) at (1.9,2.6) {};
\node[vertex] (7) at (4.2,0.8) {};

\path[] (6) --  (1);
\draw[selectedEdge] (6) -- (4);
\draw[discardedEdge] (1) -- (5);
\draw[discardedEdge] (2) -- (1);
\path[] (5) -- (6);
\path[] (2) -- (6);
\path[] (3) -- (2);
\path[] (3) -- (4);
\path[] (5) -- (7);
\path[] (4) -- (7);
\draw[selectedEdge] (3) -- (7);

\end{scope}

\draw[boxedFat] ($(random1.north west) + (-1,1.5)$) rectangle ($(random3.south east) + (1,-2)$);

\node[boxedLight, font=\small, minimum width=0.75*7cm, anchor=north east, inner sep = 0pt] at ($(random3.north east)+(1,1.5)$) {\textbf{Contention resolution scheme\strut}};

\draw[arc] (random0.south) -- (random1.north);
\draw[arc] (random1.east) -- (random2.west);
\draw[arc] (random2.east) -- (random3.west);

\node[boxedLight, font=\footnotesize, anchor=west] at ($(random0.south)!0.5!(random1.north) + (0.5,0.75)$) {Independent rounding\strut};
\node[boxedLight, font=\footnotesize] at ($(random1.south)!0.5!(random2.north) + (0,-3.5)$) {Step~\ref{enum:exampleCR-subsampling}: Subsampling\strut};
\node[boxedLight, font=\footnotesize] at ($(random2.south)!0.5!(random3.north) + (0,-3.5)$) {Step~\ref{enum:exampleCR-isolated}: Keeping isolated edges\strut};

\end{tikzpicture}  	\caption{A contention resolution scheme for the relaxation $\degreeP[G] = \big\{ x \in \mathbb{R}^E_{\geq 0}  \mid  x(\delta(v)) \leq 1 \;\; \forall v \in V \big\}$ of the family of matchings in a simple graph $G=(V,E)$.}\label{fig:exampleCRS}
\end{figure}

\end{example}

The key property of monotone CR schemes is that their balancedness translates into an approximation factor for the rounding procedure, as highlighted by the following theorem from~\cite{chekuri_2014_submodular}.
\begin{theorem}[\cite{chekuri_2014_submodular}]\label{thm:crSchemeGuarantee}
Let $\pi$ be a monotone $(b,c)$-balanced CR scheme for a relaxation $P \subseteq [0,1]^E$ of $\mathcal{F} \subseteq 2^E$ and let $x \in b P$. It then holds
\begin{equation*}
	\expval [ f(\pi_x(R(x)))  ] \geq c \cdot \mlinext(x)   \enspace ,
\end{equation*}
where $\mlinext$ is the multilinear extension of a montone submodular function $f \colon 2^E \to \mathbb{R}_{\geq 0}$.
Moreover, if $f$ is non-monotone, then there is an efficient procedure that takes $\pi_x(R(x))$ and returns a set $I_x \subseteq \pi_x(R(x))$ with 
\begin{equation*}
\expval [ f(I_x)  ] \geq c \cdot \mlinext(x)    \enspace .
\end{equation*}
\end{theorem}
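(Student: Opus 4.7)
The plan is to reduce the theorem to the single-element inclusion probabilities of the scheme. For every $e\in E$, define $y_e \coloneqq \Pr[e\in \pi_x(R(x))]$; the balancedness condition~\eqref{eq:CRbalancedness} immediately gives $y_e \geq c\cdot x_e$ for all $e$. For monotone $f$, I would then establish the chain
\begin{equation*}
\expval[f(\pi_x(R(x)))] \ \geq \ \mlinext(y) \ \geq \ \mlinext(c\cdot x) \ \geq \ c\cdot \mlinext(x)\enspace,
\end{equation*}
from which the first part of the theorem is immediate. The last two inequalities are routine: $\mlinext$ inherits coordinate-wise monotonicity from monotone $f$ (handling $y\geq c\cdot x$), and for monotone submodular $f$ the multilinear extension is concave along any nonnegative direction, so $\mlinext(c\cdot x) = \mlinext((1-c)\cdot 0 + c\cdot x) \geq c\cdot \mlinext(x)+(1-c)\cdot f(\emptyset) \geq c\cdot\mlinext(x)$.

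The heart of the proof, and what I expect to be the main obstacle, is the first inequality $\expval[f(\pi_x(R(x)))]\geq \mlinext(y)$. This is a correlation-gap-type statement: the correlated distribution of $\pi_x(R(x))$ yields at least as much expected $f$-value as the product distribution on $E$ with the same marginals $y$. I would prove it by a hybrid argument. Fix an arbitrary order $e_1,\dots,e_n$ of $E$ and interpolate between the two distributions by sequentially replacing the correlated indicator of $\{e_i\in \pi_x(R(x))\}$ by an independent Bernoulli with parameter $y_{e_i}$. The change in expected $f$-value across a swap can be written as the product of a conditional inclusion probability of $e_i$ with a marginal gain $f(\cdot\cup\{e_i\})-f(\cdot)$. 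Monotonicity of $\pi$ ensures that the conditional inclusion probability of $e_i$, given any fixed outcome on the other elements, is higher when fewer of those other elements are present, and submodularity of $f$ yields the same monotonicity for the marginal gain. A one-dimensional FKG-style rearrangement inequality then shows that each swap step can only decrease the expected $f$-value, and iterating over all coordinates gives the desired bound. This step is precisely where monotonicity of the CR scheme enters essentially.

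For non-monotone $f$, the coordinate-wise monotonicity of $\mlinext$ and the concavity-along-rays step both fail: extra surviving elements of $\pi_x(R(x))$ may decrease $f$. I would address this by further randomizing. The efficient procedure applies to $\pi_x(R(x))$ a standard subsampling primitive that keeps each surviving element independently with an appropriate constant probability, producing $I_x\subseteq \pi_x(R(x))$; since $\mathcal F$ is down-closed, $I_x\in \mathcal F$ is automatic. The resulting inclusion marginals of $I_x$ remain proportional to $y$, and a variant of the hybrid argument above, which no longer needs coordinate monotonicity of $\mlinext$ but exploits the non-monotone submodular identity for random subsets of a fixed set, yields $\expval[f(I_x)]\geq c\cdot \mlinext(x)$, completing the proof.
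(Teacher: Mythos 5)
First, note that the paper does not prove this theorem itself; it is quoted from~\cite{chekuri_2014_submodular}, so your proposal has to be measured against the argument given there. Your overall instincts (telescope over the elements, use submodularity for marginal gains, use monotonicity of the scheme together with an FKG-type correlation inequality) are close to that argument, but the route you choose has a genuine gap: the first and central inequality $\expval[f(\pi_x(R(x)))]\geq \mlinext(y)$ is \emph{false}. Monotonicity of a CR scheme constrains how the marginal $\Pr[e\in\pi_x(A)]$ varies with the \emph{input} set $A$; it says nothing about the correlations inside the \emph{output} set $\pi_x(A)$ for a fixed $A$, which may be arbitrarily positively correlated. Concretely, take $E=\{1,2\}$, $\mathcal{F}=2^E$, $P=[0,1]^E$, $x=(1,1)$, $f(S)=\min\{|S|,1\}$, and let $\pi_x(A)$ equal $A$ with probability $\sfrac{1}{2}$ and $\emptyset$ with probability $\sfrac{1}{2}$. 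This is a monotone $\sfrac{1}{2}$-balanced scheme with $y=(\sfrac{1}{2},\sfrac{1}{2})$, yet $\expval[f(\pi_x(R(x)))]=\sfrac{1}{2}<\sfrac{3}{4}=\mlinext(y)$ (the theorem itself holds with equality here, since $c\cdot \mlinext(x)=\sfrac12$). Your hybrid step breaks precisely where you assert that ``the conditional inclusion probability of $e_i$, given any fixed outcome on the other elements, is higher when fewer of those other elements are present'': in the example, $\Pr[1\in I\mid 2\in I]=1$ while $\Pr[1\in I\mid 2\notin I]=0$. That negative-association property of the output distribution is not implied by monotonicity of the scheme.

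The correct argument avoids the detour through $\mlinext(y)$ and compares $\expval[f(I)]$ for $I=\pi_x(R(x))$ directly with $\expval[f(R(x))]=\mlinext(x)$. Fix an order $e_1,\dots,e_n$ and write $f(I)=f(\emptyset)+\sum_i \indic{e_i\in I}\cdot\big(f(I\cap\{e_1,\dots,e_{i-1}\}\cup\{e_i\})-f(I\cap\{e_1,\dots,e_{i-1}\})\big)$; by submodularity each marginal gain is at least the corresponding gain taken with respect to $R(x)\cap\{e_1,\dots,e_{i-1}\}\supseteq I\cap\{e_1,\dots,e_{i-1}\}$. Now both $g(R)\coloneqq\Pr[e_i\in\pi_x(R)\mid R]$ (by monotonicity of the scheme) and the marginal gain of $e_i$ with respect to $R\cap\{e_1,\dots,e_{i-1}\}$ (by submodularity) are non-increasing set functions of the \emph{product-distributed} set $R=R(x)$, so the FKG/Harris inequality applies to this pair and yields $\expval[\indic{e_i\in I}\cdot(\cdot)]\geq c\, x_{e_i}\cdot\expval[(\cdot)]$; summing and using monotonicity of $f$ (so the expected gains are non-negative) gives $\expval[f(I)]\geq c\cdot\mlinext(x)$. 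This is where monotonicity of $\pi$ genuinely enters — as monotonicity of $R\mapsto\Pr[e\in\pi_x(R)]$, not as negative association of the output. Your non-monotone part is also off target: independent subsampling of $\pi_x(R(x))$ by a constant factor is not the procedure used (and would further degrade the constant); the cited reference instead prunes $\pi_x(R(x))$ deterministically by discarding elements with negative marginal value in the telescoping order, which restores non-negativity of the expected gains in the sum above.
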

Note that in the theorem above, we have $I_x\in \mathcal{F}$ because $I_x\subseteq \pi_x(R(x))\in \mathcal{F}$ and we assume that $\mathcal{F}$ is down-closed.

Hence, if we have a solvable relaxation $P$ of $\mathcal{F}$ and a monotone $(b,c)$-balanced CR scheme $\pi$ for $P$, then step~\ref{item:maxExt} and~\ref{item:roundPoint} of the relaxation and rounding approach can be realized as follows. %
We first compute a point $x\in bP$ with
$\mlinext(x) \geq \alpha(b) \cdot f(\OPT)$,
where $\alpha(b)$ is a constant factor depending on $b$ and on whether $f$ is monotone or not (see prior discussion).
We then use the monotone $(b,c)$-balanced contention resolution scheme $\pi$ with respect to $x$ to obtain a random set $I_x \in \mathcal{F}$ as highlighted in Theorem~\ref{thm:crSchemeGuarantee} with expected submodular value
\begin{equation*}
	\expval[ f(I_x) ] \geq c \cdot \mlinext(x) \geq c \cdot \alpha(b) \cdot f(\OPT)
	\enspace ,
\end{equation*}
thus leading to a constant-factor approximation for the considered~\ref{eq:CSFM} problem.

In summary, to obtain a constant-factor approximation for a~\ref{eq:CSFM} problem, it suffices to find a monotone constant-balanced CR scheme for a relaxation $P$ of $\mathcal{F}$ that is solvable. (We recall that $\mathcal{F}$ is assumed to be down-closed in this paper.) This reduces the problem to the construction of monotone CR schemes, where the latter is oblivious to the underlying submodular function. Another advantage of monotone CR schemes is their versatility in terms of constraints they can handle, as they allow for combining different constraint families. More precisely, let $\mathcal{F}_1\subseteq 2^E$ and $\mathcal{F}_2\subseteq 2^E$ be two down-closed families, and let $P_1$ and $P_2$ be relaxations of $\mathcal{F}_1$ and $\mathcal{F}_2$, respectively. Then, as shown in~\cite{chekuri_2014_submodular}, a monotone $(b, c_1)$-balanced CR scheme for $P_1$ and a monotone $(b, c_2)$-balanced CR scheme for $P_2$ can easily be combined to yield a monotone $(b, c_1 \cdot c_2)$-balanced CR scheme for $P_1\cap P_2$, which is a relaxation of $\mathcal{F}_1 \cap \mathcal{F}_2$. In fact, the scheme for $P_1\cap P_2$ simply returns the intersection of the sets obtained by applying both schemes.
Thus, one can get a monotone constant-balanced CR scheme for any combination of a constant number of arbitrary constraint types as long as one has a monotone constant-balanced CR scheme for every single constraint type. This construction goes along with a reduction in the balancedness, but especially when constraints of various types get intersected, it is often the only known way to get constant-factor approximations.

Examples of constraint types for which monotone constant-balanced CR schemes are known include matroid constraints, knapsack constraints, column-restricted packing constraints, interval packing constraints, and matching constraints (see~\cite{chekuri_2014_submodular, Feldman2013, BuchbinderFeldman2018}).
These schemes are with respect to the polytope $P=P_{\mathcal{F}}$ for matroid and bipartite matching constraints, or with respect to a natural linear relaxation for the other constraint types. %
Unfortunately, the construction and analysis of such schemes can be quite intricate.
Moreover, while we do have monotone constant-balanced CR schemes for many constraint types,
it seems very difficult to find optimal schemes (and prove their optimality). 
In fact, the only two schemes for such basic constraints that were proven to be (essentially) optimal are a monotone $(1-\epsilon, 1-\epsilon)$-balanced CR scheme for the natural linear relaxation of a knapsack constraint and a monotone $(b, \frac{1-\myexp^{-b}}{b})$-balanced CR scheme for the matroid polytope (see~\cite{chekuri_2014_submodular}).

The existence of the latter scheme was shown using linear programming duality and a concept called correlation gap, originally introduced by Agrawal, Ding, Saberi, and Ye~\cite{AgrawalDingSaberiYe2012} for set functions in the context of stochastic programming. In~\cite{chekuri_2014_submodular}, the notion was extended to families of subsets $\mathcal{\mathcal{F}}\subseteq 2^E$  as follows.
\begin{definition}\label{def:correlationGap}
	For $\mathcal{F} \subseteq 2^E$, the \emph{correlation gap} of $\mathcal{F}$ is defined as
	\begin{equation*}
		\kappa(\mathcal{F}) \coloneqq \inf_{x \in P_{\mathcal{F}}, y \geq 0} 
		\frac{\expval[   \max_{S \subseteq R(x), S \in \mathcal{F}} \sum_{e \in S} y_e     ]}{\sum_{e \in E} x_e y_e}
		\enspace ,
	\end{equation*}
	where $R(x) \subseteq E$ is a random set containing each element $e \in E$ independently with probability $x_e$.
\end{definition}
The notion of correlation gap is also used for a relaxation $P$ of $\mathcal{F}$, instead of $\mathcal{F}$. In this case, the correlation gap of $P$ is obtained by replacing $P_{\mathcal{F}}$ by $P$ in the above definition.

The correlation gap of a family $\mathcal{F}$ can be interpreted as a measure of how amenable a constraint family is to independent randomized rounding, as opposed to correlated rounding, and it has interesting applications beyond submodular maximization, for example in stochastic programming (see~\cite{AgrawalDingSaberiYe2012}).
Exploiting linear programming duality, one can show that the correlation gap of a family $\mathcal{F}$ is equal to the maximum number $c$ such that $P_{\mathcal{F}}$ admits a $(1,c)$-balanced CR scheme~\cite{chekuri_2014_submodular}.\footnote{%
	Both the notion of correlation gap as well as the result from Chekuri, Vondr\'ak, and Zenklusen~\cite{chekuri_2014_submodular}, which correspond to the case $b = 1$, extend to arbitrary $b \in [0,1]$. For simplicity, however, we only consider the case $b=1$.%
}
Hence, if one can compute (or find a lower bound for) the correlation gap of a family $\mathcal{F}$, one gets that there exists a CR scheme for $P_{\mathcal{F}}$ whose balancedness matches the found number. %
What is important to point out is that this CR scheme is \emph{not} guaranteed to be monotone.\footnote{%
	For the case of a matroid constraint, Chekuri, Vondr\'ak, and Zenklusen~\cite{chekuri_2014_submodular} showed that this approach actually allows for obtaining a monotone scheme. In general, however, this is not the case.
}
However, to obtain the rounding guarantees claimed by Theorem~\ref{thm:crSchemeGuarantee}, monotonicity is required.

\medskip

In the context of bipartite matchings, we present a new approach that addresses both difficulties of getting simple CR schemes on the one hand and (provably) optimal \emph{monotone} schemes on the other hand simultaneously. Our results have implications for~\ref{eq:CSFM} as well as for the correlation gap of matchings in bipartite graphs. While our technique works especially well for bipartite matching constraints, it is very general and its applicability is not restricted to this setting. 
Moreover, we also consider matchings in general (i.e., not necessarily bipartite) graphs.
In the following, we therefore provide a brief overview of the best known monotone CR schemes and bounds on the correlation gap for bipartite and general matchings, before we compare them with the results we obtain through our approach.

\subsection{Prior results on contention resolution schemes for matchings}

Bounds on the correlation gap for matchings in bipartite and general graphs follow directly or can be derived from results in~\cite{KarpSipser1981, chekuri_2014_submodular, CyganGrandoniMastrolilli2013, GuruganeshLee2017}.
While work from Karp and Sipser~\cite{KarpSipser1981} yields an upper bound of $0.544$ on the correlation gap for matchings in general graphs,
Guruganesh and Lee~\cite{GuruganeshLee2017} provide a lower bound of $\frac{1 - \myexp^{-2}}{2} \geq 0.4323$. For matchings in bipartite graphs, they were able to improve the lower bound to $1-\frac{5}{2 \myexp} + \frac{1}{\myexp} \geq 0.4481$.
 Using the highlighted link between correlation gap and CR schemes, this implies that there is a $(1, 0.4481)$-balanced CR scheme for the bipartite matching polytope and a $(1, 0.4323)$-balanced CR scheme for the general matching polytope. However, these schemes are not given explicitly, and, even more crucially, they are not guaranteed to be monotone. 
Theorem~\ref{thm:crSchemeGuarantee}, which is a crucial result for employing such schemes in the context of~\ref{eq:CSFM}, does therefore not apply to them.

 The currently best known \emph{monotone} CR scheme for bipartite matchings is obtained by observing that a bipartite matching constraint can be interpreted as the intersection of two partition matroid constraints. Thus, one can combine two monotone CR schemes for the (partition) matroid polytope. Using the monotone $(b,\frac{1-\myexp^{-b}}{b})$-balanced CR scheme for the matroid polytope from Chekuri, Vondr\'ak, and Zenklusen~\cite{chekuri_2014_submodular}, one obtains a monotone $(b, (\frac{1-\myexp^{-b}}{b})^2)$-balanced CR scheme for the bipartite matching polytope.
Moreover, there exists an elegant monotone CR scheme for the general matching polytope that is $(b, \myexp^{-2b})$-balanced (see~\cite{BuchbinderFeldman2018} for a description of the scheme reported in~\cite{FeldmanNaorSchwartz2011b}). 
This scheme is very similar to the one we described in Example~\ref{ex:CRschemeMatch}, which we showed to be monotone and $(1,\frac{1}{8})$-balanced. The difference is that the subsampling in step~\ref{enum:exampleCR-subsampling} is done in a more clever way, thereby improving the balancedness for $b=1$ to $\myexp^{-2} \geq 0.1353$.
For values of $b$ that are close to $1$, a better balancedness is achieved by doing the following. First, one randomly samples a vertex bipartition of the given graph and only keeps the edges crossing the bipartition, i.e., all edges not having both endpoints in the same part of the bipartition. Since the resulting subgraph is bipartite, one can then apply the scheme for the bipartite matching polytope described above. 
As one can check, this yields a monotone $(b, \frac{1}{2} (\frac{1-\myexp^{-b}}{b})^2)$-balanced CR scheme for the general matching polytope, where the additional factor of $\frac{1}{2}$ is due to an edge surviving the preliminary sampling step only with probability $\frac{1}{2}$.
Interestingly, all the above schemes for the general matching polytope also work for the degree-relaxation $\degreeP[G]$ (see Example~\ref{ex:CRschemeMatch}) of the set of all matchings in a not necessarily bipartite graph $G$, and do not require the stronger description of the non-bipartite matching polytope.

\begin{table}[ht]
	\centering
	\renewcommand{\arraystretch}{1.25}
	\setlength{\tabcolsep}{12.0pt}
	\begin{tabular}{ lll }
		\toprule[.08em]
		Constraint type   &   \parbox[c]{\widthof{monotone CR scheme}}{Balancedness of\\monotone CR scheme}  &  \parbox[c]{\widthof{Lower bound on}}{Lower bound on\\correlation gap}   \\
		\midrule[.08em]
		\multirow{2}{*}{Bipartite matching}   &  $(b, (\frac{1 - \myexp^{-b}}{b})^2)$  \enspace\cite{chekuri_2014_submodular}  &   \\ 
		&  $(1,0.3995)$\enspace\cite{chekuri_2014_submodular}     &  $0.4481$\enspace\cite{GuruganeshLee2017}   \\
		\midrule
		\multirow{3}{*}{General matching}   &   $(b, \myexp^{-2b})$ \enspace\cite{FeldmanNaorSchwartz2011b}  &   \\
		&    $(b, \frac{1}{2}(\frac{1 - \myexp^{-b}}{b})^2)$  \enspace\cite{chekuri_2014_submodular} & \\
		&  $(1,0.1997)$\enspace\cite{chekuri_2014_submodular}     &  $0.4323$\enspace\cite{GuruganeshLee2017} \\
		\bottomrule[.08em]
	\end{tabular}
	\renewcommand{\arraystretch}{1}
	\setlength{\tabcolsep}{6.0pt}
	\caption{Prior results on %
		monotone CR schemes and on %
		the correlation gap for bipartite and general matchings.}
	\label{tab:priorResults}
\end{table}

In Table~\ref{tab:priorResults}, an overview of the above results is given. Especially in the case of general matchings, there is a large gap between the balancedness of the best known monotone CR scheme and the balancedness that is known to be achievable by some (not necessarily monotone) CR scheme (via the correlation gap). Moreover, no result in Table~\ref{tab:priorResults} is known to be tight.

\subsection{Our results}
With the help of a new viewpoint that we sketch in Section~\ref{sec:ourTechniques}, we are able to significantly improve on the state of the art highlighted in Table~\ref{tab:priorResults}.
More precisely, we provide an improved monotone CR scheme for the bipartite matching polytope, which we later show to be optimal. To state its balancedness in the theorem below, we use the function
\begin{equation*}
\beta(b) \coloneqq  \expval \bigg[ \frac{1}{1 +  \max\{ \PoisD[1]{b}, \PoisD[2]{b} \}  }  \bigg] \text{ for }b \in [0,1]\enspace,
\end{equation*}
where $\PoisD[1]{b}$ and $\PoisD[2]{b}$ are two independent Poisson random variables with parameter $b$.
\begin{theorem}\label{thm:mCRs_existence_bipartite}
There is a monotone $(b, \beta(b))$-balanced contention resolution scheme for the bipartite matching polytope.
\end{theorem}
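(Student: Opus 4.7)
The plan is to employ the polyhedral viewpoint championed in the paper: rather than specifying a randomized dropping procedure, I will, for each realization $A \subseteq \supp(x)$ of $R(x)$, specify a vector of marginals $q^A = (q_e^A)_{e \in A}$ that lies in the bipartite matching polytope $P_{\mathcal{M}(G[A])}$ of the subgraph $G[A]$. By the integrality of the bipartite matching polytope (Birkhoff--von Neumann), any such $q^A$ is realized as marginals of a distribution over matchings in $A$, and sampling from this distribution defines the scheme $\pi_x$ with $\Pr[e \in \pi_x(A)] = q_e^A$. The scheme is then a monotone $(b,\beta(b))$-balanced CR scheme provided: (i) $q^A \in P_{\mathcal{M}(G[A])}$ for every $A$, (ii) $q_e^A \le q_e^B$ whenever $e \in B \subseteq A \subseteq \supp(x)$, and (iii) $\expval_A[q_e^A \mid e \in R(x)] \ge \beta(b)$ for every $e \in \supp(x)$.

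My first candidate would be the natural target $q_e^A = 1/\max(K_u(A), K_v(A))$ for $e = \{u,v\} \in A$, where $K_w(A) = |A \cap \delta(w)|$. This trivially satisfies (i), since $\sum_{e \in \delta(v) \cap A} q_e^A \le \sum_{e \in \delta(v) \cap A} 1/K_v(A) = 1$, and (ii), since $K_u(A), K_v(A)$ are monotone in $A$. For (iii), bipartiteness of $G$ is key: conditional on $e \in R(x)$, the counts $X_u = |R(x) \cap \delta(u) \setminus \{e\}|$ and $Y_v = |R(x) \cap \delta(v) \setminus \{e\}|$ are \emph{independent} (since $e$ is the only common edge of $u$ and $v$), and each is a sum of independent Bernoullis with mean $\le b - x_e \le b$. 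In the Poisson limit where each $x_{e'}$ is small (many tiny edges summing to $\le b$ at each vertex), $X_u, Y_v$ are approximately $\PoisD{b}$-distributed, and a direct calculation gives balancedness exactly $\beta(b)$.

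The main obstacle is that $\expval[1/(1+\max(X, Y))]$ can be \emph{strictly smaller} than $\beta(b)$ for concentrated Bernoulli-sum distributions. For example, when $b = 1/2$ and each of $u,v$ has a single neighboring edge of probability $b$, a short computation yields $\expval[1/(1+\max(X,Y))] = 5/8 < \beta(1/2)$. Consequently the na\"ive $1/\max$ scheme is not by itself $(b,\beta(b))$-balanced, and the marginals $q^A$ must be refined to depend on $x$ as well as $A$. Roughly, $q_e^A$ should be boosted in ``sparse'' realizations where the other edges at $u,v$ are absent from $A$, so as to compensate for losses in the ``dense'' realizations. The polyhedral viewpoint is crucial here: we can add corrective mass to $q^A$ abstractly and invoke Birkhoff--von Neumann to certify realizability by a distribution over matchings, avoiding any explicit description of the sampling.

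The hardest step is to design these boosts so that all three conditions hold \emph{simultaneously} for every $x \in bP$. A natural strategy is to express $q_e^A$ as an expectation over auxiliary independent $\PoisD{b}$-distributed variables that play the role of ``phantom competitors'' at $u$ and $v$, so that the induced balancedness analysis reduces to the Poisson formula $\beta(b)$; the matching-polytope constraints are then checked vertex by vertex, and monotonicity is preserved because the phantom competitors are independent of $A$ while $K_u(A), K_v(A)$ are monotone. I expect the main technical work to lie in pinning down this construction precisely and in verifying that the Poisson limit is indeed the worst case, so that (iii) holds with equality exactly when $x$ is Poissonized and strictly otherwise --- matching the claimed optimal balancedness $\beta(b)$.
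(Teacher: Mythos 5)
Your framework (marginals plus Birkhoff--von Neumann decomposition), your analysis of the naive $1/\max$ scheme, and your diagnosis that concentrated Bernoulli competitors are worse than Poisson ones are all correct and match the paper's starting point. But the heart of the proof --- the construction that actually achieves $\beta(b)$ --- is missing, and the fix you sketch points in the wrong direction. The paper neither adds ``phantom competitors'' distributed as $\PoisD{b}$ at $u$ and $v$ nor boosts $q_e^A$ in sparse realizations; the former would only enlarge the denominator beyond $1+\max\{\PoisD[1]{b},\PoisD[2]{b}\}$ (the real competitors in $A$ would sit on top of the phantoms) and push the balancedness \emph{below} $\beta(b)$, while the latter is blocked by the degree constraint at $v$, which is already tight for the naive marginals whenever $\delta(v)\cap A\neq\emptyset$. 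The correct mechanism is per edge: given $A$, each $g\in A$ is independently kept with probability $(1-\myexp^{-x_g})/x_g$ and, if kept, assigned a multiplicity $q_g$ drawn from $\PoisD{x_g}$ conditioned on being at least $1$ (and $q_g=0$ otherwise); the conditional marginal of $e=\{u,v\}$ is then $q_e/\max\{\sum_{g\in\delta(u)}q_g,\sum_{g\in\delta(v)}q_g\}$. Composed with independent rounding, each $q_g$ becomes exactly an unconditioned $\PoisD{x_g}$ variable --- the efficient simulation of splitting every edge into infinitely many parallel copies --- which is what eliminates your concentrated-Bernoulli worst cases by penalizing heavy edges rather than rewarding sparse realizations. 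Note also that these marginals are random, so feasibility and monotonicity must be checked for their expectations; the paper does this by coupling the subsampled sets so that $\overline{A}=\overline{B}\cap A$.

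Second, your assertion that ``the Poisson limit is indeed the worst case'' is not free; it is the technical core of condition (iii). After Poissonization one must lower-bound $\expval[Q_e/(Q_e+\max\{S_u,S_v\})]$, where $Q_e\sim\PoisD{x_e}$ and $S_u,S_v$ are sums of independent Poissons with total parameter at most $b-x_e$. This requires (a) a Bernoulli-splitting limit argument to convert the ratio into $x_e\cdot\expval\big[1/(1+\PoisD[e]{x_e}+\max\{\PoisD[u]{b-x_e},\PoisD[v]{b-x_e}\})\big]$, and (b) a stochastic-dominance lemma stating that $X+\max\{P,Q\}$ is dominated by $\max\{Y+P,Z+Q\}$ for i.i.d.\ $X,Y,Z$ independent of $P,Q$, so that the leading $\PoisD{x_e}$ can be absorbed into the two maxima to yield $1+\max\{\PoisD[1]{b},\PoisD[2]{b}\}$ and hence $\beta(b)$. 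Without the explicit construction and these two steps, the proposal identifies the right target but does not yet constitute a proof.
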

We notice that for $b=1$, one obtains $\beta(1) \geq 0.4762$.
By exploiting the highlighted link between CR schemes and correlation gap, this implies the following.
\begin{corollary}\label{cor:corGap_bipartite}
	The correlation gap for matchings in bipartite graphs is at least $\beta(1) \geq 0.4762$.\footnote{%
		This number matches the lower bound of Guruganesh and Lee~\cite{GuruganeshLee2017} on the unweighted version of the correlation gap for matchings in bipartite graphs (fixing %
		$y \equiv 1$ in Definition~\ref{def:correlationGap}). Since they were only able to prove this bound in the unweighted setting, however, their result had no implications regarding the existence of CR schemes with a certain balancedness.%
	}
\end{corollary}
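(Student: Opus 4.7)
The plan is to derive the corollary directly from Theorem~\ref{thm:mCRs_existence_bipartite} by invoking the link between correlation gap and contention resolution schemes that is recalled in the paragraph following Definition~\ref{def:correlationGap}. Concretely, I would first specialize Theorem~\ref{thm:mCRs_existence_bipartite} to $b=1$ to obtain a monotone $(1,\beta(1))$-balanced CR scheme $\pi$ for the bipartite matching polytope $P_{\mathcal{F}}$, where $\mathcal{F}$ denotes the family of matchings in the given bipartite graph. Dropping monotonicity, $\pi$ is in particular a $(1,\beta(1))$-balanced CR scheme, so the maximum $c$ for which $P_{\mathcal{F}}$ admits a $(1,c)$-balanced CR scheme is at least $\beta(1)$.

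Next, I would appeal to the result of Chekuri, Vondr\'ak, and Zenklusen~\cite{chekuri_2014_submodular}, stated in the paragraph preceding the corollary, which establishes via linear programming duality that the correlation gap $\kappa(\mathcal{F})$ equals exactly this maximum value of $c$. Combining the two statements yields $\kappa(\mathcal{F}) \geq \beta(1)$.

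It remains to verify the numerical inequality $\beta(1) \geq 0.4762$. Since $\beta(1) = \expval\bigl[ 1/(1+\max\{X,Y\}) \bigr]$ for independent Poisson random variables $X,Y$ with parameter $1$, this reduces to the explicit series
\[
\beta(1) = \sum_{k=0}^{\infty} \frac{F(k)^2 - F(k-1)^2}{k+1},
\]
where $F(k) = \myexp^{-1}\sum_{j=0}^{k} 1/j!$ is the CDF of $\PoisD{1}$ (with the convention $F(-1)=0$). The tail beyond any fixed $k_0$ is dominated by a rapidly decaying Poisson tail, so a direct truncated evaluation of the series establishes $\beta(1) \geq 0.4762$.

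There is no real obstacle here: the corollary is a clean consequence of Theorem~\ref{thm:mCRs_existence_bipartite} combined with the duality between correlation gap and (not necessarily monotone) CR schemes. The only minor care needed is to remember that we may use non-monotone CR schemes when passing to the correlation gap, so the monotonicity of the scheme from Theorem~\ref{thm:mCRs_existence_bipartite} is not actually required for this particular corollary, and that the numerical lower bound on $\beta(1)$ comes from a straightforward Poisson computation.
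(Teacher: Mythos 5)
Your proposal is correct and is exactly the argument the paper intends: the corollary follows immediately from Theorem~\ref{thm:mCRs_existence_bipartite} with $b=1$ together with the LP-duality equivalence from~\cite{chekuri_2014_submodular} between the correlation gap and the best balancedness of a (not necessarily monotone) $(1,c)$-balanced CR scheme. Your observation that monotonicity can be dropped here, and your series evaluation of $\beta(1)$, are both accurate and match the paper's (implicit) reasoning.
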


Moreover, the same polyhedral viewpoint that we use for designing CR schemes also enables us to obtain an upper bound on the best possible monotone CR scheme for bipartite matchings, showing that our monotone $(b,\beta(b))$-balanced CR scheme is optimal.
\begin{theorem}\label{thm:mCRs_optimality_bipartite}
Let $\pi$ be a monotone $(b,c)$-balanced contention resolution scheme for the bipartite matching polytope, where $b,c \in [0,1]$.
	Then, it holds that $c \leq \beta(b)$.
\end{theorem}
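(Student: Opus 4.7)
The plan is to exhibit a specific hard instance $(G, x)$ that witnesses the bound: a bipartite graph $G$ and a point $x \in bP_{\text{BM}(G)}$ such that, for any monotone $(b, c)$-balanced CR scheme on this instance, one must have $c \le \beta(b)$. A natural candidate is $G = K_{n, n}$ with $x_{ij} = b/n$ on every edge, passed to the limit $n \to \infty$. One checks $x(\delta(v)) = b$ at every vertex, so $x \in bP_{\text{BM}(G)}$. Moreover, conditional on any particular edge $e = \{u, v\}$ being in $R(x)$, the numbers $K_1, K_2$ of other edges of $R(x)$ incident to $u$ and to $v$ are asymptotically independent $\PoisD{b}$ random variables, which is exactly the structure appearing in $\beta(b)$.

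First, I would symmetrize by averaging the scheme over the automorphism group of $G$; this preserves both monotonicity and balancedness, and under this assumption, by symmetry, every edge attains the same balancedness ratio $c$. Next, I would localize using monotonicity: for a fixed edge $e = \{u, v\}$, $y_e(R(x)) \le y_e(A')$, where $A' \coloneqq \{e\} \cup (\delta(u) \cap R(x)) \cup (\delta(v) \cap R(x))$ is the set of edges in $R(x)$ at $u$ or $v$. By the symmetry of the scheme, $y_e(A')$ depends only on the numbers $k$ and $l$ of edges in $A'$ incident to $u$ and to $v$ other than $e$, and equals some function $p(k, l)$. Combined with the balancedness condition and the passage to the limit, this gives $c \le \mathbb{E}[p(K_1, K_2)]$.

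It remains to show $\mathbb{E}[p(K_1, K_2)] \le \beta(b)$. The matching polytope restricted to $A'$ yields the constraints $p(k, l) + k\, q_u(k, l) \le 1$ and $p(k, l) + l\, q_v(k, l) \le 1$, where $q_u, q_v$ are the symmetric marginals of non-central edges in $A'$ and, by monotonicity, dominate the corresponding marginals in the ``without-$e$'' version of $A'$. The crucial additional ingredient is the self-consistency coming from the global symmetry of $G$: every non-central edge must itself attain balancedness $c$, and the same local reduction applied around it gives rise to the same function $p$. Encoding all of these relations as a linear program in the variables $\{p(k, l), q_u(k, l), q_v(k, l)\}_{k, l \ge 0}$ and applying LP duality yields $\mathbb{E}[p(K_1, K_2)] \le \beta(b)$; the extremal primal is precisely the scheme from Theorem~\ref{thm:mCRs_existence_bipartite}, whose marginals satisfy $p(k, l) = 1/(1 + \max\{k, l\})$ at optimality.

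The main obstacle will be writing down the correct linear program and producing the dual certificate that gives precisely $\beta(b)$. The polyhedral viewpoint introduced earlier in the paper is exactly what makes this tractable: it casts a monotone CR scheme as a monotone map from subsets of $\supp(x)$ into the bipartite matching polytope, allowing the whole argument to be carried out in terms of marginals via LP duality, without having to track the underlying distributions over matchings.
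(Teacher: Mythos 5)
Your overall architecture matches the paper's: the hard instance $K_{n,n}$ with $x_e=b/n$, symmetrization over automorphisms, working with marginals, using monotonicity to pass to a local subset, and the binomial-to-Poisson limit giving $\beta(b)$. However, the heart of the argument --- the pointwise bound on the localized marginal --- is left unexecuted, and the route you sketch for it is not clearly workable. Concretely, you reduce to bounding $p(k,l)=y_e(A')$ on the double star $A'=\{e\}\cup(\delta(u)\cap R(x))\cup(\delta(v)\cap R(x))$, and then invoke the degree constraints $p(k,l)+k\,q_u(k,l)\le 1$ and $p(k,l)+l\,q_v(k,l)\le 1$. These alone do not yield $p(k,l)\le \frac{1}{1+\max\{k,l\}}$ unless you also know $q_u\ge p$ and $q_v\ge p$, which does not follow from symmetry on the double star: the central edge $e$ is distinguished from the pendant edges, so nothing forces their marginals to coincide or compare. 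The ``self-consistency'' relations and the LP-duality certificate that are supposed to close this gap are exactly the part you flag as the main obstacle, and they are not pinned down.

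The missing idea is to restrict one level further, to the \emph{single} star. Set $B\coloneqq\delta(u)\cap A$ for $A=R(x)$ with $e\in A$. Within $B$ every edge is equivalent under an automorphism of $K_{n,n}$ fixing $B$ (permute the endpoints $N_B(u)$), so the symmetrized scheme assigns all edges of $B$ the same marginal under $y_x^{B}$; the degree constraint at $u$ then gives $(y_x^{B})_e\le 1/|B|$ directly, with no auxiliary variables $q_u$. Monotonicity ($B\subseteq A$, hence $(y_x^{A})_e\le(y_x^{B})_e$) transfers this to $(y_x^{A})_e\le 1/|\delta(u)\cap A|$, and repeating at $v$ gives $(y_x^{A})_e\le 1/\max\{|\delta(u)\cap A|,|\delta(v)\cap A|\}$. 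Taking expectations over $R(x)$ conditioned on $e\in R(x)$ and letting $n\to\infty$ then yields $c\le\beta(b)$ exactly as you intend, with no LP duality needed. With that substitution your proof goes through; as written, the key inequality is asserted but not proved.
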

Note that Theorem~\ref{thm:mCRs_optimality_bipartite} does not rule out the existence of a non-monotone CR scheme with higher balancedness.\footnote{%
	However, applying CR schemes in the context of~\ref{eq:CSFM} as described earlier, i.e., using Theorem~\ref{thm:crSchemeGuarantee}, requires monotonicity of the scheme.
} 
This is also why we cannot conclude that the correlation gap for matchings in bipartite graphs is exactly $\beta(1)$.

Given a bipartite graph $G=(V,E)$ and a non-monotone submodular function $f \colon 2^E \to \mathbb{R}_{\geq 0}$, we can combine the best known algorithm for approximately maximizing the multilinear extension with our monotone CR scheme to obtain a constant-factor approximation for maximizing $f$ over the bipartite matchings of $G$.
In fact, this yields an $\alpha(b) \cdot \beta(b)$-approximation, where $\alpha(b)$ is a constant such that we can approximately maximize $\mlinext$ over any down-closed solvable polytope $P \subseteq [0,1]^E$ to get a point $x\in b P$ satisfying $\mlinext(x) \geq \alpha(b) \cdot f(\OPT)$.
 Using the algorithm from Feldman, Naor, and Schwartz~\cite{FeldmanNaorSchwartz2011b} to get $\alpha(b)=b\cdot \myexp^{-b}-\epsilon$ (for arbitrary $\epsilon > 0$) and choosing $b=0.6$, the factor we can achieve this way is just slightly above $0.2$.
Since matchings form a $2$-exchange system, %
another option would be to use the algorithm in~\cite{FeldmanNaorSchwartzWard2011, Feldman2013} for $k$-exchange systems to get a $\frac{1}{4+\epsilon}$-approximation if the goal is to maximize $f$ only over a bipartite matching constraint.
While the latter yields a better constant, the approach based on relaxation and rounding is much more flexible in the sense that we can combine bipartite matching constraints with many other types of constraints and still get a constant-factor approximation (if there are monotone constant-balanced CR schemes for the other constraint types).

Our result for bipartite matchings also yields a monotone $(1, 0.3174)$-balanced CR scheme for the general matching polytope, i.e., the matching polytope of a not necessarily bipartite graph. %
Using further ideas, though, we manage to get a scheme with a stronger balancedness. In the theorem below, we use the function
\begin{equation*}
	\gamma(b) \coloneqq  \expval \bigg[ \frac{1}{1 +   \PoisD{2b}  }  \bigg] 
	= \frac{1- \myexp^{-2b}}{2b}
	\text{ for }b \in [0,1]\enspace,
\end{equation*}
where $\PoisD{2b}$ is a Poisson random variable with parameter $2b$.\footnote{%
The above equality holds because $\expval \big[ \frac{1}{1+\PoisD{2b}}  \big]
= \myexp^{-2b} \cdot \sum_{k=0}^\infty \frac{(2b)^k}{k! \cdot (k+1)} 
= \myexp^{-2b} \cdot \frac{ 1}{2b} \cdot\sum_{k=0}^\infty \frac{(2b)^{k+1}}{(k+1)!} 
=  \myexp^{-2b} \cdot \frac{(\myexp^{2b}-1)}{2b}
$.%
}
\begin{theorem}\label{thm:mCRs_existence_general}
There exists a monotone $(b, \gamma(b))$-balanced contention resolution scheme for the general matching polytope.
\end{theorem}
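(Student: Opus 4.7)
The plan is to mirror the bipartite argument of Theorem~\ref{thm:mCRs_existence_bipartite} via the polyhedral viewpoint: rather than specifying a dropping procedure directly, for each realization $A = R(x)$ I would construct a target marginal vector $(\mu^A_e)_{e\in A}$ and then appeal to the polyhedral realization machinery already developed for the bipartite result to turn it into an actual monotone contention resolution scheme. The target marginals have to meet three conditions: (i) $\mu^A$ lies in the matching polytope $P_{\mathcal{M}(G[A])}$ of the induced subgraph, so that it is realizable as the edge-marginals of a distribution over matchings of $A$; (ii) $\mu^A_e \geq \mu^B_e$ whenever $e \in A \subseteq B$, which lifts to monotonicity of the scheme; and (iii) the balancedness bound $\expval[\mu^{R(x)}_e \mid e \in R(x)] \geq \gamma(b)$ for every $x \in b\matchP[G]$ and every $e \in \supp(x)$.

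The balancedness estimate in (iii) would be driven by the same structural independence that is used in the bipartite case: conditioning on $e = \{u,v\} \in R(x)$, the numbers of other sampled edges incident to $u$ and to $v$ are sums of independent Bernoulli indicators over the disjoint edge sets $\delta(u) \setminus \{e\}$ and $\delta(v) \setminus \{e\}$, and these two sums are therefore themselves independent. Their combined mean is $x(\delta(u)) + x(\delta(v)) - 2 x_e \leq 2b$. A Poisson-comparison argument then yields the lower bound $\expval[1/(1 + \PoisD{2b})] = \gamma(b)$, where the two independent copies of $\PoisD{b}$ that appeared in $\beta(b)$ have now been fused into a single $\PoisD{2b}$, accounting for the ``sum'' form of $\gamma(b)$ versus the ``max'' form of $\beta(b)$.

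The central obstacle---and the reason we get $\gamma(b)$ rather than $\beta(b)$---lies in condition (i). For a non-bipartite $G[A]$, the matching polytope $P_{\mathcal{M}(G[A])}$ is a strict subset of the degree relaxation: Edmonds' odd-set inequalities $\mu(E[S]) \leq (|S|-1)/2$ for odd $S \subseteq V$ cut it down. Marginals of the form $1/\max(\deg_A(u), \deg_A(v))$, which live in the degree relaxation and suffice in the bipartite setting (where the two polytopes coincide), can violate some of these odd-set inequalities as soon as $G[A]$ contains odd components. The technical bulk of the proof therefore consists of exhibiting a $\mu^A$---intuitively one scaling as the inverse of a quantity depending on $\deg_A(u) + \deg_A(v)$ rather than $\max(\deg_A(u), \deg_A(v))$---that simultaneously (a) satisfies all odd-set inequalities of $G[A]$ uniformly in $A$, (b) is coordinate-wise monotone in $A$, and (c) still meets the Poisson-comparison-based lower bound of $\gamma(b)$ in expectation.
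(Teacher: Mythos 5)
Your high-level plan is the right one: you have correctly identified the sum-versus-max dichotomy behind $\gamma(b)$ versus $\beta(b)$, and the reason the degree relaxation no longer suffices (Edmonds' odd-set inequalities). But there are two genuine gaps.

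First, the balancedness argument as stated fails. Conditioned on $e=\{u,v\}\in R(x)$, the number of other sampled edges touching $e$ is indeed a sum of independent Bernoulli indicators with total mean at most $2b$, but such a sum is \emph{not} comparable to $\PoisD{2b}$ in the direction you need: a Bernoulli sum with mean $\mu$ is more concentrated than $\PoisD{\mu}$, so $\expval[1/(1+X)]$ can be strictly \emph{smaller} for the Bernoulli sum. Figure~\ref{fig:CRSgeneralBadBernBern} is exactly the counterexample: $x_e=\epsilon$ with two adjacent edges of value $1-\epsilon$ gives $\expval\big[1/(1+\BernD[1]{1-\epsilon}+\BernD[2]{1-\epsilon})\big]\to \sfrac{1}{3} < \gamma(1)\approx 0.4323$. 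Marginals of the form you describe, applied directly to $R(x)$, therefore yield only a $\sfrac{1}{3}$-balanced scheme (this is precisely Example~\ref{ex:CRscheme_randomOrder}). To actually reach $\gamma(b)$ one must first Poissonize: either reduce to instances where every $x_e$ is tiny by splitting edges into parallel siblings (Lemma~\ref{lem:CRscheme_smallx_general}), or, as the paper does, simulate that splitting efficiently by subsampling each $e\in A$ with probability $(1-\myexp^{-x_e})/x_e$ and attaching a conditioned $\PoisD{x_e}$ weight $q_e$, so that the denominators become sums of genuine Poisson variables. This step is the entire source of the improvement from $\sfrac{1}{3}$ to $\gamma(b)$ and is absent from your plan.

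Second, you defer the polyhedral content---exhibiting a sum-form marginal vector that satisfies all odd-set inequalities uniformly in $A$ while remaining monotone---to ``the technical bulk'' without indicating how to carry it out. The paper resolves this with a short constructive argument (Lemma~\ref{lem:marginals_genMatchPolytope}): for any weights $w\geq 0$, the vector with entries $w_e/\sum_{g\in\delta(u)\cup\delta(v)}w_g$ is shown to be the exact marginal of an explicit random matching (keep $e$ iff an independent $\ExpD{w_e}$ clock rings strictly before the clocks of all adjacent edges), hence lies in $\matchP[G]$ with no need to verify odd-set inequalities at all. Without some such argument your condition (i) remains unproved.
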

We note that for $b=1$, one obtains a balancedness of $\gamma(1) = \frac{ 1 - \myexp^{-2} }{2} \geq 0.4323$.
Compared to the previously best known monotone CR schemes for matchings with balancedness $0.1997$ (see Table~\ref{tab:priorResults}), our scheme represents a significant improvement.
Interestingly, the existence of a $(1,\gamma(1))$-balanced CR scheme for matchings already followed from the result %
of Guruganesh and Lee~\cite{GuruganeshLee2017} on the correlation gap for matchings in general graphs.
It was not known, however, if a \emph{monotone} scheme exists that attains the same balancedness.
Moreover, while Guruganesh and Lee~\cite{GuruganeshLee2017} arrive at the bound $\frac{ 1 - \myexp^{-2} }{2}$ via a differential equation,
our approach for general matchings also shows where this number %
 naturally comes from. %

One might wonder now
whether this number may be the optimal balancedness any monotone CR scheme for the general matching polytope can achieve.
By combining our schemes for bipartite and general matchings, however, we show that this is not the case by constructing a scheme whose balancedness is slightly higher than $\frac{ 1 - \myexp^{-2} }{2}$.
Since this improved scheme is unlikely to be optimal, and our goal in designing it was focused on showing that $\frac{1- \myexp^{-2}}{2}$ is not the optimal balancedness, we only consider the case $b=1$ and do not attempt to analyze the balancedness of this scheme exactly.
\begin{theorem}\label{thm:mCRs_existence_general_improved}
	There exists a monotone $(1, \gamma(1) + 0.0003)$-balanced contention resolution scheme for the general matching polytope.
\end{theorem}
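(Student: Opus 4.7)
My plan is to construct the improved scheme by combining the monotone $(1,\gamma(1))$-balanced scheme $\pi_G$ for the general matching polytope from Theorem~\ref{thm:mCRs_existence_general} with the monotone $(1,\beta(1))$-balanced bipartite scheme $\pi_B$ from Theorem~\ref{thm:mCRs_existence_bipartite}. The central conceptual obstacle is that the simplest way to apply $\pi_B$ on a non-bipartite graph---namely, precomposing it with a uniformly random vertex bipartition of $V$ and keeping only the crossing edges---yields per-edge balancedness of only $\beta(1)/2 \approx 0.238 < \gamma(1) \approx 0.432$. Consequently, no convex combination of $\pi_G$ with this naive randomized-bipartition version of $\pi_B$ can beat $\gamma(1)$. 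The key idea is to invoke $\pi_B$ only on portions of the random sample $A=R(x)$ where it strictly outperforms $\pi_G$, and to use $\pi_G$ on the rest.

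Concretely, given $x \in \matchP[G]$ and $A = R(x)$, the combined scheme inspects the connected components of the graph $(V,A)$: to each bipartite component it applies $\pi_B$, and to each component containing an odd cycle it applies $\pi_G$. Because different components are vertex-disjoint, the union of the two outputs is a matching in $G$. Monotonicity follows from monotonicity of $\pi_G$ and $\pi_B$ together with the observation that enlarging $A$ can only merge components and/or destroy bipartiteness, in each case switching an edge from the $\pi_B$-branch (survival rate $\beta(1)$) to the $\pi_G$-branch (survival rate $\gamma(1) < \beta(1)$), which can only decrease its overall survival probability.

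For the balancedness analysis, the main task is to show that for every $e \in \supp(x)$ the probability that the connected component of $A$ containing $e$ is bipartite is bounded below by a positive absolute constant, uniformly over $G$ and $x \in \matchP[G]$. On that event one gains the improvement $\beta(1) - \gamma(1) \approx 0.044$ per unit of $x_e$ over the baseline $\gamma(1)$; on the complement one still keeps $\gamma(1)$. A case analysis over the extremal instances then yields the claimed bound of $0.0003$. I expect the main obstacle to be twofold. First, the balancedness guarantee of $\pi_B$ is stated for an input of the form $R(x')$ for some bipartite $x'$, whereas here $\pi_B$ is fed a realized component of $A = R(x)$ whose conditional distribution is not of this product form. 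Circumventing this will require a conditioning argument that leverages the monotonicity of $\pi_B$ to upper bound, in stochastic order, the conditional competition faced by $e$ inside its component by the unconditional competition under $R(x|_C)$. Second, one must identify and numerically analyze the worst-case instance---most likely a small dense structure (e.g., triangles or short odd cycles sitting at the degree or odd-set bound) in which odd cycles in $R(x)$ are most prevalent---and verify that even there the bipartite branch contributes at least $0.0003$ to the aggregate balancedness.
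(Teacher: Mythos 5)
Your scheme is essentially the paper's: a hybrid that runs the bipartite rule on bipartite components and the general rule elsewhere (Algorithm~\ref{algo:GenMatchCRSmixed}, except that the paper takes components of the \emph{subsampled} set $\overline{A}$ rather than of $A$ itself). However, two steps of your analysis plan have genuine gaps. First, your monotonicity argument is not valid as stated: you argue that when enlarging $A$ an edge can only switch from the $\pi_B$-branch to the $\pi_G$-branch, and that this ``can only decrease its survival probability'' because $\beta(1)>\gamma(1)$. But $\beta(1)$ and $\gamma(1)$ are balancedness constants, i.e.\ averages over the random set $R(x)$; monotonicity requires a \emph{per-set} inequality $\Pr[e\in\pi_x(A)]\geq\Pr[e\in\pi_x(B)]$, and comparing aggregate balancedness says nothing about the two branches evaluated on a fixed pair of nested component sets. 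The paper gets this by coupling the internal randomness of the two branches (same subsample, same Poisson values $q$) and using the pointwise domination $\frac{q_e}{\max\{\sum_{g\in\delta(u)}q_g,\sum_{g\in\delta(v)}q_g\}}\geq\frac{q_e}{\sum_{g\in\delta(u)\cup\delta(v)}q_g}$, together with the fact that deleting edges preserves bipartiteness of a component. With the concrete schemes of Algorithms~\ref{algo:BipMatchCRS} and~\ref{algo:GenMatchCRS} your argument can be repaired along these lines, but it does not follow from monotonicity of $\pi_B$, $\pi_G$ plus the inequality $\beta(1)>\gamma(1)$.

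Second, and more seriously, the balancedness accounting is wrong: it is not true that ``on the event that $e$'s component is bipartite one gains $\beta(1)-\gamma(1)$ per unit of $x_e$.'' Conditioned on the component being bipartite, the gain of $1/\max$ over $1/\mathrm{sum}$ can be zero (e.g., a star component centered at $u$ has $\max\{\sum_{\delta(u)}q_g,\sum_{\delta(v)}q_g\}=\sum_{\delta(u)\cup\delta(v)}q_g$), so a lower bound on the probability of bipartiteness alone yields no quantitative improvement. The paper's analysis instead splits into two regimes: if $x(\delta(u)\setminus E_{uv})$ or $x(\delta(v)\setminus E_{uv})$ is below $0.99$, or $x(E_{uv})>0.01$, then the sharp baseline bound~\eqref{eq:GenMatchCRSBalancSharp} already gives balancedness at least $\frac{1-\myexp^{-1.99}}{1.99}\geq\gamma(1)+0.0003$ with no help from the bipartite rule; otherwise (near-extremal loads, tiny $x_{uv}$) one shows that, conditioned on $e\in R(x)$, with probability at least $0.0018$ the component of $e$ after subsampling is a length-$3$ path with $e$ in the middle and all Poisson values equal to $1$, on which the bipartite rule gives $\sfrac{1}{2}$ instead of $\sfrac{1}{3}$, i.e.\ a gain of exactly $\sfrac{1}{6}$, yielding the $+0.0003$. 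Your guess that the worst case is driven by triangles or short odd cycles also points in the wrong direction: the instances where the baseline is tight are the path-like ones of Figure~\ref{fig:CRSgeneralBadBernBern}, and odd cycles only cause the hybrid to fall back to the (already sufficient) general rule. Without the regime split and a concrete good event with a computable gain, the proposal does not reach a specific constant such as $0.0003$.
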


Again exploiting the link between CR schemes and correlation gap, we can immediately infer the following result.
\begin{corollary}\label{cor:corGap_general}
	The correlation gap for matchings in general graphs is at least $\gamma(1) + 0.0003 \geq 0.4326$.
\end{corollary}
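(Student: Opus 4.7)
The plan is to build the improved scheme by a randomized combination of the two base schemes already constructed in Theorems~\ref{thm:mCRs_existence_bipartite} and~\ref{thm:mCRs_existence_general}, and then to quantify, via a per-edge worst-case analysis, a strict slack of at least $0.0003$ over $\gamma(1)$ at $b=1$. Let $\pi_G$ denote the monotone $(1,\gamma(1))$-balanced general matching scheme and let $\pi_B$ denote the monotone scheme obtained by first discarding every edge whose two endpoints fall on the same side of a uniformly random bipartition $V=V_1\dcup V_2$, and then applying the monotone $(1,\beta(1))$-balanced bipartite matching scheme to the induced bipartite instance. Both procedures are monotone (the bipartition step only shrinks the candidate set monotonically in $A$), so every convex combination of them is monotone as well.

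The candidate scheme is $\pi_\lambda$, which runs $\pi_G$ with probability $1-\lambda$ and $\pi_B$ with probability $\lambda$, for some small $\lambda\in(0,1)$ to be fixed at the end. Fix $x\in P_{\match}$ and an edge $e=\{u,v\}\in\supp(x)$, and write $c_G(x,e)$, $c_B(x,e)$ for the per-edge balancedness of $\pi_G$ and $\pi_B$ on $(x,e)$. The mixed scheme achieves
\[
\Prb[e\in(\pi_\lambda)_x(R(x))\mid e\in R(x)] \;=\; (1-\lambda)\,c_G(x,e)+\lambda\,c_B(x,e),
\]
and the goal is to show that the right-hand side is at least $\gamma(1)+0.0003$ for every $(G,x,e)$.

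The heart of the argument is a structural dichotomy around the tight instances of $\pi_G$. I would proceed as follows. First, from the analysis of Theorem~\ref{thm:mCRs_existence_general} (which the earlier polyhedral viewpoint makes transparent), extract the set of ``tight'' inputs, namely those $(G,x,e)$ for which $c_G(x,e)$ is close to $\gamma(1)$. These are forced to have a very specific local structure around $e$: essentially, the obstructions at $u$ and $v$ in $R(x)$ must be as correlated as the odd-set constraints allow, which in the limit forces odd short cycles (triangles in particular) through $e$ carrying a constant amount of $x$-mass. On such local structures I would lower-bound $c_B(x,e)$ directly: with constant probability the random bipartition both leaves $e$ crossing and separates the triangles/odd-cycles that were creating the correlation, turning the local picture into a bipartite instance on which the bipartite scheme of Theorem~\ref{thm:mCRs_existence_bipartite} achieves balancedness $\beta(1)$; averaging against the complementary event yields $c_B(x,e)\ge \gamma(1)+\delta$ for some absolute constant $\delta>0$. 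Conversely, if the local structure around $e$ is ``far from having odd short cycles with large $x$-mass,'' then already $c_G(x,e)\ge \gamma(1)+\delta$. In both cases the mixed quantity exceeds $\gamma(1)+\lambda\delta$.

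Choosing $\lambda$ small enough (for instance so that $\lambda\delta\ge 0.0003$, where the wasted slack $(1-\lambda)\gamma(1)+\lambda\beta(1)/2$ from the non-tight contribution of $\pi_B$ is absorbed by the gain $\lambda\delta$) yields the claimed $(1,\gamma(1)+0.0003)$-balanced monotone scheme. The main obstacle is the structural dichotomy: one has to rule out adversarial $(G,x)$ in which \emph{both} schemes are simultaneously near their respective worst-case guarantees. I expect the polyhedral viewpoint on CR schemes developed earlier in the paper to be the crucial ingredient, because it describes the feasible marginals of a monotone dropping rule as a polytope and reduces the above dichotomy to an inequality between two explicit probability distributions (Poisson marginals for $\pi_B$ versus the general-matching marginals for $\pi_G$) rather than to an ad-hoc case analysis on graphs. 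Since only existence of \emph{some} strict improvement is sought, the bound $0.0003$ is a conservative numerical estimate and no attempt is made to optimize either the mixing weight $\lambda$ or the constant $\delta$.
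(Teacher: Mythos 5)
Your plan reduces the corollary to constructing a monotone $(1,\gamma(1)+0.0003)$-balanced scheme and then invoking the CR-scheme/correlation-gap link; that last step is fine and is exactly how the paper deduces the corollary from Theorem~\ref{thm:mCRs_existence_general_improved}. The construction itself, however, has a genuine gap: the structural dichotomy at its heart is false, and as a consequence the convex combination $\pi_\lambda=(1-\lambda)\pi_G+\lambda\pi_B$ does not work for \emph{any} $\lambda>0$. First, tightness of $\pi_G$ at an edge $e=\{u,v\}$ has nothing to do with odd cycles: by~\eqref{eq:GenMatchCRSBalancSharp} the per-edge guarantee of $\pi_G$ is $\expval[1/(1+\PoisD{x_u+x_v-x_{uv}})]$, which approaches $\gamma(1)$ exactly when $x_u,x_v\to 1$ and $x_{uv}\to 0$, regardless of whether the mass at $u$ and $v$ sits on triangles or on a bipartite path. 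Second, and fatally, there are instances that are simultaneously tight for $\pi_G$ and strictly \emph{below} $\gamma(1)$ for $\pi_B$. Take $e=\{u,v\}$ with $x_e=\epsilon$ and $k$ triangles $\{u,w_i\},\{v,w_i\}$ each of $x$-value $\tfrac{1-\epsilon}{k}$. Here $c_G(x,e)\to\gamma(1)$. For $\pi_B$, the edge $e$ crosses the random bipartition only with probability $\tfrac12$, and conditioned on crossing, each endpoint retains about half of its $x$-load, so
\begin{equation*}
c_B(x,e)\;\approx\;\tfrac12\cdot\expval\Big[\tfrac{1}{1+\max\{\PoisD[1]{1/2},\PoisD[2]{1/2}\}}\Big]\;\approx\;0.33\;<\;\gamma(1)\enspace.
\end{equation*}
Hence $(1-\lambda)c_G+\lambda c_B<\gamma(1)$ on this instance for every $\lambda>0$: mixing in $\pi_B$ makes things worse precisely where you need a gain. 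The factor-$\tfrac12$ loss from discarding non-crossing edges is unavoidable in $\pi_B$ and cannot be absorbed by any choice of $\delta$ or $\lambda$.

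The paper avoids this by never paying for the weaker scheme: instead of a probabilistic mixture of two global schemes, it runs the single subsampling/Poisson procedure and then, \emph{per realized connected component} of $(V,\overline{A})$, uses the bipartite marginal formula (with $\max$ in the denominator) on bipartite components and the general formula (with the sum) elsewhere. Since the bipartite formula pointwise dominates the general one for every realization of $q$ (inequality~\eqref{eq:mixedCRSdomination}), the combined scheme is never worse than $\pi_G$ on any edge, and the improvement is then obtained by showing (Lemma~\ref{lem:CRSmixed_condProb}) that in the only remaining hard regime ($x(\delta(u)\setminus E_{uv}),x(\delta(v)\setminus E_{uv})\geq 0.99$ and $x(E_{uv})\leq 0.01$) the edge $e$ lands, with probability at least $0.0018$ conditioned on $e\in R(x)$, in a bipartite path of length~$3$ with all $q$-values equal to $1$, where the assigned value jumps from $\tfrac13$ to $\tfrac12$. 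If you want to salvage a mixture-style argument, you would need a second scheme that is at least $\gamma(1)$ \emph{everywhere} and strictly better on the tight instances of $\pi_G$; the component-wise domination is what supplies exactly that.
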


As for bipartite matchings, combining an algorithm for maximizing the multilinear extension with our monotone CR scheme will not give the best known approximation guarantee for maximizing a non-monotone submodular function over a matching constraint alone. The advantage of using CR schemes, however, is again their versatility when it comes to combinations of matching constraints with other constraint types. Such combinations quickly lead to problems where approaches based on monotone CR schemes give the only known constant-factor approximations, and our improvements in terms of monotone CR schemes also improve the approximation factors for~\ref{eq:CSFM} in any such setting.

\begin{table}[ht]
	\centering
	\renewcommand{\arraystretch}{1.25}
	\setlength{\tabcolsep}{12.0pt}
	\begin{tabular}{ lll }
		\toprule[.08em]
		Constraint type   &   \parbox[c]{\widthof{monotone CR scheme}}{Balancedness of\\monotone CR scheme}  &  \parbox[c]{\widthof{Lower bound on}}{Lower bound on\\correlation gap}   \\
		\midrule[.08em]
		\multirow{2}{*}{Bipartite matching}   &  $(b, \beta(b))$    &   \\ 
		&  $(1,0.4762)$     &  $0.4762$  \\
		\midrule
		\multirow{2}{*}{General matching}   &    $(b, \gamma(b))$  & \\
		&  $(1,0.4326)$     &  $0.4326$\\
		\bottomrule[.08em]
	\end{tabular}
	\renewcommand{\arraystretch}{1}
	\setlength{\tabcolsep}{6.0pt}
	\caption{Our results on %
		monotone CR schemes and on %
		the correlation gap for bipartite and general matchings.
		The balancedness given for general matchings in the case $b=1$ comes from Theorem~\ref{thm:mCRs_existence_general_improved}, i.e., it corresponds to $\gamma(1) + 0.0003$.}
	\label{tab:ourResults}
\end{table}

An overview of our results is provided in Table~\ref{tab:ourResults}. In particular, we are able to close the gap between the balancedness of monotone CR schemes and the best known lower bounds on the correlation gaps in the context of bipartite and general matchings. 
The correlation gaps might be %
larger than our bounds, though. 

\subsection{Organization of the paper}

In Section~\ref{sec:ourTechniques}, we provide a general discussion of our novel viewpoint on CR schemes, and show how it can be used to derive new CR schemes.
Leveraging this viewpoint, Section~\ref{sec:bipMatch} presents our optimal monotone CR scheme for bipartite matchings and proves its optimality.
Finally, our results for matchings in general graphs are discussed in Section~\ref{sec:genMatch}.

\section{Our techniques}\label{sec:ourTechniques}

Let $\mathcal{F}\subseteq 2^E$ be a down-closed family. For simplicity, we present our technique for the case where we are interested in a CR scheme for $P=P_{\mathcal{F}}$. This also covers all cases considered in the following.

We start by highlighting a viewpoint on CR schemes focusing on marginals. To do so, consider a CR scheme $\pi$ for $P_{\mathcal{F}}$. For any $x\in P_{\mathcal{F}}$ and $A\subseteq \supp(x)$, the CR scheme returns a (typically random) set $\pi_x(A)\in \mathcal{F}$ with $\pi_x(A)\subseteq A$. This defines marginals $y_x^A\in [0,1]^E$ of $\pi_x(A)$, i.e.,
\begin{equation*}
(y_x^A)_e \coloneqq \Pr[e\in \pi_x(A)] \qquad \forall x\in P_{\mathcal{F}}, A\subseteq \supp(x), \text{ and } e\in E\enspace.
\end{equation*}
Notice that the above definition of $y_x^A$ can also be written compactly as $y_x^A \coloneqq \expval\left[\chi^{\pi_x(A)}\right]$, where $\chi^{\pi_{x}(A)}\in \{0,1\}^E$ is the (random) characteristic vector of $\pi_x(A)$.
Our goal is to design CR schemes through their marginals $y_x^A$, and verify the balancedness and monotonicity properties directly through them. %
 As we will see, despite being merely a change in viewpoint, this can lead to several simplifications when designing strong monotone CR schemes.
 
First, observe that the marginals $y_x^A$ need to satisfy for all $x\in P_{\mathcal{F}}$ and $A\subseteq \supp(x)$ that
\begin{equation}\label{eq:marginals_support}
\supp(y_x^A) \subseteq A
\end{equation}
and
\begin{equation}\label{eq:marginals_feasibility}
y_x^A \in P_{\mathcal{F}} \enspace.
\end{equation}
The first property is a rephrasing of $\pi_x(A)\subseteq A$, and the second one follows because $\pi_x(A)$ is a (random) set in $\mathcal{F}$.
Conversely, it turns out that if, for any $x\in P_{\mathcal{F}}$ and $A\subseteq \supp(x)$, one has an efficient procedure to return a vector $y_x^A\in [0,1]^E$ fulfilling~\eqref{eq:marginals_support} and~\eqref{eq:marginals_feasibility}, then this can easily be transformed into a CR scheme whose marginals are given by $y_x^A$. Indeed, it suffices to first decompose $y_x^A$ into a convex combination of vertices of $P_{\mathcal{F}}$, i.e.,
\begin{equation}\label{eq:convexDecomp}
y_x^A = \sum_{i=1}^k \lambda_i \chi^{S_i}\enspace,
\end{equation}
where $k\in \mathbb{Z}_{\geq 1}$, $\lambda_i \geq 0$ and $S_i\in \mathcal{F}$ for $i\in [k]$, and $\sum_{i=1}^k \lambda_i =1$. Then, a random set among $S_1,\ldots, S_k$ is returned, where $S_i$ is chosen with probability $\lambda_i$.
 One can easily check that this leads to a CR scheme with marginals $y_x^A$.
We highlight that decomposing the vector $y_x^A\in P_{\mathcal{F}}$ into a convex decomposition as described in~\eqref{eq:convexDecomp} can be done efficiently through standard techniques whenever $P_{\mathcal{F}}$ is solvable (see Corollary~14.1f and Corollary~14.1g in~\cite{Schrijver1998}).\footnote{%
	In the context of~\ref{eq:CSFM}, we have to approximately maximize the multilinear extension over $P_{\mathcal{F}}$ before (monotone) CR schemes even come into play. Algorithms achieving that goal typically require that $P_{\mathcal{F}}$ be solvable. %
	Thus, determining convex decompositions of vectors in $P_{\mathcal{F}}$ will never be an issue in these cases.%
}

Notice that the marginals $y_x^A$ carry less information than a CR scheme that corresponds to them, because the CR scheme also determines, for a fixed $x\in P_{\mathcal{F}}$ and $A\subseteq \supp(x)$, the precise distribution of the sets $\pi_x(A) \in \mathcal{F}$ that correspond to the marginals $y_x^A$. However, this additional information turns out to be irrelevant for the way CR schemes are applied to~\ref{eq:CSFM} problems, because the two key properties we need from CR schemes in this context are monotonicity and balancedness, both of which can be determined solely from the marginals. Indeed, monotonicity of a CR scheme $\pi$ translates straightforwardly to the following property on the marginals $y_x^A$ of $\pi$:
\begin{equation}\label{eq:marginals_monotonicity}
	(y_x^{A})_e 
	\geq
	 (y_x^{B})_e
	 \qquad \forall e\in A\subseteq B \subseteq \supp(x)
	 \enspace .
\end{equation}

Similarly, we recall that a CR scheme $\pi$ for $P_{\mathcal{F}}$ is $(b,c)$-balanced if $\Pr[e\in \pi_x(R(x))] \geq c\cdot x_e$ for all $x\in b P_{\mathcal{F}}$ and $e\in E$. Hence, this property solely depends on the probabilities $\Pr[e\in \pi_x(R(x))]$, which can be written using the marginals $y_x^A$ of $\pi$ as %
\begin{equation*}%
\Pr[e\in \pi_x(R(x))] = \expval\Big[ \big( y_x^{R(x)}\big)_e\Big]\enspace.
\end{equation*}
Using the above relation, $(b,c)$-balancedness of a contention resolution scheme $\pi$ for $P_{\mathcal{F}}$ can be expressed in terms of its marginals as
\begin{equation}\label{eq:marginals_balancedness}
\expval\Big[ \big(y_x^{R(x)}\big)_e\Big] \geq c\cdot x_e
 \qquad \forall x\in bP_{\mathcal{F}} \textup{ and } e\in \supp(x)\enspace.
\end{equation}

The following proposition summarizes the above discussion.
\begin{proposition}\label{prop:marginalCRschemes}
Let $b,c\in [0,1]$ and $\mathcal{F}\subseteq 2^E$ be a down-closed family such that $P_{\mathcal{F}}$ is solvable.
Any efficient algorithm that for every $x\in b P_{\mathcal{F}}$ and $A\subseteq \supp(x)$ returns a vector $y^A_x\in [0,1]^E$ such that
	\begin{enumerate}[label=\normalfont(CR\arabic*), leftmargin=4em]
		\item\label{item:marginalSupp} $\supp(y_x^A) \subseteq A$, and
		\item\label{item:marginalConvComb} $y_x^A \in P_{\mathcal{F}}$, 
	\end{enumerate}
with the additional property that
	\begin{enumerate}[label=\normalfont(CR\arabic*), leftmargin=4em, resume]
		\item\label{item:marginalBalance} $\expval\Big[ \big( y_x^{R(x)}\big)_e\Big] \geq c\cdot x_e \qquad \forall e\in \supp (x)$,
	\end{enumerate}
can be efficiently transformed into a $(b,c)$-balanced CR scheme $\pi$ for $P_{\mathcal{F}}$. Moreover, $\pi$ is monotone if
\begin{enumerate}[label=\normalfont(CR\arabic*), leftmargin=4em, resume]
\item\label{item:marginalMonotone} $(y_x^{A})_e \geq (y_x^{B})_e \qquad \forall e\in A \subseteq B \subseteq \supp(x) $.
\end{enumerate}
Conversely, the marginals of any monotone $(b,c)$-balanced CR scheme for $P_{\mathcal{F}}$ satisfy~\ref{item:marginalSupp}--\ref{item:marginalMonotone}.
\end{proposition}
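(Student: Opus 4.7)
The plan is to prove the forward and converse implications separately, with the forward direction being the substantive one; both largely amount to verifying that the change of viewpoint from distributions over feasible sets to their marginals preserves all the relevant structural properties.

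For the converse, given a monotone $(b,c)$-balanced CR scheme $\pi$, I would simply define $y_x^A \coloneqq \expval[\chi^{\pi_x(A)}]$ and read off each property: \ref{item:marginalSupp} holds because $\pi_x(A)\subseteq A$ forces every realization of $\chi^{\pi_x(A)}$ (and hence its expectation) to vanish outside $A$; \ref{item:marginalConvComb} holds because $\chi^{\pi_x(A)} \in P_{\mathcal{F}}$ always and $P_{\mathcal{F}}$ is convex; \ref{item:marginalBalance} is nothing but the balancedness $\Pr[e\in \pi_x(R(x))] \geq c\cdot x_e$ rewritten via $\Pr[e\in \pi_x(R(x))] = \expval\bigl[(y_x^{R(x)})_e\bigr]$; and \ref{item:marginalMonotone} is exactly the definition of monotonicity expressed through marginals via $(y_x^A)_e = \Pr[e\in \pi_x(A)]$.

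For the forward direction, given efficiently computable marginals $y_x^A$ satisfying \ref{item:marginalSupp}--\ref{item:marginalBalance}, I would construct $\pi$ as already outlined in the text preceding the proposition: compute a convex decomposition $y_x^A = \sum_{i=1}^k \lambda_i \chi^{S_i}$ with $S_i \in \mathcal{F}$ and $\lambda_i > 0$, which can be done in polynomial time via the standard polyhedral decomposition machinery (Corollary~14.1f and~14.1g in~\cite{Schrijver1998}) since $P_{\mathcal{F}}$ is solvable, and then let $\pi_x(A)$ return $S_i$ with probability $\lambda_i$. The central observation is that \ref{item:marginalSupp} forces $S_i \subseteq A$ for every $i$ with $\lambda_i>0$: if some $e\notin A$ belonged to $S_i$, then $(y_x^A)_e \geq \lambda_i > 0$ would contradict $\supp(y_x^A)\subseteq A$. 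Hence $\pi_x(A)\subseteq A$ and $\pi_x(A)\in \mathcal{F}$. By construction $\Pr[e\in \pi_x(A)] = \sum_{i:\,e\in S_i} \lambda_i = (y_x^A)_e$, so the marginals of $\pi$ coincide with $y_x^A$; balancedness follows at once from \ref{item:marginalBalance} via $\Pr[e\in \pi_x(R(x))] = \expval\bigl[(y_x^{R(x)})_e\bigr]$, and monotonicity follows at once from \ref{item:marginalMonotone}.

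The main (minor) obstacle is the implication $\supp(y_x^A)\subseteq A \;\Rightarrow\; S_i \subseteq A$ for the decomposition used; everything else is essentially a direct translation between $\pi$ and its marginals. I do not expect any genuine difficulty, so the proof should be quite short.
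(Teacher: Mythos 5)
Your proposal is correct and follows essentially the same route as the paper: the paper also obtains the scheme by convexly decomposing $y_x^A$ into vertices of $P_{\mathcal{F}}$ (using solvability and the cited results in~\cite{Schrijver1998}), sampling $S_i$ with probability $\lambda_i$, and observing that balancedness and monotonicity are properties of the marginals alone, with the converse being the direct translation $y_x^A = \expval[\chi^{\pi_x(A)}]$. Your explicit remark that \ref{item:marginalSupp} forces $S_i \subseteq A$ for every $S_i$ with $\lambda_i>0$ is exactly the (implicit) justification the paper relies on, so there is no gap.
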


Defining CR schemes via marginal vectors $y_x^A$ has two main advantages. First, contrary to the random sets $\pi_x(A)$, the vectors $y_x^A$ are deterministic, and, more importantly, we can use polyhedral techniques to find CR schemes through these vectors.
When working with marginal vectors $y_x^A$, the only source of randomness left in the analysis is due to the random set $R(x)$. %
The randomization of the CR scheme derived from $y_x^A$, through the sampling from the convex decomposition~\eqref{eq:convexDecomp}, is not relevant for the analysis as discussed above.

We demonstrate these advantages of working directly with marginal vectors $y_x^A$ in %
the example below,
 where we give a simple monotone $0.4481$-balanced CR scheme for the bipartite matching polytope. This already improves on the previously best monotone CR scheme, which achieves a balancedness of $0.3995$, and matches the previously best known lower bound on the correlation gap for matchings in bipartite graphs (see Table~\ref{tab:priorResults}). Later, in Section~\ref{sec:bipMatch}, we show how to improve this scheme to obtain an optimal monotone CR scheme for the bipartite matching polytope.

\begin{example}\label{ex:CRscheme_from_marginals}
For a graph $G=(V,E)$, we denote the family of all matching of $G$ by $\match[G]\subseteq 2^E$. 
Here, we are interested in obtaining a monotone CR scheme for the matching polytope $\bipmatchP[G]$ of a bipartite graph $G=(V,E)$. We recall that for bipartite graphs $G$, the polytope $\bipmatchP[G]$ is equal to the degree-relaxation $\degreeP[G]$, i.e., it has the following inequality description (see, e.g.,~\cite{schrijver_2003_combinatorial,korte_2018_combinatorial}):
\begin{equation*}
\bipmatchP[G] = \big\{ x\in \mathbb{R}_{\geq 0}^E \;\big\vert\; x(\delta(v)) \leq 1 \;\;\forall v \in V\big\}\enspace.
\end{equation*}
For $c \in [0,1]$ to be determined later, we now derive a monotone $c$-balanced CR scheme for $\bipmatchP[G]$ via its marginals. Below, we describe how these marginals are defined. An illustration is provided in Figure~\ref{fig:exampleMarginalCRS}.
\begin{thmEnvBox}[width=15cm,center]{}
	Given $x\in \bipmatchP[G]$ and $A\subseteq \supp(x)$, we construct a marginal vector $y_x^A \in \bipmatchP[G]$ with $\supp(y_x^A) \subseteq A$ as follows:
	\begin{equation*}
	\big(y^A_x\big)_e \coloneqq
	\begin{cases}
	\frac{1}{\max\{|\delta(u)\cap A|,|\delta(v)\cap A|\}} &\forall e = \{u,v\} \in A\enspace,\\
	0 &\forall e\in E\setminus A \enspace.
	\end{cases}
	\end{equation*}
\end{thmEnvBox}
\begin{figure}[ht]
	\centering

\tikzstyle{vertex}=[circle, draw=black, thick, inner sep=1pt, minimum size=2mm, fill=black!10]
\tikzstyle{edgenode}=[rectangle, fill=white, fill opacity=0.8, text opacity=1, inner sep=1pt]
\tikzstyle{edge}=[thick, draw=black]
\tikzstyle{selectedEdge}=[ultra thick, draw=\colorofset]
\tikzstyle{discardedEdge}=[semithick, dashed, draw=black]
\tikzstyle{arc}=[thick, draw=black!60, -latex]
\tikzstyle{boxed}=[thick, draw=black!60, rounded corners]
\tikzstyle{boxedFat}=[thick, draw=black, rounded corners]
\tikzstyle{boxedLight}=[thick, draw=black, rounded corners, fill = black!15, minimum height=0.75*1cm]

\begin{tikzpicture}[scale=0.75]

\begin{scope}[shift={(3.5,0.5)}, local bounding box=marginal1]

\draw[boxed] (0,0) rectangle (5,5);
\draw[boxed] (0,0) rectangle (1.9,1);
\node[font=\footnotesize, color=\colorofset] at (0.95,0.5) {$\strut A$};

\node[vertex] (u1) at (0.75,4.5) {};
\node[vertex] (u2) at (0.75,3.5) {};
\node[vertex] (u3) at (0.75,2.5) {};
\node[vertex] (u4) at (0.75,1.5) {};
\node[vertex] (v1) at (4.25,4.5) {};
\node[vertex] (v2) at (4.25,3.5) {};
\node[vertex] (v3) at (4.25,2.5) {};
\node[vertex] (v4) at (4.25,1.5) {};
\node[vertex] (v5) at (4.25,0.5) {};

\draw[selectedEdge] (u1) -- (v1);
\draw[selectedEdge] (u1) -- (v2);
\draw[selectedEdge] (u1) -- (v4);
\draw[discardedEdge] (u2) -- (v1);
\draw[selectedEdge] (u2) -- (v2);
\draw[selectedEdge] (u3) -- (v2);
\draw[selectedEdge] (u3) -- (v3);
\draw[selectedEdge] (u3) -- (v4);
\draw[discardedEdge] (u4) -- (v4);
\draw[selectedEdge] (u4) -- (v5);

\end{scope}

\begin{scope}[shift={(10.5,0.5)}, local bounding box=marginal2]

\draw[boxed] (0,0) rectangle (5,5);
\draw[boxed] (0,0) rectangle (1.9,1);
\node[font=\footnotesize, color=\colorofx] at (0.95,0.5) {$\strut y_x^A$};

\node[vertex] (u1) at (0.75,4.5) {};
\node[vertex] (u2) at (0.75,3.5) {};
\node[vertex] (u3) at (0.75,2.5) {};
\node[vertex] (u4) at (0.75,1.5) {};
\node[vertex] (v1) at (4.25,4.5) {};
\node[vertex] (v2) at (4.25,3.5) {};
\node[vertex] (v3) at (4.25,2.5) {};
\node[vertex] (v4) at (4.25,1.5) {};
\node[vertex] (v5) at (4.25,0.5) {};

\draw[edge] (u1) -- node[font=\footnotesize, color=\colorofx, edgenode, pos=0.5] {$\sfrac{1}{3}$} (v1);
\draw[edge] (u1) -- node[font=\footnotesize, color=\colorofx, edgenode, pos=0.275] {$\sfrac{1}{3}$} (v2);
\draw[edge] (u1) -- node[font=\footnotesize, color=\colorofx, edgenode, pos=0.85] {$\sfrac{1}{3}$} (v4);
\draw[edge] (u2) -- node[font=\footnotesize, color=\colorofx, edgenode, pos=0.725] {$0$} (v1);
\draw[edge] (u2) -- node[font=\footnotesize, color=\colorofx, edgenode, pos=0.6] {$\sfrac{1}{3}$} (v2);
\draw[edge] (u3) -- node[font=\footnotesize, color=\colorofx, edgenode, pos=0.275] {$\sfrac{1}{3}$} (v2);
\draw[edge] (u3) -- node[font=\footnotesize, color=\colorofx, edgenode, pos=0.475] {$\sfrac{1}{3}$} (v3);
\draw[edge] (u3) -- node[font=\footnotesize, color=\colorofx, edgenode, pos=0.6] {$\sfrac{1}{3}$} (v4);
\draw[edge] (u4) -- node[font=\footnotesize, color=\colorofx, edgenode, pos=0.475] {$0$} (v4);
\draw[edge] (u4) -- node[font=\footnotesize, color=\colorofx, edgenode, pos=0.7] {$1$} (v5);

\end{scope}

\begin{scope}[shift={(1,-7)}, local bounding box=decomp1]

\draw[boxed] (0,0) rectangle (5,5);
\draw[boxed] (0,0) rectangle (1.9,1);
\node[font=\footnotesize, color=\colorofset] at (0.95,0.5) {$\strut \lambda_1=\sfrac{1}{3}$};

\node[vertex] (u1) at (0.75,4.5) {};
\node[vertex] (u2) at (0.75,3.5) {};
\node[vertex] (u3) at (0.75,2.5) {};
\node[vertex] (u4) at (0.75,1.5) {};
\node[vertex] (v1) at (4.25,4.5) {};
\node[vertex] (v2) at (4.25,3.5) {};
\node[vertex] (v3) at (4.25,2.5) {};
\node[vertex] (v4) at (4.25,1.5) {};
\node[vertex] (v5) at (4.25,0.5) {};

\draw[discardedEdge] (u1) -- (v1);
\draw[selectedEdge] (u1) -- (v2);
\draw[discardedEdge] (u1) -- (v4);
\draw[discardedEdge] (u2) -- (v1);
\draw[discardedEdge] (u2) -- (v2);
\draw[discardedEdge] (u3) -- (v2);
\draw[discardedEdge] (u3) -- (v3);
\draw[selectedEdge] (u3) -- (v4);
\draw[discardedEdge] (u4) -- (v4);
\draw[selectedEdge] (u4) -- (v5);

\end{scope}

\begin{scope}[shift={(7,-7)}, local bounding box=decomp2]

\draw[boxed] (0,0) rectangle (5,5);
\draw[boxed] (0,0) rectangle (1.9,1);
\node[font=\footnotesize, color=\colorofset] at (0.95,0.5) {$\strut \lambda_2=\sfrac{1}{3}$};

\node[vertex] (u1) at (0.75,4.5) {};
\node[vertex] (u2) at (0.75,3.5) {};
\node[vertex] (u3) at (0.75,2.5) {};
\node[vertex] (u4) at (0.75,1.5) {};
\node[vertex] (v1) at (4.25,4.5) {};
\node[vertex] (v2) at (4.25,3.5) {};
\node[vertex] (v3) at (4.25,2.5) {};
\node[vertex] (v4) at (4.25,1.5) {};
\node[vertex] (v5) at (4.25,0.5) {};

\draw[discardedEdge] (u1) -- (v1);
\draw[discardedEdge] (u1) -- (v2);
\draw[selectedEdge] (u1) -- (v4);
\draw[discardedEdge] (u2) -- (v1);
\draw[selectedEdge] (u2) -- (v2);
\draw[discardedEdge] (u3) -- (v2);
\draw[selectedEdge] (u3) -- (v3);
\draw[discardedEdge] (u3) -- (v4);
\draw[discardedEdge] (u4) -- (v4);
\draw[selectedEdge] (u4) -- (v5);

\end{scope}

\begin{scope}[shift={(13,-7)}, local bounding box=decomp3]

\draw[boxed] (0,0) rectangle (5,5);
\draw[boxed] (0,0) rectangle (1.9,1);
\node[font=\footnotesize, color=\colorofset] at (0.95,0.5) {$\strut \lambda_3=\sfrac{1}{3}$};

\node[vertex] (u1) at (0.75,4.5) {};
\node[vertex] (u2) at (0.75,3.5) {};
\node[vertex] (u3) at (0.75,2.5) {};
\node[vertex] (u4) at (0.75,1.5) {};
\node[vertex] (v1) at (4.25,4.5) {};
\node[vertex] (v2) at (4.25,3.5) {};
\node[vertex] (v3) at (4.25,2.5) {};
\node[vertex] (v4) at (4.25,1.5) {};
\node[vertex] (v5) at (4.25,0.5) {};

\draw[selectedEdge] (u1) -- (v1);
\draw[discardedEdge] (u1) -- (v2);
\draw[discardedEdge] (u1) -- (v4);
\draw[discardedEdge] (u2) -- (v1);
\draw[discardedEdge] (u2) -- (v2);
\draw[selectedEdge] (u3) -- (v2);
\draw[discardedEdge] (u3) -- (v3);
\draw[discardedEdge] (u3) -- (v4);
\draw[discardedEdge] (u4) -- (v4);
\draw[selectedEdge] (u4) -- (v5);

\end{scope}

\draw[boxedFat] ($(marginal1.north west) + (-1,1.5)$) rectangle ($(marginal2.south east) + (1,-0.5)$);

\node[boxedLight, font=\small, minimum width=0.75*7cm, anchor=north east, inner sep = 0pt] at ($(marginal2.north east)+(1,1.5)$) {\textbf{Defining the marginals\strut}};

\draw[arc] (marginal1.east) -- (marginal2.west);
\draw[arc] (marginal2.south) -- ($(decomp1.north)+(1.5,0)$);
\draw[arc] (marginal2.south) -- (decomp2.north);
\draw[arc] (marginal2.south) -- ($(decomp3.north)+(-1.5,0)$);

\node[boxedLight, font=\footnotesize, anchor=west] at (1,-1) {Convex decomposition\strut};

\end{tikzpicture}  	\caption{Defining marginal vectors in order to get a monotone contention resolution scheme for the bipartite matching polytope $\bipmatchP[G]$ via Proposition~\ref{prop:marginalCRschemes}.}\label{fig:exampleMarginalCRS}
\end{figure}
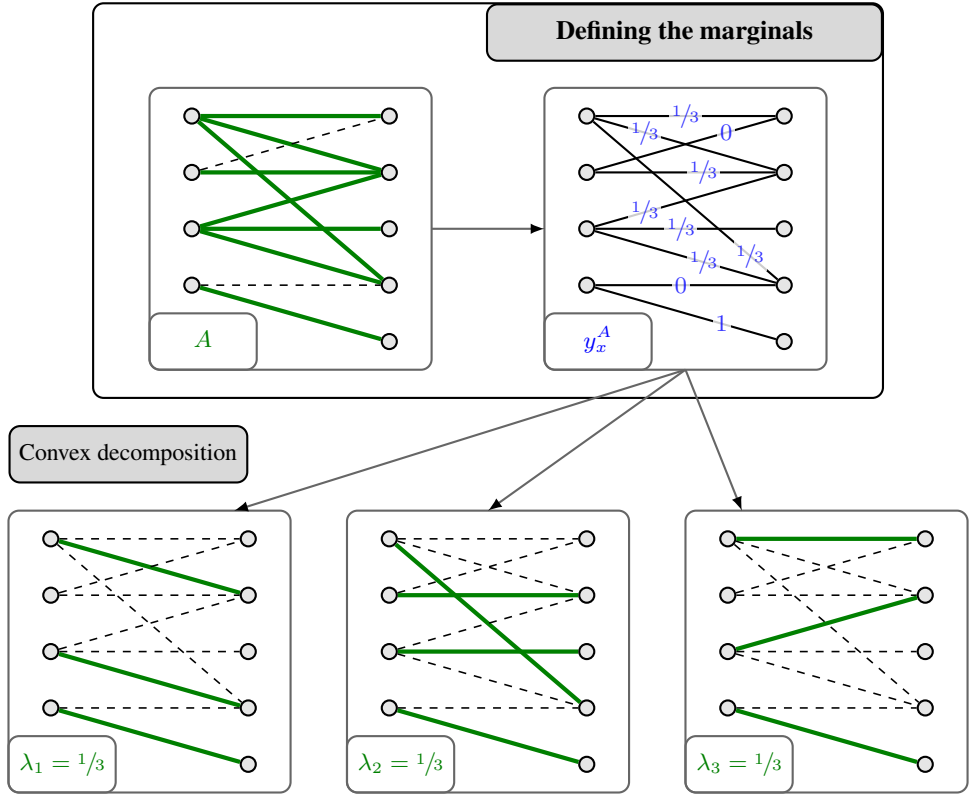
Observe that these vectors $y_x^A$ clearly satisfy~\ref{item:marginalSupp} and~\ref{item:marginalConvComb}. 
Since we want to apply Proposition~\ref{prop:marginalCRschemes} in order to get a monotone $c$-balanced CR scheme for $\bipmatchP[G]$, we also need to check conditions~\ref{item:marginalBalance} and~\ref{item:marginalMonotone}, which correspond to balancedness and monotonicity, respectively. 

For the latter, we notice that for any $x \in \bipmatchP[G]$ and $e=\{u,v\}\in A \subseteq B \subseteq \supp(x)$, we have
\begin{equation*}
\big(y_x^{A}\big)_e = \frac{1}{ \max\{  |\delta(u) \cap A|, |\delta(v) \cap A|  \} } \geq  \frac{1}{ \max\{  |\delta(u) \cap B|, |\delta(v) \cap B|  \} } = \big( y_x^{B}\big)_e
		\enspace ,
\end{equation*}
thus showing~\ref{item:marginalMonotone}, i.e., the property on marginal vectors that implies monotonicity. %

To determine the balancedness, we fix $x \in \bipmatchP[G]$ and $e = \{u,v\} \in \supp (x)$.
We expand the expression
$\expval\big[(y_x^{R(x)})_e \big]/x_e
=
\expval\big[(y_x^{R(x)})_e \;\big\vert\;  e\in R(x)\big]
$
as follows:
\begin{equation}\label{eq:bip_easyCR_balancedness}
\begin{aligned}
\expval\Big[\big(y_x^{R(x)}\big)_e \;\Big\vert\; e \in R(x) \Big] 
& = \expval\bigg[   \frac{1}{\max\{  |\delta(u) \cap R(x)|, |\delta(v) \cap R(x)|  \}} \;\bigg\vert\; e \in R(x) \bigg] \\
& = \expval\bigg[   \frac{1}{ 1 + \max\{  |\delta(u) \cap (R(x)\setminus \{e\})|, |\delta(v) \cap (R(x)\setminus \{e\})|  \}} \bigg]\enspace. 
\end{aligned}
 \end{equation}
One way to get a simple, though loose, lower bound on this expression is through Jensen's inequality:
\begin{equation}\label{eq:bip_easyCR_balancedness_lb}
\begin{aligned}
\expval\Big[\big(y_x^{R(x)}\big)_e \;\Big\vert\; e \in R(x) \Big] 
& \geq   \expval\bigg[    \frac{1}{1 + 
|\delta(u)\cap (R(x)\setminus \{e\})| +
|\delta(v)\cap (R(x)\setminus \{e\})|}  \bigg] \\
& \geq    \frac{1}{1 + \expval[ 
|\delta(u)\cap (R(x)\setminus \{e\})|]
+ \expval[
|\delta(v)\cap (R(x)\setminus \{e\})|]
} \\
& =    \frac{1}{1 +
  x(\delta(u)\setminus \{e\}) +
  x(\delta(v)\setminus \{e\})
}\\
& \geq    \frac{1}{1 + x(\delta(u)) + x(\delta(v))}\\
& \geq \frac{1}{3}
\enspace .
\end{aligned}
\end{equation}
In the last step, we used that $x \in \bipmatchP[G]$.
The above implies that the monotone CR scheme we get from our marginals through Proposition~\ref{prop:marginalCRschemes}  is $\frac{1}{3}$-balanced.
A more detailed analysis of expression~\eqref{eq:bip_easyCR_balancedness} shows that an even better balancedness than $\frac{1}{3}$ is actually obtained.
Indeed, Guruganesh and Lee~\cite[Section~4]{GuruganeshLee2017} also analyze expression~\eqref{eq:bip_easyCR_balancedness}, which, interestingly, they obtain through a different technique. They show the following sharp lower bound:
\begin{equation}\label{eq:GuruganeshLee}
\expval\bigg[   \frac{1}{ 1 + \max\{  |\delta(u) \cap (R(x)\setminus \{e\})|, |\delta(v) \cap (R(x)\setminus \{e\})|  \}} \bigg]
	\geq 1 - \frac{5}{2 \myexp} + \frac{1}{\myexp}
	\geq 0.4481
	\enspace .
\end{equation}
Thus, our monotone CR scheme for the bipartite matching polytope $\bipmatchP[G]$ is $0.4481$-balanced.
	Compared to the approach in~\cite{GuruganeshLee2017}, which uses local distribution schemes, we obtain the expression in~\eqref{eq:bip_easyCR_balancedness} in a much simpler and quicker way.
	More importantly, while the existence of a $0.4481$-balanced CR scheme for $\bipmatchP[G]$ already followed from the result by Guruganesh and Lee~\cite{GuruganeshLee2017} on the correlation gap of matchings in bipartite graphs, our approach shows that this balancedness can be achieved with a conceptually very simple scheme that satisfies the crucial monotonicity property (see Table~\ref{tab:priorResults} for prior results), which is needed to apply Theorem~\ref{thm:crSchemeGuarantee} to do the rounding step in the context of submodular maximization.
\end{example}

\section{An optimal monotone contention resolution scheme for bipartite matchings}\label{sec:bipMatch}

As before, let $G=(V,E)$ be a bipartite graph and $\bipmatchP[G]$ the corresponding (bipartite) matching polytope.
Since one can show that inequality~\eqref{eq:GuruganeshLee} %
can be tight, the analysis of the balancedness for the scheme presented in Example~\ref{ex:CRscheme_from_marginals} is tight as well.
New ideas are therefore needed to obtain a CR scheme for $\bipmatchP[G]$ with a stronger balancedness.

In the following, we first examine why the scheme from Example~\ref{ex:CRscheme_from_marginals} does not have a higher balancedness. This helps to build up intuition on how to adjust the scheme. Our polyhedral viewpoint then leads to a natural formalization of this intuition and allows for analyzing the resulting scheme. Moreover, our approach also allows for giving a simple proof that the scheme we get is optimal among all monotone CR schemes for bipartite matchings.

\subsection{Developing the scheme}

As shown by Guruganesh and Lee~\cite{GuruganeshLee2017}, tightness of inequality~\eqref{eq:GuruganeshLee} for an edge $e=\{u,v\} \in E$ is approached for the following graph structure and $x$-values:
The edge $e$ has $x$-value $x_e = \epsilon$ for tiny $\epsilon > 0$, $u$ has only one other edge $e_0$ incident to it with $x_{e_0} = 1 - \epsilon$, and $v$ has $k$ other edges $e_{1}, \ldots, e_{k}$ incident to it with $x_{e_i} = \frac{1-\epsilon}{k}$ for $i \in [k]$, where $k$  is large.  An illustration of this is provided in Figure~\ref{fig:CRSbipartiteBadPoisBern}.
\begin{figure}[ht]
	\centering

\tikzstyle{vertex}=[circle, draw=black, thick, inner sep=1pt, minimum size=2mm, fill=black!10]
\tikzstyle{edgenode}=[rectangle, fill=white, fill opacity=0.8, text opacity=1, inner sep=1pt]
\tikzstyle{edge}=[thick, draw=black]
\tikzstyle{selectedEdge}=[ultra thick, draw=\colorofset]
\tikzstyle{discardedEdge}=[semithick, dashed, draw=black]
\tikzstyle{arc}=[thick, draw=black!60, -latex]
\tikzstyle{boxed}=[thick, draw=black!60, rounded corners]
\tikzstyle{boxedFat}=[thick, draw=black, rounded corners]
\tikzstyle{boxedLight}=[thick, draw=black, rounded corners, fill = black!15, minimum height=0.75*1cm]

\begin{tikzpicture}[scale=0.75]

\node[vertex, minimum size=10/0.75, font=\small] (u) at (-1.75,-1) {$u$};
\node[vertex, minimum size=10/0.75, font=\small] (v) at (1.75,1) {$v$};

\coordinate (u0) at ($(u) + (-3,1.0)$) {};
\coordinate (u1) at ($(u) + (-3,1.5)$) {};
\coordinate (u2) at ($(u) + (-3,2.5)$) {};
\coordinate (u4) at ($(u) + (-3,0.5)$) {};
\coordinate (u5) at ($(u) + (-3,-0.5)$) {};

\coordinate (v0) at ($(v) + (3,-1.0)$) {};
\coordinate (v1) at ($(v) + (3,-1.5)$) {};
\node at ($(v1)+(-0.175,0.225)$) {$\vdots$};
\coordinate (v2) at ($(v) + (3,-2.5)$) {};
\coordinate (v4) at ($(v) + (3,-0.5)$) {};
\coordinate (v5) at ($(v) + (3,0.5)$) {};

\draw[edge] (u) edge node[font=\small, color=\colorofx, edgenode, pos=0.5] {$\epsilon$} (v);

\draw[edge] (u) edge node[font=\small, color=\colorofx, edgenode, pos=0.7] {$1-\epsilon$} ($(u)!0.85!(u0)$);
\draw[edge, dotted] ($(u)!0.85!(u0)$) edge (u0);

\draw[edge] (v) edge node[font=\small, color=\colorofx, edgenode, pos=0.7] {$\frac{1-\epsilon}{k}$} ($(v)!0.85!(v2)$);
\draw[edge, dotted] ($(v)!0.85!(v2)$) edge (v2);
\draw[edge] (v) edge node[font=\small, color=\colorofx, edgenode, pos=0.7] {$\frac{1-\epsilon}{k}$} ($(v)!0.85!(v4)$);
\draw[edge, dotted] ($(v)!0.85!(v4)$) edge (v4);
\draw[edge] (v) edge node[font=\small, color=\colorofx, edgenode, pos=0.7] {$\frac{1-\epsilon}{k}$} ($(v)!0.85!(v5)$);
\draw[edge, dotted] ($(v)!0.85!(v5)$) edge (v5);
\draw [decorate,decoration={brace,amplitude=5, mirror}]
($(v2)+(0.25,-0.1)$) -- node[font=\small,rotate=270, yshift=12.5] {$k$ edges} ($(v5)+(0.25,0.1)$);

\end{tikzpicture} 	\caption{Bad instance for the CR scheme described in Example~\ref{ex:CRscheme_from_marginals}.}\label{fig:CRSbipartiteBadPoisBern}
\end{figure}
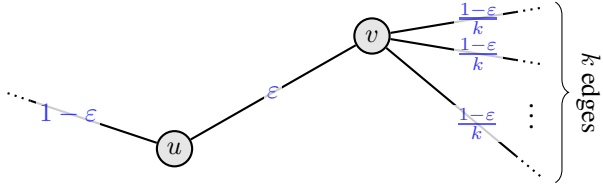

In particular, for~\eqref{eq:GuruganeshLee} to be (close to) tight %
 for some edge $e \in E$, the edge $e$ needs to be adjacent to an edge $g \in E$ that has $x$-value close to $1$.
For any such edge $g \in E$ with large $x$-value, however, it is not difficult to see that the expression in~\eqref{eq:GuruganeshLee} %
 is considerably larger than $0.4481$. 
Since the balancedness of a CR scheme is a worst-case measure over all edges, one would suspect that truly hard instances exhibit more symmetry and do not have edges with large $x$-value.
It therefore seems plausible that one should be able to do better on the above described instances by balancing things out, meaning that it is only a feature of the specific scheme given in Example~\ref{ex:CRscheme_from_marginals} that causes it to have problems rather than an inherent difficulty of these instances.
Indeed, the following lemma shows that the hardest instances %
 (independent of a specific scheme) are actually those where \emph{all} edges have small $x$-value.
\begin{lemma}\label{lem:CRscheme_smallx_bipmatch}
	Let $c \in [0,1]$ and $\epsilon > 0$. If there is a monotone contention resolution scheme for the matching polytope $\bipmatchP[G]$ of any bipartite graph $G=(V,E)$ that is $c$-balanced for all $x \in \bipmatchP[G]$ with $x \leq \epsilon \cdot \chi^E$, then there exists a monotone $c$-balanced contention resolution scheme for the matching polytope of any bipartite graph (without any restriction).
\end{lemma}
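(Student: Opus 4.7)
The plan is to reduce the general case to the small-$x$ case by subdividing each edge into many parallel copies and working with the marginal viewpoint from Proposition~\ref{prop:marginalCRschemes}. Given $(G, x, A)$ with $x \in \bipmatchP[G]$ and $A \subseteq \supp(x)$, I form the bipartite multigraph $\overline{G}$ in which each edge $e \in E$ is replaced by $N$ parallel copies $e^1, \ldots, e^N$, and set $\overline{x}_{e^i} \coloneqq x_e/N$. Choosing $N \geq 1/\epsilon$ ensures $\overline{x} \leq \epsilon \chi^{\overline{E}}$ and $\overline{x} \in \bipmatchP[\overline{G}]$, so the hypothesized monotone $c$-balanced scheme $\overline{\pi}$, with marginals $\overline{y}$, applies on $(\overline{G}, \overline{x})$. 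The marginals of the derived scheme will be
\[
(y_x^A)_e \;\coloneqq\; \expval\Bigl[\sum_{i=1}^N \bigl(\overline{y}_{\overline{x}}^{\,\overline{A}}\bigr)_{e^i}\Bigr]\enspace,
\]
where $\overline{A}$ is built from $A$ by the auxiliary randomization described below. Since all copies of $e$ share both endpoints in $\overline{G}$, the sum over $i$ projects a matching in $\overline{G}$ to a well-defined matching weight on $e$, and the degree constraints of $\overline{G}$ transfer to $G$, so (CR1)--(CR2) of Proposition~\ref{prop:marginalCRschemes} follow immediately, provided $\overline{A} \subseteq \{e^i : e \in A, i\in[N]\}$.

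The key tool for (CR3) is a coupling of $R(x)$ and $R(\overline{x})$: let $U_{e,i} \sim \mathrm{Unif}[0,1]$ be i.i.d., set $t_e \coloneqq 1 - (1-x_e)^{1/N}$, declare $e \in R(x)$ iff $\min_i U_{e,i} \leq t_e$, and $e^i \in R(\overline{x})$ iff $U_{e,i} \leq x_e/N$. Bernoulli's inequality gives $(1-x_e/N)^N \geq 1-x_e$, hence $t_e \geq x_e/N$, so $e^i \in R(\overline{x})$ implies $e \in R(x)$; and both marginals have the desired independent Bernoulli distributions. On input $A$, the scheme samples $U$ conditionally on $\{e : \min_i U_{e,i} \leq t_e\} = A$ and sets $\overline{A} \coloneqq \{e^i : U_{e,i} \leq x_e/N\}$. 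When $A \sim R(x)$, the joint law of $(A,U)$ is precisely the unconditional one obtained by drawing $U$ and setting $A \coloneqq R(U)$; under this law $\overline{A} \sim R(\overline{x})$, and balancedness of $\overline{\pi}$ gives
\[
\expval\bigl[(y_x^{R(x)})_e\bigr]
\;=\; \sum_{i=1}^N \expval\bigl[(\overline{y}_{\overline{x}}^{\,R(\overline{x})})_{e^i}\bigr]
\;\geq\; \sum_{i=1}^N c \cdot \overline{x}_{e^i} \;=\; c \cdot x_e\enspace.
\]

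The main obstacle is verifying monotonicity (CR4), since the conditional law of $U$ depends on $A$. I plan to handle it by coupling $U^A$ (the conditioned $U$ for input $A$) and $U^B$ (for input $B \supseteq A$) as follows: for each edge $f$ with $f \in A$ or $f \notin B$, the two conditional distributions on $U_f$ coincide, so I set $U_f^A = U_f^B$; for $f \in B \setminus A$, $U_f^A$ is conditioned on $\min_i U_{f,i} > t_f \geq x_f/N$, which forces no copy of $f$ into $\overline{A}(U^A)$, whereas $\overline{A}(U^B)$ may contain copies of such $f$. Consequently $\overline{A}(U^A) \subseteq \overline{A}(U^B)$ almost surely under the coupling, while for the fixed edge $e \in A$ the set of copies $e^i$ lying in $\overline{A}(\cdot)$ agrees on the two sides (since $U_e^A = U_e^B$). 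Monotonicity of $\overline{y}$ then yields $(\overline{y}^{\overline{A}(U^A)})_{e^i} \geq (\overline{y}^{\overline{A}(U^B)})_{e^i}$ for every $i$ with $e^i \in \overline{A}(U^A)$, and both sides vanish for the remaining $i$ by (CR1); summing over $i$ and integrating produces $(y_x^A)_e \geq (y_x^B)_e$. A minor technical point is that $\overline{G}$ is a multigraph, but since $\bipmatchP$ on any bipartite graph is fully described by the degree constraints, the hypothesized scheme applies without modification.
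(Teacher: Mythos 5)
Your proposal is correct and follows essentially the same route as the paper's proof (Lemmas~\ref{lem:CRscheme_smallx_general} and~\ref{lem:CRscheme_smallx_oneElement} in Appendix~\ref{app:CRscheme_smallx}): split each edge into $N$ parallel copies of value $x_e/N$, couple the independent rounding of $x$ with that of the split point, and pull the scheme back by summing over copies. Your uniform-variable coupling is exactly an explicit realization of the paper's distribution for the random set $D$ of surviving siblings, and your monotonicity argument spells out the coupling the paper leaves implicit.
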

A generalized version of this statement that goes beyond (bipartite) matchings is given in Lemma~\ref{lem:CRscheme_smallx_general} in Appendix~\ref{app:CRscheme_smallx}. There, we also provide a full proof.
Nevertheless, also here we briefly want to give some intuition why the statement holds since this motivates the construction of the procedure we describe below.
If we are given a point $x \in \bipmatchP[G]$ such that there are edges with large $x$-value, the idea is to do the following.
\begin{enumerate}[label=\normalfont\arabic*., leftmargin=4em]
	\item Let $k \in \mathbb{Z}_{> 0}$ such that $\frac{1}{k} \leq \epsilon$.
	\item For every edge $e \in E$, split $e$ into $k$ copies $e_1, \ldots, e_k$ and distribute the value $x_e$ uniformly among these $k$ edges, i.e., each copy gets an $x$-value of $\frac{x_e}{k} \leq \epsilon$.
	\item Independently round the new point $x$ and apply the scheme for points with small components to the resulting set $R(x)$ to get a bipartite matching $I_x$ in the graph with the split edges.
	\item For each edge $e \in E$, replace any occurrence of a copy of $e$ in $I_x$ by $e$ (observe that at most one of $e_1, \ldots, e_k$ can be in $I_x$).
\end{enumerate}
Notice that the above yields a bipartite matching in the original graph $G$.
The issue with this approach, however, is that it will typically not be computationally efficient.
We can circumvent this problem by simulating the splitting without actually having to do it.
This is achieved by the scheme for $\bipmatchP[G]$ we describe next.

Let $x \in \bipmatchP[G]$ and $A \subseteq \supp(x)$. By Proposition~\ref{prop:marginalCRschemes}, all we have to do is to find an (efficient) procedure that returns a vector $y_x^A \in [0,1]^E$ which satisfies~\ref{item:marginalSupp}--\ref{item:marginalMonotone}.
We start by constructing a random subset $\overline{A} \subseteq A$, where each edge $e \in A$ is included in $\overline{A}$ independently with probability $(1-\myexp^{-x_e})/ x_e$.
Notice that $(1-\myexp^{-x_e})/ x_e$ is almost $1$ for small $x_e \in [0,1]$ but approaches $1-\myexp^{-1} = 0.63212\ldots$ for large $x_e \in [0,1]$. %
While edges with small $x$-value
are therefore almost unaffected by this step, 
edges with large $x$-value get penalized.
Next, we define a vector $q \in \mathbb{Z}_{\geq 0}^E$ as follows. For each edge $e \in \overline{A}$, we let $q_e \in \mathbb{Z}_{\geq 1}$ be an independent realization of a $\PoisD{x_e}$-random variable conditioned on it being at least $1$, and for each edge $e \in E \setminus \overline{A}$, we let $q_e \coloneqq 0$.
Then, for each edge $e = \{u,v\} \in E$, we set 
\begin{align*}
	(y_x^A)_e \coloneqq \frac{ q_e }{ \max\{  \sum_{g \in \delta(u)} q_g , \sum_{g \in \delta(v)} q_g  \} }  
	\enspace .
\end{align*}
Throughout this work, we use the convention that the latter is equal to $0$ if $q_e = 0$, even if the denominator also equals $0$. %
A formal description of this scheme is provided below (Algorithm~\ref{algo:BipMatchCRS}).
Clearly, we have that $y_x^A \in \bipmatchP[G]$ and $\supp(y_x^A) \subseteq A$. %
Moreover, we will show later that via Proposition~\ref{prop:marginalCRschemes}, this procedure yields the monotone $(b, \beta(b))$-balanced CR scheme for $\bipmatchP[G]$ whose existence is claimed in Theorem~\ref{thm:mCRs_existence_bipartite}.

\begin{algorithm2e}[ht]
	\caption{Contention resolution for $\bipmatchP[G]$}
	\label{algo:BipMatchCRS}
	\KwIn{Point $x \in \bipmatchP[G]$, set $A \subseteq \supp(x)$.}
		\KwOut{Point $y_x^A \in \bipmatchP$ with $\supp(y_x^A) \subseteq A$.} %
	Sample $\overline{A} \subseteq A$ by including edge $e \in A$ independently with probability  \smash{$ (1- \myexp^{-x_e})/ x_e$}\;
	For each edge $e \in \overline{A}$, let $q_e \in \mathbb{Z}_{\geq 1}$ be an independent realization of a $\PoisD{x_e}$-random variable conditioned on it being at least $1$\;
	For each edge $e \in E \setminus \overline{A}$, let $q_e \coloneqq 0$\;
	For each edge $e = \{u,v\} \in E$, let 
	\smash{$(y_x^A)_e \coloneqq \frac{ q_e }{ \max\{  \sum_{g \in \delta(u)} q_g , \sum_{g \in \delta(v)} q_g  \} }  $}\;
	Return $y_x^A$\;
\end{algorithm2e}

While a short motivation for the subsampling step was already given above (penalization of edges with large $x$-value), we also want to explain why the use of (conditioned) %
Poisson random variables is quite natural. %
As was mentioned before, the instances which cause most difficulties for the scheme given in Example~\ref{ex:CRscheme_from_marginals} are those where edges with large $x$-value are present (see also Figure~\ref{fig:CRSbipartiteBadPoisBern}). However, Lemma~\ref{lem:CRscheme_smallx_bipmatch} tells us that by splitting edges we can avoid such instances.
In this context, given $x \in \bipmatchP[G]$, the splitting would work as follows.
\begin{enumerate}[label=\normalfont\arabic*., leftmargin=4em]
	\item Let $k \in \mathbb{Z}_{> 0}$ be large.
	\item For every edge $e \in E$, split $e$ into $k$ copies $e_1, \ldots, e_k$ and distribute the value $x_e$ uniformly among these $k$ edges, i.e., each copy gets an $x$-value of $\frac{x_e}{k}$.
	\item Independently round the new point $x$ and apply the scheme from Example~\ref{ex:CRscheme_from_marginals} to the resulting set $R(x)$ to get a point $y_x^{R(x)}$ in the bipartite matching polytope of the graph with the split edges.
	\item For each edge $e \in E$, reverse the splitting by reuniting the edges $e_1, \ldots, e_k$ and summing up their values in $y_x^{R(x)}$.	
\end{enumerate}
Notice that the above yields a point $y_x^{R(x)} \in \bipmatchP[G]$ for the original graph $G$ since for every edge $e \in E$, all the copies $e_1, \ldots, e_k$ have the same endpoints as $e$. Moreover, for the determination of this final point $y_x^{R(x)}$ it is irrelevant \emph{which} copies $e_1, \ldots, e_k$ of a given edge $e \in E$ appeared in the independent rounding step; all that matters is how many of them appeared.
By construction, this number follows a $\BinD{k}{\frac{x_e}{k}}$-distribution, which, for $k \to \infty$, converges to a $\PoisD{x_e}$-distribution.
Hence, this is where the %
Poisson random variables naturally emerge. By using them directly, we can avoid the splitting %
 while still getting the same (improved) results we would get by actually doing it. In addition, it makes our analysis much shorter and easier to follow.

Moreover, the subsampling step and the subsequent use of conditioned $\PoisD{x_e}$-random variables in Algorithm~\ref{algo:BipMatchCRS} fit together very well. To see why, let $x \in \bipmatchP[G]$ be a point we want to round with the help of the scheme from Algorithm~\ref{algo:BipMatchCRS}.
First, we independently round $x$ to get a set $R(x) \subseteq \supp(x)$ and then construct a subset $\overline{R} (x) \subseteq R(x)$ by including each edge $e \in R(x)$ independently with probability $(1- \myexp^{-x_e})/x_e$. Thus, we have that a given edge $e \in E$ appears in the random set $\overline{R} (x)$ independently with probability
\begin{equation*}
\Pr [ e \in \overline{R}(x)] = \Pr[ e \in \overline{R}(x) \mid e \in R(x)] \cdot \Pr[e \in R(x)] 
= \frac{1-\myexp^{-x_e}}{x_e} \cdot x_e = 1 - \myexp^{-x_e}
\enspace .
\end{equation*}
The above exactly matches the probability that a $\PoisD{x_e}$-random variable is at least $1$ (or, equivalently, does not equal $0$).
Therefore, independently rounding a point $x \in \bipmatchP[G]$ and then applying Algorithm~\ref{algo:BipMatchCRS} to the resulting (random) set $R(x)$ actually amounts to do the following.\footnote{%
	Notice that Algorithm~\ref{algo:BipMatchCRSshort} is not a procedure that can directly be transformed into a contention resolution scheme for $\bipmatchP[G]$ via Proposition~\ref{prop:marginalCRschemes}.
}
\begin{algorithm2e}[ht]	
	\caption{Combining independent rounding with Algorithm~\ref{algo:BipMatchCRS}}
	\label{algo:BipMatchCRSshort}
	\KwIn{Point $x \in \bipmatchP[G]$.}
	\KwOut{Point $y_x \in \bipmatchP$ with $\supp(y_x) \subseteq \supp(x)$.}
	For each edge $e \in E$, let $q_e \in \mathbb{Z}_{\geq 0}$ be an independent realization of a $\PoisD{x_e}$-random variable\;
	For each edge $e = \{u,v\} \in E$, let 
	\smash{$(y_x)_e \coloneqq \frac{ q_e } { \max\{  \sum_{g \in \delta(u)} q_g , \sum_{g \in \delta(v)} q_g  \}  } $}\;
	Return $y_x$\;
\end{algorithm2e}

It is important to note here that the vectors $y_x^A \in \bipmatchP[G]$, which are constructed by the scheme described in Algorithm~\ref{algo:BipMatchCRS}, are random.
Thus, they only correspond to ``conditional'' marginals (conditioned on the set $\overline{A} \subseteq A$ and the values of $q_e \in \mathbb{Z}_{\geq 1}$ for $e \in \overline{A}$).
The true marginals of the scheme are given by $\expval[ y_x^A ] \in \bipmatchP[G]$.
Because we do not know of an elegant closed-form expression for these expectations, however, we prefer to work with the conditional marginals $y_x^A$. %
This is simpler both from a conceptual as well as from a computational point of view. 
Moreover, our polyhedral approach is flexible enough to be applicable in this setting as well.
Indeed, using it allows us to give an easy proof that the procedure given in Algorithm~\ref{algo:BipMatchCRS} satisfies~\ref{item:marginalSupp}--\ref{item:marginalMonotone} and can therefore, via Proposition~\ref{prop:marginalCRschemes}, be efficiently transformed into a monotone $(b, \beta(b))$-balanced\footnote{%
We recall that $\beta(b) \coloneqq  \expval \big[ \frac{1}{1 +  \max\{ \PoisD[1]{b}, \PoisD[2]{b} \}  }  \big]$ for $b \in [0,1]$, where $\PoisD[1]{b}$ and $\PoisD[2]{b}$ are two independent Poisson random variables with parameter $b$. For $b=1$, it holds $\beta(1) \geq 0.4762$.%
}
contention resolution scheme for $\bipmatchP[G]$.

\begin{proof}[Proof of Theorem~\ref{thm:mCRs_existence_bipartite}]
Consider the procedure described in Algorithm~\ref{algo:BipMatchCRS}.
	Given $x \in \bipmatchP[G]$ and $A \subseteq \supp(x)$, the construction of $y_x^A$ clearly ensures that $\supp(y_x^A) \subseteq A$, meaning that~\ref{item:marginalSupp} is satisfied.
	Moreover, conditioning on the set $\overline{A} \subseteq A$ and the vector $q \in \mathbb{Z}^E_{\geq 0}$, it holds for any vertex $v \in V$ that
	\begin{equation*}
	y_x^A (\delta(v)) 
	= \sum_{e \in \delta(v)} (y_x^A)_e
	\leq \sum_{e \in \delta(v)} \frac{q_e}{ \sum_{g \in \delta(v)} q_g }
	\leq	1
	\enspace .\footnote{Notice that we interpret, as previously discussed, the fraction $\frac{q_e}{\sum_{g\in \delta(v)} q_g}$ as $0$ if both numerator and denominator are $0$. Due to this, the last inequality may be strict, and can therefore not be replaced by an equality.}
	\end{equation*}
	Since it also holds that $y_x^A \geq 0$, the above implies $y_x^A \in \bipmatchP[G]$. Thus,~\ref{item:marginalConvComb} is fulfilled as well.
	As noted before, the vectors $y_x^A \in \bipmatchP[G]$ that are constructed by our procedure are random, and therefore correspond to conditional marginals. For $x \in \bipmatchP[G]$ and $A \subseteq \supp(x)$, the true marginals %
	 are given by $\expval[ y_x^A ]$, where the expectation is over the random set $\overline{A} \subseteq A$ and the realizations $q_e$ of the (conditioned) Poisson random variables. Since the vectors $y_x^A$ satisfy~\ref{item:marginalSupp} and~\ref{item:marginalConvComb}, so do the true marginals $\expval[ y_x^A ]$.
	We now show monotonicity of our procedure by proving that the true marginals satisfy~\ref{item:marginalMonotone}. To this end, let $x \in \bipmatchP[G]$ and $e=\{u,v\} \in A \subseteq B \subseteq \supp(x)$.
	Moreover, for each edge $g \in E$, we let $Q_g$ be an independent $\PoisD{x_g}$-random variable conditioned on it being at least $1$. 
	Since it is possible to couple the random sets $\overline{A} \subseteq A$ and $\overline{B} \subseteq B$, which we get in the subsampling step of Algorithm~\ref{algo:BipMatchCRS}, such that $\overline{A} = \overline{B}\cap A \subseteq \overline{B}$, we conclude that
	\begin{align*}
	\expval[(y_x^{A})_e] 
	&= \Prb[ e \in \overline{A} ]  \cdot 
	\expval \bigg[
	\frac{Q_e}{ \max\{  \sum_{g \in \delta(u)\cap \overline{A}} Q_g , \sum_{g \in \delta(v)\cap \overline{A}} Q_g  \} }  \,\bigg\vert\, e \in \overline{A}	\bigg]   \\
	&\geq \Prb[ e \in \overline{B} ]  \cdot 
	\expval \bigg[
	\frac{Q_e}{ \max\{  \sum_{g \in \delta(u)\cap \overline{B}} Q_g , \sum_{g \in \delta(v)\cap \overline{B}} Q_g  \} }   \,\bigg\vert\, e \in \overline{B}	\bigg]   \\
	&= \expval[(y_x^{B})_e] 
	\enspace ,
	\end{align*}
	where the expectation is taken over the random sets $\overline{A} \subseteq A$ and $\overline{B} \subseteq B$ as well as the random variables $Q_g$, $g \in E$.
	Hence, our procedure also satisfies~\ref{item:marginalMonotone}.
	
	To prove the claimed balancedness, %
	 fix $b \in [0,1]$, $x \in b \bipmatchP[G]$, and $e = \{u,v\} \in  \supp (x)$. From the discussion preceding this proof, we get that	
	the distribution of the random vector $y_x^{R(x)}$ constructed by Algorithm~\ref{algo:BipMatchCRS} for the independently rounded set $R(x)$ is the same as the distribution of the random vector $y_x$ constructed by Algorithm~\ref{algo:BipMatchCRSshort}.
	Hence, if we can show that 
	\begin{equation}\label{eq:BipMatchCRSshortBalanc}
		\expval [  (y_x)_e ]
		\geq \beta(b) \cdot x_e  
		\enspace ,
	\end{equation}
	then~\ref{item:marginalBalance} is fulfilled as well.
	It therefore only remains to show~\eqref{eq:BipMatchCRSshortBalanc}. 
	For this purpose, let $Q_g$ for $g \in E$ be independent random variables, where for each edge $g \in E$,  $Q_g$ follows a $\PoisD{x_g}$-distribution.
	We then have %
	\begin{align*}
		\expval[ (y_x)_e  ]  
		=    \expval \bigg[  \frac{Q_e}{\max\{  \sum_{g \in \delta(u)} Q_g , \sum_{g \in \delta(v)} Q_g  \}}   \bigg]
		=    \expval \bigg[  \frac{Q_e}{ Q_e + \max\{  \sum_{g \in \delta(u) \setminus \{e\}} Q_g , \sum_{g \in \delta(v) \setminus \{e\} } Q_g  \}}   \bigg]
		\enspace .
	\end{align*}
	In order to analyze the above expression, we use the following simplified notation. We let $\BernD[i]{\xi}$ denote a Bernoulli random variable with parameter $\xi$ and $\PoisD[j]{\zeta}$ a Poisson random variable with parameter $\zeta$, where we assume that all random variables that appear are independent when they have different subscripts. 
	Note that $x \in b \bipmatchP[G]$ implies $x(\delta(w)) \leq b$ for any vertex $w \in V$.
	Moreover, recall that the sum of two independent Poisson random variables $\PoisD[1]{\xi}$ and $\PoisD[2]{\zeta}$ has a $\PoisD{\xi+\zeta}$-distribution and that	
	$\sum_{i=1}^k \BernD[i]{\frac{\xi}{k}}$ approaches a $\PoisD{\xi}$-distribution as $k \to \infty$. Using these properties as well as dominated convergence and linearity of expectation, 
	 we get that
	\begin{align*}
	\expval[ (y_x)_e  ]  
	&=    \expval \bigg[  \frac{ \PoisD[e]{x_e}  }{ \PoisD[e]{x_e} + \max\{  \sum_{g \in \delta(u) \setminus \{e\}} \PoisD[g]{x_g} , \sum_{g \in \delta(v) \setminus \{e\} } \PoisD[g]{x_g}  \}}   \bigg]   \\
	& \geq  \expval \bigg[  \frac{ \PoisD[e]{x_e}  }{ \PoisD[e]{x_e} + \max\{  \PoisD[u]{b - x_e} ,  \PoisD[v]{b-x_e}  \}}   \bigg]   \\	
	&= \expval \bigg[  \lim_{k \to \infty}  \frac{ \sum_{i=1}^k \BernD[i]{\frac{x_e}{k}}  }{ \sum_{i=1}^k \BernD[i]{\frac{x_e}{k}} +  \max\{  \PoisD[u]{b - x_e} ,  \PoisD[v]{b-x_e}  \}}   \bigg]  \\
	&=  \lim_{k \to \infty}  \sum_{i=1}^k \expval \bigg[  \frac{ \BernD[i]{\frac{x_e}{k}}  }{ \BernD[i]{\frac{x_e}{k}} +  \sum_{j \in [k] \setminus \{i\}} \BernD[j]{\frac{x_e}{k}} +  \max\{  \PoisD[u]{b - x_e} ,  \PoisD[v]{b-x_e}  \}}   \bigg]  \\
	&=  \lim_{k \to \infty}  \sum_{i=1}^k   \frac{x_e}{k} \cdot  \expval \bigg[  \frac{ 1  }{ 1+ \sum_{j \in [k] \setminus \{i\}} \BernD[j]{\frac{x_e}{k}} +  \max\{  \PoisD[u]{b - x_e} ,  \PoisD[v]{b-x_e}  \}}   \bigg]	\\
	& \geq \lim_{k \to \infty}  \sum_{i=1}^k   \frac{x_e}{k} \cdot  \expval \bigg[  \frac{ 1  }{ 1+ \sum_{j=1}^k \BernD[j]{\frac{x_e}{k}} +  \max\{  \PoisD[u]{b - x_e} ,  \PoisD[v]{b-x_e}  \}}   \bigg] \\
	& = x_e \cdot   \expval \bigg[ \lim_{k \to \infty}   \frac{ 1  }{ 1+ \sum_{j=1}^k \BernD[j]{\frac{x_e}{k}} +  \max\{  \PoisD[u]{b - x_e} ,  \PoisD[v]{b-x_e}  \}}   \bigg] \\
	& = x_e \cdot   \expval \bigg[   \frac{ 1  }{ 1+ \PoisD[e]{x_e} +  \max\{  \PoisD[u]{b - x_e} ,  \PoisD[v]{b-x_e}  \}}   \bigg] \\
	& \geq x_e \cdot   \expval \bigg[   \frac{ 1  }{ 1+   \max\{  \PoisD[1]{b} ,  \PoisD[2]{b}  \}}   \bigg] \\
	& = x_e \cdot \beta(b)
	\enspace ,
	\end{align*}
	proving~\eqref{eq:BipMatchCRSshortBalanc}.
The last inequality holds since the distribution of $\PoisD[a]{\xi} + \max \{ \PoisD[b]{\zeta}, \PoisD[c]{\zeta}  \} $ is stochastically dominated by the distribution of $ \max \{\PoisD[a_1]{\xi}+  \PoisD[b]{\zeta}, \PoisD[a_2]{\xi}+ \PoisD[c]{\zeta}  \}$. A formal statement of this claim is given in Lemma~\ref{lem:stochDom} in Appendix~\ref{app:stochDom}.
Moreover, note that if there was another edge $h \in E \setminus \{e\}$ with the same endpoints as $e = \{u,v\}$, not all random variables in the expression $\max\{  \sum_{g \in \delta(u) \setminus \{e\}} \PoisD[g]{x_g} , \sum_{g \in \delta(v) \setminus \{e\} } \PoisD[g]{x_g}  \}$ would be independent. However, Lemma~\ref{lem:stochDom} (i.e., the statement mentioned above) also shows that we can assume independence of all these random variables since this only makes it more difficult to achieve a high balancedness.%

Since the procedure given in Algorithm~\ref{algo:BipMatchCRS} satisfies~\ref{item:marginalSupp}--\ref{item:marginalMonotone}, Proposition~\ref{prop:marginalCRschemes} states that we can efficiently transform it into a monotone $(b, \beta(b))$-balanced CR scheme for $\bipmatchP[G]$, which proves Theorem~\ref{thm:mCRs_existence_bipartite}.
\end{proof}

As mentioned earlier, the procedure described in Algorithm~\ref{algo:BipMatchCRS} avoids problems with instances that have edges with large $x$-value by simulating the splitting of such edges.
For instances where all edges have small $x$-value, this is not necessary.
Indeed, using Algorithm~\ref{algo:BipMatchCRS} for such instances basically amounts to the same as applying the very simple procedure described in Example~\ref{ex:CRscheme_from_marginals}.

\subsection{Optimality}

We now use our polyhedral approach %
 to prove that there is no \emph{monotone} contention resolution scheme for the bipartite matching polytope that achieves a higher balancedness than the monotone scheme we get from the procedure described in Algorithm~\ref{algo:BipMatchCRS}.
This shows that our CR scheme for bipartite matchings is optimal when requiring monotonicity, which is crucial when using contention resolution through Theorem~\ref{thm:crSchemeGuarantee} in the context of~\ref{eq:CSFM}.

\begin{proof}[Proof of Theorem~\ref{thm:mCRs_optimality_bipartite}]
	Let $\pi$ be a monotone $(b,c)$-balanced CR scheme for the bipartite matching polytope.
Moreover, for $n \in \mathbb{Z}_{>0}$ large, let $K_{n,n}=(U \dcup V,E)$ be the complete bipartite graph on $2n$ vertices and define $x \in [0,1]^E$ by $x_e \coloneqq \frac{b}{n}$ for each edge $e \in E$.  Clearly, it holds that $x \in b\bipmatchP[K_{n,n}]$. 
In Figure~\ref{fig:CRSbipartiteBadPoisPois}, it is shown how this instance looks like from the perspective of a fixed edge $e = \{u,v\}$.
The reason why we consider such an instance is Lemma~\ref{lem:CRscheme_smallx_bipmatch} which implies that the hardest instances are those where all edges have small $x$-value.  
\begin{figure}[ht]
	\centering

\tikzstyle{vertex}=[circle, draw=black, thick, inner sep=1pt, minimum size=2mm, fill=black!10]
\tikzstyle{edgenode}=[rectangle, fill=white, fill opacity=0.8, text opacity=1, inner sep=1pt]
\tikzstyle{edge}=[thick, draw=black]
\tikzstyle{selectedEdge}=[ultra thick, draw=\colorofset]
\tikzstyle{discardedEdge}=[semithick, dashed, draw=black]
\tikzstyle{arc}=[thick, draw=black!60, -latex]
\tikzstyle{boxed}=[thick, draw=black!60, rounded corners]
\tikzstyle{boxedFat}=[thick, draw=black, rounded corners]
\tikzstyle{boxedLight}=[thick, draw=black, rounded corners, fill = black!15, minimum height=0.75*1cm]

\begin{tikzpicture}[scale=0.75]

\node[vertex, minimum size=10/0.75, font=\small] (u) at (-1.75,-1) {$u$};
\node[vertex, minimum size=10/0.75, font=\small] (v) at (1.75,1) {$v$};

\coordinate (u0) at ($(u) + (-3,1.0)$) {};
\coordinate (u1) at ($(u) + (-3,1.5)$) {};
\coordinate (u2) at ($(u) + (-3,2.5)$) {};
\coordinate (u4) at ($(u) + (-3,0.5)$) {};
\node at ($(u4)+(0.175,0.1)$) {$\vdots$};
\coordinate (u5) at ($(u) + (-3,-0.5)$) {};

\coordinate (v0) at ($(v) + (3,-1.0)$) {};
\coordinate (v1) at ($(v) + (3,-1.5)$) {};
\node at ($(v1)+(-0.175,0.225)$) {$\vdots$};
\coordinate (v2) at ($(v) + (3,-2.5)$) {};
\coordinate (v4) at ($(v) + (3,-0.5)$) {};
\coordinate (v5) at ($(v) + (3,0.5)$) {};

\draw[edge] (u) edge node[font=\small, color=\colorofx, edgenode, pos=0.5] {$\sfrac{b}{n}$} (v);

\draw[edge] (u) edge node[font=\small, color=\colorofx, edgenode, pos=0.7] {$\sfrac{b}{n}$} ($(u)!0.85!(u1)$);
\draw[edge, dotted] ($(u)!0.85!(u1)$) edge (u1);
\draw[edge] (u) edge node[font=\small, color=\colorofx, edgenode, pos=0.7] {$\sfrac{b}{n}$} ($(u)!0.85!(u2)$);
\draw[edge, dotted] ($(u)!0.85!(u2)$) edge (u2);
\draw[edge] (u) edge node[font=\small, color=\colorofx, edgenode, pos=0.7] {$\sfrac{b}{n}$} ($(u)!0.85!(u5)$);
\draw[edge, dotted] ($(u)!0.85!(u5)$) edge (u5);

\draw[edge] (v) edge node[font=\small, color=\colorofx, edgenode, pos=0.7] {$\sfrac{b}{n}$} ($(v)!0.85!(v2)$);
\draw[edge, dotted] ($(v)!0.85!(v2)$) edge (v2);
\draw[edge] (v) edge node[font=\small, color=\colorofx, edgenode, pos=0.7] {$\sfrac{b}{n}$} ($(v)!0.85!(v4)$);
\draw[edge, dotted] ($(v)!0.85!(v4)$) edge (v4);
\draw[edge] (v) edge node[font=\small, color=\colorofx, edgenode, pos=0.7] {$\sfrac{b}{n}$} ($(v)!0.85!(v5)$);
\draw[edge, dotted] ($(v)!0.85!(v5)$) edge (v5);

\draw [decorate,decoration={brace,amplitude=5}]
($(u5)+(-0.25,-0.1)$) -- node[font=\small,rotate=90, yshift=12.5] {$n-1$ edges} ($(u2)+(-0.25,0.1)$);
\draw [decorate,decoration={brace,amplitude=5, mirror}]
($(v2)+(0.25,-0.1)$) -- node[font=\small,rotate=270, yshift=12.5] {$n-1$ edges} ($(v5)+(0.25,0.1)$);

\end{tikzpicture} 	\caption{Hard instance for all monotone CR schemes for the bipartite matching polytope.}%
	\label{fig:CRSbipartiteBadPoisPois}
\end{figure}
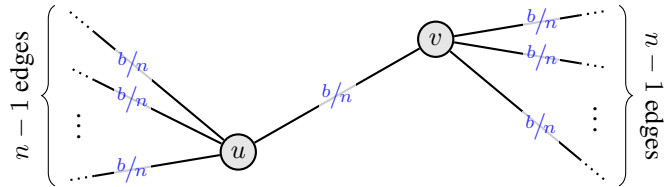

Observe that $K_{n,n}$ and $x \in b\bipmatchP[K_{n,n}]$ are completely symmetric with respect to permutations of $U$ and permutations of $V$. Thus, we can assume that the same is true for $\pi_{x}$.\footnote{Otherwise, we symmetrize $\pi_{x}$. This preserves monotonicity and does not decrease the balancedness of $\pi$.}
Moreover, since there is nothing to show if $b=0$, we only consider the case $b>0$. In particular, this means that $\supp(x) = E$.

We now study the marginals $y_{x}^A \coloneqq \expval [\chi^{ \pi_{x}(A) }  ]$ for $A \subseteq E$,
and claim that for every $e = \{u,v\} \in E$ and $A \subseteq E$ with $e \in A$, it holds that
\begin{equation}\label{eq:BipMatchCRSoptimal_main}
	(y_{x}^A)_e \leq \frac{1}{\max\{  |\delta(u) \cap A|, |\delta(v) \cap A|  \}}
	\enspace .
\end{equation}
By~\ref{item:marginalBalance}, this implies
\begin{equation}\label{eq:BipMatchCRSoptimal_aux}
\begin{aligned}
	c \leq
	\frac{\expval[   (y_{x}^{R(x)})_e   ] }{x_e} 
	& = \expval[   (y_{x}^{R(x)})_e \mid e \in R(x)  ] \\
	& \leq \expval\bigg[  \frac{1}{\max\{  |\delta(u) \cap R(x)|, |\delta(v) \cap R(x)| \}  }   \,\Big\vert\, e \in R(x)        \bigg]  \\
	& = \expval\bigg[   \frac{1}{ 1 + \max\{  |\delta(u) \cap (R(x)\setminus \{e\})|, |\delta(v) \cap (R(x)\setminus \{e\})|  \}} \bigg] \\
	& = \expval\bigg[  \frac{1}{1 +  \max\{  \BinD[1]{n-1}{\frac{b}{n}} , \BinD[2]{n-1}{\frac{b}{n}}    \}  }          \bigg]  
	\enspace ,
\end{aligned}
\end{equation}
where $\BinD[1]{n-1}{\frac{b}{n}}$ and $\BinD[2]{n-1}{\frac{b}{n}}$ are two independent $\BinD{n-1}{\frac{b}{n}}$-random variables.
Here, we used that each edge $g \in E$ is included in $R(x)$ with probability $x_g = \frac{b}{n}$, independently of all other edges.
Since~\eqref{eq:BipMatchCRSoptimal_aux} is true for any $n \in \mathbb{Z}_{>0}$, we can use that a $\BinD{n-1}{\frac{b}{n}}$-distribution converges to a $\PoisD{b}$-distribution as $n \to \infty$ to get that
\begin{align*}
c 
&\leq 
\lim_{n \to \infty}
 \expval\bigg[  \frac{1}{1 +  \max\{  \BinD[1]{n-1}{\frac{b}{n}} , \BinD[2]{n-1}{\frac{b}{n}}    \}  }          \bigg]  \\
 &=
 \expval\bigg[ \lim_{n \to \infty}  \frac{1}{1 +  \max\{  \BinD[1]{n-1}{\frac{b}{n}} , \BinD[2]{n-1}{\frac{b}{n}}    \}  }          \bigg]  \\
 &= \expval\bigg[  \frac{1}{1 +  \max\{  \PoisD[1]{b} , \PoisD[2]{b}    \}  }          \bigg]  \\
 &= \beta(b)
\enspace ,
\end{align*}
where $\PoisD[1]{b}$ and  $\PoisD[2]{b}$ are two independent $\PoisD{b}$-random variables. This is exactly what we want to show. Hence, in order to finish the proof of Theorem~\ref{thm:mCRs_optimality_bipartite}, it only remains to show~\eqref{eq:BipMatchCRSoptimal_main}.

To do so, let $e = \{u,v\} \in A \subseteq E$, where $u \in U$ and $v \in V$. %
Defining $B \coloneqq \delta(u) \cap A$, the fact that $y_{x}^B \in  \bipmatchP[K_{n,n}]$ implies
\begin{equation*}
\sum_{g \in B} (y_{x}^B)_g \leq \sum_{g \in \delta(u)} (y_{x}^B)_g = (y_{x}^B)(\delta(u))  \leq 1	   \enspace .\footnote{Even though this is not relevant for our derivations, we highlight that the inequality $\sum_{g\in B}(y_x^B)_g \leq \sum_{g\in \delta(u)}(y_x^B)_g$ actually even holds with equality because $\supp(y_x^B)\subseteq B$.}
\end{equation*}
We now use our assumption on the symmetry of $\pi_x$ to get that  $(y_{x}^B)_e =  (y_{x}^B)_g$ for each $g \in B$ (note that for any two edges $g_1, g_2 \in B \subseteq \delta(u)$, there is a permutation of $N_B(u) \coloneqq \{w \in V  \mid \{u,w\} \in B \} \subseteq V$ that leaves $B$ invariant and maps $g_1$ to $g_2$). Moreover, monotonicity of $\pi$ implies by~\ref{item:marginalMonotone} that the marginals satisfy $(y_{x}^B)_e \geq (y_{x}^A)_e$. Putting everything together, this means that
\begin{equation*}
1 \geq	
\sum_{g \in B} (y_{x}^B)_g
= |B| \cdot  (y_{x}^B)_e
\geq |B| \cdot  (y_{x}^A)_e
= |\delta(u) \cap A| \cdot (y_{x}^A)_e
\enspace .
\end{equation*}
Doing the same for $v$ instead of $u$, we also get
\begin{equation*}
1 
\geq	 |\delta(v) \cap A| \cdot (y_{x}^A)_e
\enspace ,
\end{equation*}
implying together with the previous inequality that
\begin{equation*}
1 
\geq	\max\{ |\delta(u) \cap A|,  |\delta(v) \cap A| \} \cdot  (y_{x}^A)_e
\enspace .
\end{equation*}
The above shows~\eqref{eq:BipMatchCRSoptimal_main}, and thereby concludes the proof of Theorem~\ref{thm:mCRs_optimality_bipartite}.
\end{proof}

\section{A monotone contention resolution scheme for general matchings}\label{sec:genMatch}
We now turn our attention to general matchings. Throughout this section, we consider a graph $G = (V,E)$ and denote the set of all matchings of $G$ by $\match[G]$.

\subsection{Developing the scheme}

In order to get a first monotone CR scheme for a matching constraint, we can reuse our scheme for bipartite matchings. %
Given the graph $G$ and a point $x \in \matchP[G]$, we first sample a bipartite subgraph $G' = (U \dcup (V \setminus U), E')$ by including each vertex $v \in V$ in $U$ independently with probability $\frac{1}{2}$ and only keeping edges with one endpoint in $U$ and one endpoint in $V \setminus U$. Then, we simply apply our monotone CR scheme for bipartite matchings to $x|_{E'} \in \degreeP[G']$. Since an edge $e \in E$ appears in $E'$ with probability $\frac{1}{2}$, the resulting monotone scheme for $\matchP[G]$ is $(b, \frac{1}{2} \cdot \beta(b))$-balanced.\footnote{For example, the scheme has a balancedness of $\frac{1}{2} \cdot \beta(1) \geq 0.2381$ for $b=1$. %
} 
This already beats the previously best monotone CR scheme for general matchings (see Table~\ref{tab:priorResults}).
However, using our polyhedral approach again, we can easily do better.
We recall that the matching polytope $\matchP[G]$ %
can be described as
\begin{equation*}
	\matchP[G] =  \degreeP[G] \cap \bigg\{ x \in \mathbb{R}^E_{\geq 0}  \,\bigg\vert\, 
			x(E[S]) \leq \frac{|S|-1}{2} \quad \forall S \subseteq V, |S| \textup{ odd}
	\bigg\}
	\enspace ,
\end{equation*}
where $E[S] \subseteq E$ denotes the set of all edges with both endpoints in $S \subseteq V$ and $\degreeP[G] = \big\{ x \in \mathbb{R}^E_{\geq 0}  \mid  x(\delta(v)) \leq 1 \quad \forall v \in V     \big\}$. %
In particular, it is well-known and easy to see that $\frac{2}{3} \degreeP[G] \subseteq \matchP[G]$. %
 Hence, given a point $x \in \matchP[G] \subseteq \degreeP[G] $, %
  we can apply the procedure described in Section~\ref{sec:bipMatch} %
 to get a point $y_x \in \degreeP[G]$ and scale it with $\frac{2}{3}$ to obtain a point in $\matchP[G]$. %
 This way, we only lose a factor of $\frac{1}{3}$ instead of $\frac{1}{2}$, resulting in a monotone $(b,\frac{2}{3} \cdot \beta(b))$-balanced CR scheme for $\matchP[G]$.%

In the following, we first present a very simple scheme whose balancedness is already higher than $\frac{2}{3}\cdot \beta(1) \approx 0.3174$. Understanding which cases cause this scheme to have problems then enables us to give an improved monotone CR scheme for $\matchP[G]$ that proves Theorem~\ref{thm:mCRs_existence_general}.

\begin{example}\label{ex:CRscheme_randomOrder}
	Given a graph $G=(V,E)$, a point $x \in \matchP[G]$, and a set $A \subseteq \supp(x)$, the following procedure yields a matching $M_x^A \subseteq A$ in $G$.
	\begin{enumerate}[label=\normalfont\arabic*., leftmargin=4em]
		\item Choose an order on the edges in $A$ uniformly at random.
		\item Include an edge $e = \{u,v\} \in A$ in $M_x^A$ if $e$ is the first one in the order among all edges in $(\delta(u) \cup \delta(v)) \cap A$, i.e., among all edges in $A$ adjacent to $e$.
	\end{enumerate}
	Since the above procedure makes sure that no two adjacent edges get selected, it clearly yields a matching contained in $A$. Also, the probability of an edge $e = \{u,v\} \in A$ being included in $M_x^A$ decreases when $(\delta(u) \cup \delta(v)) \cap A$ gets larger, i.e., when the number of edges in $A$ adjacent to $e$ increases.
	The above therefore describes a monotone CR scheme for $\matchP[G]$.
	
	In order to determine the balancedness of this scheme, we consider its marginals.\footnote{%
		Resorting to marginals does not significantly simplify the analysis here. Having the marginals already will save us some work later, though.%
}
	For an edge $e = \{u,v\} \in A$ to be included in $M_x^A$, it has to be the first one among all edges in $(\delta(u) \cup \delta(v)) \cap A$ with respect to the random order on $A$.
	Since this happens with probability exactly $\frac{1}{|(\delta(u) \cup \delta(v)) \cap A |}$, we obtain that for $x \in \matchP[G]$ and $A \subseteq \supp (x)$, the marginal vector $y_x^A$ of this CR scheme equals
		\begin{equation*}
			\big(y^A_x\big)_e \coloneqq
			\begin{cases}
				\frac{1}{|(\delta(u)\cup\delta(v))\cap A|} &\forall e = \{u,v\} \in A\enspace,\\
				0 &\forall e\in E\setminus A \enspace.
			\end{cases}
		\end{equation*}
	An illustration is provided in Figure~\ref{fig:generalMarginalCRS}.
	Note that since $y_x^A$ is the vector of marginals of a CR scheme for $\matchP[G]$, we automatically get that $y_x^A$ is contained in the matching polytope $\matchP[G]$ (and that $\supp (y_x^A) \subseteq A$, which is trivial to see anyway).
	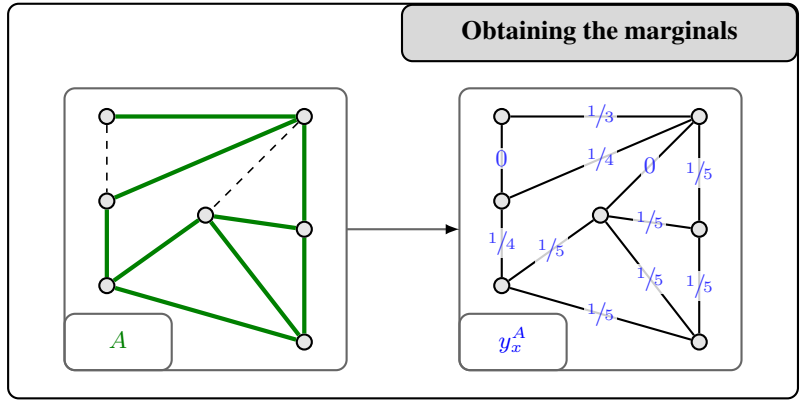
\begin{figure}[ht]
		\centering

\tikzstyle{vertex}=[circle, draw=black, thick, inner sep=1pt, minimum size=2mm, fill=black!10]
\tikzstyle{edgenode}=[rectangle, fill=white, fill opacity=0.8, text opacity=1, inner sep=1pt]
\tikzstyle{edge}=[thick, draw=black]
\tikzstyle{selectedEdge}=[ultra thick, draw=\colorofset]
\tikzstyle{discardedEdge}=[semithick, dashed, draw=black]
\tikzstyle{arc}=[thick, draw=black!60, -latex]
\tikzstyle{boxed}=[thick, draw=black!60, rounded corners]
\tikzstyle{boxedFat}=[thick, draw=black, rounded corners]
\tikzstyle{boxedLight}=[thick, draw=black, rounded corners, fill = black!15, minimum height=0.75*1cm]

\begin{tikzpicture}[scale=0.75]

\begin{scope}[shift={(3.5,0.5)}, local bounding box=marginal1]

\draw[boxed] (0,0) rectangle (5,5);
\draw[boxed] (0,0) rectangle (1.9,1);
\node[font=\footnotesize, color=\colorofset] at (0.95,0.5) {$\strut A$};

\node[vertex] (u1) at (0.75,4.5) {};
\node[vertex] (u2) at (0.75,3.0) {};
\node[vertex] (u3) at (0.75,1.5) {};
\node[vertex] (v1) at (4.25,4.5) {};
\node[vertex] (v2) at (4.25,2.5) {};
\node[vertex] (v3) at (4.25,0.5) {};
\node[vertex] (w1) at (2.50,2.75) {};

\draw[discardedEdge] (u1) -- (u2);
\draw[selectedEdge] (u1) -- (v1);
\draw[selectedEdge] (u2) -- (u3);
\draw[selectedEdge] (u2) -- (v1);
\draw[selectedEdge] (u3) -- (v3);
\draw[selectedEdge] (u3) -- (w1);
\draw[selectedEdge] (v1) -- (v2);
\draw[discardedEdge] (v1) -- (w1);
\draw[selectedEdge] (v2) -- (v3);
\draw[selectedEdge] (v2) -- (w1);
\draw[selectedEdge] (v3) -- (w1);

\end{scope}

\begin{scope}[shift={(10.5,0.5)}, local bounding box=marginal2]

\draw[boxed] (0,0) rectangle (5,5);
\draw[boxed] (0,0) rectangle (1.9,1);
\node[font=\footnotesize, color=\colorofx] at (0.95,0.5) {$\strut y_x^A$};

\node[vertex] (u1) at (0.75,4.5) {};
\node[vertex] (u2) at (0.75,3.0) {};
\node[vertex] (u3) at (0.75,1.5) {};
\node[vertex] (v1) at (4.25,4.5) {};
\node[vertex] (v2) at (4.25,2.5) {};
\node[vertex] (v3) at (4.25,0.5) {};
\node[vertex] (w1) at (2.50,2.75) {};

\draw[edge] (u1) -- node[font=\footnotesize, color=\colorofx, edgenode, pos=0.5] {$0$} (u2);
\draw[edge] (u1) -- node[font=\footnotesize, color=\colorofx, edgenode, pos=0.5] {$\sfrac{1}{3}$} (v1);
\draw[edge] (u2) -- node[font=\footnotesize, color=\colorofx, edgenode, pos=0.5] {$\sfrac{1}{4}$} (u3);
\draw[edge] (u2) -- node[font=\footnotesize, color=\colorofx, edgenode, pos=0.5] {$\sfrac{1}{4}$} (v1);
\draw[edge] (u3) -- node[font=\footnotesize, color=\colorofx, edgenode, pos=0.5] {$\sfrac{1}{5}$} (v3);
\draw[edge] (u3) -- node[font=\footnotesize, color=\colorofx, edgenode, pos=0.5] {$\sfrac{1}{5}$} (w1);
\draw[edge] (v1) -- node[font=\footnotesize, color=\colorofx, edgenode, pos=0.5] {$\sfrac{1}{5}$} (v2);
\draw[edge] (v1) -- node[font=\footnotesize, color=\colorofx, edgenode, pos=0.5] {$0$} (w1);
\draw[edge] (v2) -- node[font=\footnotesize, color=\colorofx, edgenode, pos=0.5] {$\sfrac{1}{5}$} (v3);
\draw[edge] (v2) -- node[font=\footnotesize, color=\colorofx, edgenode, pos=0.5] {$\sfrac{1}{5}$} (w1);
\draw[edge] (v3) -- node[font=\footnotesize, color=\colorofx, edgenode, pos=0.5] {$\sfrac{1}{5}$} (w1);

\end{scope}

\draw[boxedFat] ($(marginal1.north west) + (-1,1.5)$) rectangle ($(marginal2.south east) + (1,-0.5)$);

\node[boxedLight, font=\small, minimum width=0.75*7cm, anchor=north east, inner sep = 0pt] at ($(marginal2.north east)+(1,1.5)$) {\textbf{Obtaining the marginals\strut}};

\draw[arc] (marginal1.east) -- (marginal2.west);

\end{tikzpicture}  		\caption{The marginal vector corresponding to the described monotone contention resolution scheme for the matching polytope $\matchP[G]$.}\label{fig:generalMarginalCRS}
	\end{figure}

	To determine the balancedness of our scheme, we fix $x \in \matchP[G]$ and $e = \{u,v\} \in \supp (x)$. %
	We then obtain
	\begin{equation}\label{eq:gen_easyCR_balancedness}
	\begin{aligned}
	\Pr \Big[ e \in M_x^{R(x)} \;\Big\vert\; e \in R(x) \Big] 
	& =	\expval\Big[\big(y_x^{R(x)}\big)_e \;\Big\vert\; e \in R(x) \Big] \\
	& = \expval\bigg[   \frac{1}{  | (\delta(u) \cup \delta(v) ) \cap R(x)|  \}} \;\bigg\vert\; e \in R(x) \bigg] \\
	& = \expval\bigg[   \frac{1}{ 1 +  | (\delta(u) \cup \delta(v) ) \cap (R(x)\setminus \{e\})|  \}} \bigg] \\
	& \geq \expval\bigg[    \frac{1}{1 + 
		|\delta(u)\cap (R(x)\setminus \{e\})| +
		|\delta(v)\cap (R(x)\setminus \{e\})|}  \bigg]
	\enspace. 
	\end{aligned}
	\end{equation}
	Repeating the calculations in~\eqref{eq:bip_easyCR_balancedness_lb}, we see that the latter expression is at least $\frac{1}{3}$, meaning that the described scheme is $\frac{1}{3}$-balanced.
	In contrast to~\eqref{eq:bip_easyCR_balancedness}, however, the analysis is tight here.
	This is shown by the example given in Figure~\ref{fig:CRSgeneralBadBernBern}.
	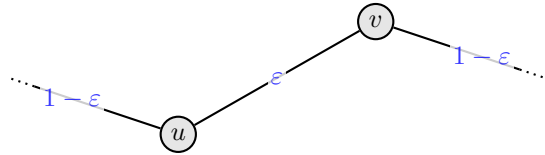
\begin{figure}[ht]
		\centering

\tikzstyle{vertex}=[circle, draw=black, thick, inner sep=1pt, minimum size=2mm, fill=black!10]
\tikzstyle{edgenode}=[rectangle, fill=white, fill opacity=0.8, text opacity=1, inner sep=1pt]
\tikzstyle{edge}=[thick, draw=black]
\tikzstyle{selectedEdge}=[ultra thick, draw=\colorofset]
\tikzstyle{discardedEdge}=[semithick, dashed, draw=black]
\tikzstyle{arc}=[thick, draw=black!60, -latex]
\tikzstyle{boxed}=[thick, draw=black!60, rounded corners]
\tikzstyle{boxedFat}=[thick, draw=black, rounded corners]
\tikzstyle{boxedLight}=[thick, draw=black, rounded corners, fill = black!15, minimum height=0.75*1cm]

\begin{tikzpicture}[scale=0.75]

\node[vertex, minimum size=10/0.75, font=\small] (u) at (-1.75,-1) {$u$};
\node[vertex, minimum size=10/0.75, font=\small] (v) at (1.75,1) {$v$};

\coordinate (u0) at ($(u) + (-3,1.0)$) {};
\coordinate (u1) at ($(u) + (-3,1.5)$) {};
\coordinate (u2) at ($(u) + (-3,2.5)$) {};
\coordinate (u4) at ($(u) + (-3,0.5)$) {};
\coordinate (u5) at ($(u) + (-3,-0.5)$) {};

\coordinate (v0) at ($(v) + (3,-1.0)$) {};
\coordinate (v1) at ($(v) + (3,-1.5)$) {};
\coordinate (v2) at ($(v) + (3,-2.5)$) {};
\coordinate (v4) at ($(v) + (3,-0.5)$) {};
\coordinate (v5) at ($(v) + (3,0.5)$) {};

\draw[edge] (u) edge node[font=\small, color=\colorofx, edgenode, pos=0.5] {$\epsilon$} (v);

\draw[edge] (u) edge node[font=\small, color=\colorofx, edgenode, pos=0.7] {$1-\epsilon$} ($(u)!0.85!(u0)$);
\draw[edge, dotted] ($(u)!0.85!(u0)$) edge (u0);

 \draw[edge] (v) edge node[font=\small, color=\colorofx, edgenode, pos=0.7] {$1-\epsilon$} ($(v)!0.85!(v0)$);
 \draw[edge, dotted] ($(v)!0.85!(v0)$) edge (v0);

\end{tikzpicture} 		\caption{The blue numbers indicate the $x$-value of each edge. For $\epsilon$ going to $0$, the balancedness of the middle edge $e=\{u,v\}$ converges to $\frac{1}{3}$. This holds since both edges adjacent to $e$ appear in $R(x)$ with probability $(1-\epsilon)^2$, %
		meaning that when $e = \{u,v\}$ appeared in $R(x)$, its two adjacent edges will most likely have appeared in $R(x)$ as well. In that case, $e = \{u,v\}$ is only selected if it is the first among these three edges in the random order of all edges that appeared in $R(x)$. 
		}\label{fig:CRSgeneralBadBernBern}
	\end{figure}
	
\end{example}

While the above already beats the previously best monotone CR scheme for matchings (see Table~\ref{tab:priorResults}), we can improve it further. %
Observe that, as in the bipartite setting, the presence of edges having large $x$-value are crucial to build a worst case for the above analysis.
However, as we already discussed in the bipartite case, one can reduce to instances with only small $x$-values through edge splitting. (This is formally stated by Lemma~\ref{lem:CRscheme_smallx_general} in Appendix~\ref{app:CRscheme_smallx}.)
Algorithm~\ref{algo:GenMatchCRS} below describes the scheme that results from applying the splitting technique to the CR scheme described in Example~\ref{ex:CRscheme_randomOrder}.
More precisely, analogous to the bipartite case, Algorithm~\ref{algo:GenMatchCRS} corresponds to splitting each edge into many parallel copies with small equal $x$-value, then applying the scheme from Example~\ref{ex:CRscheme_randomOrder}, and finally interpreting the result in the original graph. As previously, the Poisson random variables indicate how many of the parallel copies appeared. 
As we prove in the following, the CR scheme described by Algorithm~\ref{algo:GenMatchCRS} implies Theorem~\ref{thm:mCRs_existence_general}.

\begin{algorithm2e}[ht]
	\caption{Contention resolution for $\matchP[G]$}
	\label{algo:GenMatchCRS}
	\KwIn{Point $x \in \matchP[G]$, set $A \subseteq \supp(x)$.}
	\KwOut{Point $y_x^A \in \matchP$ with $\supp(y_x^A) \subseteq A$.} %
	Sample $\overline{A} \subseteq A$ by including edge $e \in A$ independently with probability  \smash{$ (1- \myexp^{-x_e})/ x_e$}\;
	For each edge $e \in \overline{A}$, let $q_e \in \mathbb{Z}_{\geq 1}$ be an independent realization of a $\PoisD{x_e}$-random variable conditioned on it being at least $1$\;
	For each edge $e \in E \setminus \overline{A}$, let $q_e \coloneqq 0$\;
	For each edge $e = \{u,v\} \in E$, let 
	\smash{$(y_x^A)_e \coloneqq \frac{ q_e }{  \sum_{g \in \delta(u) \cup \delta(v)} q_g  }  $}\label{algline:genMatchRatio}\;
	Return $y_x^A$\;
\end{algorithm2e}
For the ratio $\frac{q_e}{\sum_{g\in \delta(u)\cup \delta(v)} q_g}$ in line~\ref{algline:genMatchRatio} of the algorithm, we use the same convention as in the bipartite case, i.e., we consider this ratio to be $0$ if both numerator and denominator are $0$.
To show that Algorithm~\ref{algo:GenMatchCRS} indeed implies Theorem~\ref{thm:mCRs_existence_general}, we need to analyze the returned marginal vector $y_x^A$ for $x \in \matchP[G]$ and $A \subseteq \supp (x)$, and show that it fulfills~\ref{item:marginalSupp}--\ref{item:marginalMonotone} with the balancedness claimed in Theorem~\ref{thm:mCRs_existence_general}.

Be definition of the scheme, the returned random marginals $y_x^A$ clearly fulfill~\ref{item:marginalSupp}, i.e., $\supp(y_x^A) \subseteq A$. Moreover, the following lemma shows that~\ref{item:marginalConvComb} holds, too, i.e., $y_x^A$ is in the matching polytope $\matchP[G]$.
\begin{lemma}\label{lem:marginals_genMatchPolytope}
	Let $G=(V,E)$ be a graph and let $w \in \mathbb{R}^E_{\geq 0}$.
	Then, the vector $z \in \mathbb{R}^E_{\geq 0}$ defined by
	\begin{equation*}
		z_e \coloneqq
		\begin{cases}
			\frac{ w_e }{  \sum_{g \in \delta(u) \cup \delta(v)} w_g  }
						 &\forall e = \{u,v\} \in \supp(w) \enspace,\\
			0 &\forall e\in E\setminus \supp(w) \enspace,
		\end{cases}
	\end{equation*}
	is contained in the matching polytope $\matchP[G]$ of $G$.
\end{lemma}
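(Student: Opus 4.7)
The plan is to realize $z$ as the vector of marginal probabilities of a random matching in $G$, whence $z \in \matchP[G]$ since the matching polytope is the convex hull of characteristic vectors of matchings. To this end, I introduce independent exponential clocks: for each edge $e$ with $w_e > 0$, draw $T_e \sim \ExpD{w_e}$ (density $w_e\myexp^{-w_e t}$ on $[0,\infty)$) independently; for each edge with $w_e = 0$, set $T_e \coloneqq +\infty$. Consider the random set
\[
M \coloneqq \Big\{e = \{u,v\} \in E \;\Big\vert\; T_e = \min_{g \in \delta(u) \cup \delta(v)} T_g\Big\}.
\]
I will show (a) $M$ is almost surely a matching in $G$, and (b) $\Pr[e \in M] = z_e$ for every $e \in E$.

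For (a), suppose $M$ contained two distinct edges $e_1, e_2$ sharing a vertex. The defining condition would then force $T_{e_1} \leq T_{e_2}$ (from the minimization window containing $e_2$) and symmetrically $T_{e_2} \leq T_{e_1}$, hence $T_{e_1} = T_{e_2}$. This cannot occur if both edges lie in $\supp(w)$, since the clocks on $\supp(w)$ are independent and continuously distributed, so ties have probability zero; and it cannot occur if $w_{e_i} = 0$ for some $i$, since then $T_{e_i} = +\infty$ and $e_i \notin M$ by construction. Thus $M$ is a matching outside a null set, on which we may redefine $M \coloneqq \emptyset$ without altering any marginals.

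For (b), I invoke the standard identity that for finitely many independent $T_i \sim \ExpD{\lambda_i}$ with $\lambda_i > 0$, the probability that $T_i$ attains the minimum equals $\lambda_i / \sum_j \lambda_j$. Applied to the positive-weight clocks in $\delta(u) \cup \delta(v)$ (edges with $w_g = 0$ have $T_g = +\infty$ and do not affect the minimum), this yields $\Pr[e \in M] = w_e / \sum_{g \in \delta(u) \cup \delta(v)} w_g = z_e$ for $e \in \supp(w)$, and trivially $\Pr[e \in M] = 0 = z_e$ for $e \notin \supp(w)$. Since the distribution of $M$ is supported on finitely many matchings, $z = \expval[\chi^M]$ is a finite convex combination of characteristic vectors of matchings and hence lies in $\matchP[G]$.

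The only point requiring care is the bookkeeping around zero-weight edges and the measure-zero tie event; once these are addressed as above, the proof reduces to the elementary minimum-of-exponentials identity, which makes it considerably shorter than a splitting-and-approximation alternative in which one would replace each edge $e$ with $w_e$ parallel copies (for rational $w$), apply the CR scheme of Example~\ref{ex:CRscheme_randomOrder} to the resulting multigraph, collapse, and pass to the limit using closedness of $\matchP[G]$.
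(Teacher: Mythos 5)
Your construction is essentially the paper's proof: the paper also realizes $z$ as $\expval[\chi^M]$ for a random matching $M$ obtained by giving each edge an independent $\ExpD{w_e}$ clock and keeping an edge exactly when its clock beats those of all adjacent edges, and it also concludes via the same minimum-of-exponentials identity. The only discrepancy is the corner case you singled out as the ``point requiring care'': with your literal definition $M = \{e=\{u,v\} : T_e = \min_{g\in\delta(u)\cup\delta(v)} T_g\}$ and $T_e=+\infty$ on zero-weight edges, an edge all of whose closed neighborhood has zero weight \emph{does} satisfy the defining equality ($+\infty=+\infty$), so the claim ``$e_i\notin M$ by construction'' is false there; e.g.\ an isolated triangle of zero-weight edges lands entirely in $M$ (so (a) fails), and an isolated zero-weight edge gets $\Pr[e\in M]=1\neq 0=z_e$ (so (b) fails). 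This is a one-line repair rather than a conceptual gap: either intersect $M$ with $\supp(w)$, or require $T_e$ to be \emph{strictly} smaller than $T_g$ for all $g\in(\delta(u)\cup\delta(v))\setminus\{e\}$, or simply discard zero-weight edges up front (as the paper does by assuming $\supp(w)=E$ without loss of generality, since matchings of the subgraph are matchings of $G$ and $z$ vanishes off $\supp(w)$); with any of these the rest of your argument goes through verbatim.
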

\begin{proof}
Given $w \in \mathbb{R}_{\geq 0}^E$, note that we can without loss of generality assume that $\supp(w) = E$.
We then define a random matching $M \subseteq E$ as follows.
Letting $d_e$ for each edge $e \in E$ be a realization of an independent $\ExpD{w_e}$-random variable $D_e$, we select an edge $e = \{u,v\} \in E$ only if its realization $d_e$ is strictly smaller than all realizations $d_g$ for $g \in (\delta(u) \cup \delta(v)) \setminus \{e\}$, i.e., we consider all edges sharing (at least) one vertex with $e$. Clearly, the resulting set $M$ is a matching in $G$.
Moreover, we claim that $\expval[\chi^M] = z$, which proves $z \in \matchP[G]$.

For a fixed edge $e = \{u,v\} \in E$, we have
 	\begin{equation*}
 		\big(\expval[\chi^M] \big)_e =\Pr[e \in M] 
 		= \Pr [D_e < \min \{  D_g \mid g \in (\delta(u) \cup \delta(v)  ) \setminus \{e\}  \}    ]
 		\enspace .
 	\end{equation*}
To analyze the above expression, we use the following two properties of exponential random variables. 
For any two independent exponential random variables $\ExpD[1]{\xi}$ and $\ExpD[2]{\zeta}$, we have that:
\begin{enumerate}
\item $\min\{ \ExpD[1]{\xi}, \ExpD[2]{\zeta} \}$ has the same distribution as $\ExpD{\xi+\zeta}$, and
\item $\Pr[\ExpD[1]{\xi} < \ExpD[2]{\zeta}] = \frac{\xi}{\xi + \zeta}$.
\end{enumerate}
With this, and letting $\overline{w}_e \coloneqq \sum_{g \in (\delta(u)\cup\delta(v))\setminus\{e\}} w_g $, we get that
\begin{align*}
		\big(\expval[\chi^M] \big)_e 
		=	\Pr [D_e < \min \{  D_g \mid g \in (\delta(u) \cup \delta(v)  ) \setminus \{e\}  \}    ]
	&= \Pr [\ExpD[1]{w_e} < \ExpD[2]{ \overline{w}_e } ] \\
	&= \frac{w_e}{ w_e + \overline{w}_e  }
	= \frac{w_e}{ \sum_{g \in \delta(u)\cup\delta(v)} w_g }
	= z_e
	\enspace ,
\end{align*}
where $\ExpD[1]{w_e}$ and $ \ExpD[2]{ \overline{w}_e } $ are two independent exponential random variables with parameters $w_e$ and $\overline{w}_e$, respectively.
This proves $\expval[\chi^M] = z$, and thus Lemma~\ref{lem:marginals_genMatchPolytope}.
\end{proof}

Note that, as in Section~\ref{sec:bipMatch}, the vectors $y_x^A \in \matchP[G]$ produced by Algorithm~\ref{algo:GenMatchCRS} are random. The true marginals of the scheme are given by $\expval [ y_x^A]$, where the expectation is over the random set $\overline{A} \subseteq A$ and the realizations $q_e$ of the (conditioned) Poisson random variables. Since the vectors $y_x^A$ satisfy~\ref{item:marginalSupp} and~\ref{item:marginalConvComb}, so do the true marginals $\expval[ y_x^A ]$.

We now show that the CR scheme described by Algorithm~\ref{algo:GenMatchCRS} is monotone, i.e., it fulfills~\ref{item:marginalMonotone}, which means $\expval[(y_x^{A})_e]  \geq \expval[(y_x^{B})_e] $ for any $x\in \matchP[G]$ and $e\in A \subseteq B \subseteq \supp(x)$.

\begin{lemma}
The CR scheme defined by Algorithm~\ref{algo:GenMatchCRS} is monotone, i.e., it fulfills~\ref{item:marginalMonotone}.
\end{lemma}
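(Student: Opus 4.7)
The plan is to mirror the coupling argument used in the proof of Theorem~\ref{thm:mCRs_existence_bipartite}. Fix $x\in \matchP[G]$ and $e=\{u,v\}\in A\subseteq B\subseteq \supp(x)$. I will couple two parallel executions of Algorithm~\ref{algo:GenMatchCRS}, one on input $(x,A)$ and one on $(x,B)$, so that both the subsampling step and the Poisson step can be reused across the two runs. Concretely, for each $g\in B$ draw a single Bernoulli indicator with success probability $(1-\myexp^{-x_g})/x_g$, use it to decide whether $g\in \overline{B}$, and then set $\overline{A} \coloneqq \overline{B}\cap A$. Because for every $g\in A$ the two sampling probabilities coincide, the marginal distribution of $\overline{A}$ obtained this way is exactly the one produced by the algorithm run on $A$. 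Independently of the inclusion indicators, let $Q_g$ for $g\in E$ be independent realizations of $\PoisD{x_g}$-random variables conditioned on being at least $1$, and use the \emph{same} family $(Q_g)_{g\in E}$ in both runs; i.e., in the $A$-run set $q_g=Q_g$ for $g\in \overline{A}$ and $q_g=0$ otherwise, and analogously for the $B$-run.

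Under this coupling I would compare $(y_x^A)_e$ and $(y_x^B)_e$ pointwise on the sample space. Since $e\in A$, the identity $\overline{A}=\overline{B}\cap A$ gives $e\in\overline{A} \Leftrightarrow e\in\overline{B}$. When this event fails, both marginals equal $0$ by the convention in line~\ref{algline:genMatchRatio}. When it holds, the numerator in line~\ref{algline:genMatchRatio} equals $Q_e\ge 1$ in both runs, while the denominators are
\begin{equation*}
\sum_{g\in (\delta(u)\cup\delta(v))\cap \overline{A}} Q_g \;\le\; \sum_{g\in (\delta(u)\cup\delta(v))\cap \overline{B}} Q_g,
\end{equation*}
because $\overline{A}\subseteq \overline{B}$ and $Q_g\ge 0$. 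Hence $(y_x^A)_e \ge (y_x^B)_e$ holds pointwise on the coupled probability space.

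Taking expectations over both the coupled subsampling and the shared Poisson variables then yields $\expval[(y_x^A)_e]\ge \expval[(y_x^B)_e]$, which is exactly property~\ref{item:marginalMonotone} for the true marginals of the scheme defined by Algorithm~\ref{algo:GenMatchCRS}. I do not anticipate a substantive obstacle here: the entire argument hinges on the elementary observation that enlarging the index set of a sum of nonnegative terms in the denominator can only decrease the ratio, and on the fact that the subsampling and Poisson steps of Algorithm~\ref{algo:GenMatchCRS} are amenable to exactly the same coupling that was used in the bipartite case.
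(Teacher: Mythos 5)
Your proposal is correct and matches the paper's argument: the paper also couples the subsampling sets so that $\overline{A}=\overline{B}\cap A$, reuses the same conditioned Poisson variables $Q_g$ in both runs, and concludes from the enlarged denominator that $\expval[(y_x^{A})_e]\geq \expval[(y_x^{B})_e]$. Your pointwise phrasing of the comparison is only a cosmetic variation of the paper's conditional-expectation formulation.
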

\begin{proof}
Let $x \in \matchP[G]$ and $e=\{u,v\} \in A \subseteq B \subseteq \supp(x)$.
Moreover, for each edge $g \in E$, let $Q_g$ be an independent $\PoisD{x_g}$-random variable conditioned on it being at least $1$. 
Since it is possible to couple the random sets $\overline{A} \subseteq A$ and $\overline{B} \subseteq B$, which we get in the subsampling step of Algorithm~\ref{algo:GenMatchCRS}, such that $\overline{A} = \overline{B}\cap A  \subseteq \overline{B}$, we conclude that
\begin{align*}
\expval[(y_x^{A})_e] 
&= \Prb[ e \in \overline{A} ]  \cdot 
\expval \bigg[
\frac{Q_e}{  \sum_{g \in (\delta(u)\cup \delta(v)) \cap \overline{A}} Q_g }  \,\bigg\vert\, e \in \overline{A}	\bigg]   \\
&\geq \Prb[ e \in \overline{B} ]  \cdot 
\expval \bigg[
\frac{Q_e}{ \sum_{g \in (\delta(u)\cup \delta(v)) \cap \overline{B}} Q_g }   \,\bigg\vert\, e \in \overline{B}	\bigg]   \\
&= \expval[(y_x^{B})_e] 
\enspace ,
\end{align*}
where the expectation is taken over the random sets $\overline{A} \subseteq A$ and $\overline{B} \subseteq B$ as well as the random variables $Q_g$, $g \in E$.
This proves that~\ref{item:marginalMonotone} is satisfied.
\end{proof}

All that remains to be shown to prove Theorem~\ref{thm:mCRs_existence_general} is that the CR scheme defined by Algorithm~\ref{algo:GenMatchCRS} has a good balancedness, where the scheme has a balancedness of $c>0$ if the marginal vector $y_x^A$, for $A=R(x)$, fulfills~\ref{item:marginalBalance}.
To this end, we first present a simpler procedure with the same output distribution as $y_x^{R(x)}$. This procedure, shown below as Algorithm~\ref{algo:GenMatchCRSshort}, corresponds to merging the step of randomly rounding $x$ to obtain $R(x)$ with Algorithm~\ref{algo:GenMatchCRS} into a single procedure. Because it merges these two steps, it is not a CR scheme itself, as it does not take $R(x)$ as input to produce a feasible subset. This difference is crucial when we want to combine CR schemes for different constraints. We recall that in this case each CR scheme takes the same random set $R(x)$ as input, and one returns the intersection of the outputs of the different schemes. This is one main reason why Algorithm~\ref{algo:GenMatchCRS} cannot simply be replaced by the simpler Algorithm~\ref{algo:GenMatchCRSshort}.
Nevertheless, because the output distribution of Algorithm~\ref{algo:GenMatchCRS} applied to $R(x)$ is the same as the output of Algorithm~\ref{algo:GenMatchCRSshort}, as we show in Lemma~\ref{lem:genMatchSameDist} below, we can use the simpler Algorithm~\ref{algo:GenMatchCRSshort} to analyze the balancedness of Algorithm~\ref{algo:GenMatchCRS}.
\begin{algorithm2e}[ht]	
	\caption{Combining independent rounding with Algorithm~\ref{algo:GenMatchCRS}}
	\label{algo:GenMatchCRSshort}
	\KwIn{Point $x \in \matchP[G]$.}
	\KwOut{Point $y_x \in \matchP$ with $\supp(y_x) \subseteq \supp(x)$.}
	For each edge $e \in E$, let $q_e \in \mathbb{Z}_{\geq 0}$ be an independent realization of a $\PoisD{x_e}$-random variable\;
	For each edge $e = \{u,v\} \in E$, let 
	\smash{$(y_x)_e \coloneqq \frac{ q_e } {\sum_{g \in \delta(u) \cup \delta(v)} q_g   } $}\;
	Return $y_x$\;
\end{algorithm2e}

\begin{lemma}\label{lem:genMatchSameDist}
Let $x\in \matchP[G]$. Then the random marginal vector obtained by Algorithm~\ref{algo:GenMatchCRS} when applied to $R(x)$ has the same distribution as the random marginal vector obtained by Algorithm~\ref{algo:GenMatchCRSshort}.
\end{lemma}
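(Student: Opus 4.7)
\medskip

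\noindent\textbf{Proof plan.} The plan is to show that the joint distribution of the integer vector $q = (q_e)_{e \in E}$ produced internally by Algorithm~\ref{algo:GenMatchCRS} (when applied to $R(x)$) coincides with the joint distribution of the integer vector produced internally by Algorithm~\ref{algo:GenMatchCRSshort}. Since in both algorithms the final output $y_x^{R(x)}$ respectively $y_x$ is defined coordinatewise by the \emph{same} deterministic function of $q$, namely $e = \{u,v\} \mapsto q_e / \sum_{g \in \delta(u) \cup \delta(v)} q_g$ (with the convention $0/0 = 0$), equality of the distributions of $q$ immediately yields equality of the distributions of the output vectors.

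I would carry out the argument edge by edge, using independence. Fix $e \in E$ and trace how $q_e$ is generated in Algorithm~\ref{algo:GenMatchCRS} when the input set is $R(x)$. The edge $e$ belongs to $R(x)$ with probability $x_e$, independently of all other edges, and conditional on $e \in R(x)$ it is included in $\overline{R(x)}$ independently with probability $(1-\myexp^{-x_e})/x_e$. Hence
\[
\Pr[e \in \overline{R(x)}] \;=\; x_e \cdot \frac{1-\myexp^{-x_e}}{x_e} \;=\; 1 - \myexp^{-x_e} \;=\; \Pr[\PoisD{x_e} \geq 1],
\]
and $\Pr[e \notin \overline{R(x)}] = \myexp^{-x_e} = \Pr[\PoisD{x_e} = 0]$. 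If $e \notin \overline{R(x)}$, then $q_e = 0$; if $e \in \overline{R(x)}$, then $q_e$ is an independent $\PoisD{x_e}$ random variable conditioned on being at least $1$. Combining these cases, for every $k \geq 1$,
\[
\Pr[q_e = k] \;=\; \bigl(1 - \myexp^{-x_e}\bigr) \cdot \frac{\Pr[\PoisD{x_e} = k]}{\Pr[\PoisD{x_e} \geq 1]} \;=\; \Pr[\PoisD{x_e} = k],
\]
and $\Pr[q_e = 0] = \myexp^{-x_e} = \Pr[\PoisD{x_e} = 0]$. Therefore $q_e$ has the $\PoisD{x_e}$ distribution, which is precisely how $q_e$ is generated in Algorithm~\ref{algo:GenMatchCRSshort}.

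It remains to observe that the $q_e$'s are mutually independent in both algorithms. In Algorithm~\ref{algo:GenMatchCRSshort} this holds by construction. In Algorithm~\ref{algo:GenMatchCRS} applied to $R(x)$, the random set $R(x)$ is obtained by independent Bernoulli rounding, the subsequent subsampling $\overline{R(x)} \subseteq R(x)$ is done independently across edges, and the conditioned Poisson variables used to define $q_e$ for $e \in \overline{R(x)}$ are drawn independently across edges; hence the entire vector $q$ has independent coordinates. The joint distributions therefore agree, which proves the lemma. The argument is essentially bookkeeping; I do not anticipate a genuine obstacle, only the need to state the coupling between the Bernoulli-then-conditioned-Poisson two-step procedure and a single Poisson draw carefully.
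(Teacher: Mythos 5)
Your proposal is correct and follows essentially the same route as the paper: it matches $\Pr[e \in \overline{R}(x)] = 1-\myexp^{-x_e}$ with $\Pr[\PoisD{x_e} \geq 1]$, concludes that the Bernoulli-then-conditioned-Poisson two-step yields exactly a $\PoisD{x_e}$ marginal, and uses edgewise independence to identify the joint distributions, hence the outputs. The only (minor) extra detail you supply beyond the paper's argument is the explicit case computation of $\Pr[q_e=k]$, which is fine.
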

\begin{proof}
When using Algorithm~\ref{algo:GenMatchCRS} to round a given point $x \in \matchP[G]$, we start by independently rounding $x$ to get a set $R(x) \subseteq \supp (x)$ and then apply Algorithm~\ref{algo:GenMatchCRS} to $x$ and $A=R(x)$.
In a first step, the latter amounts to sampling a subset $\overline{A} \subseteq A$ by including each edge $e \in A=R(x)$ independently with probability $(1 - \myexp^{-x_e})/x_e$.
A given edge $e \in E$ therefore appears in the random set $\overline{A}$ independently with probability
\begin{equation}\label{eq:GenMatchCRS_survive}
\begin{aligned}
\Pr [ e \in \overline{A}] 
&= \Pr[ e \in \overline{A} \mid e \in A] \cdot \Pr[e \in A] \\
&= \Pr[ e \in \overline{A} \mid e \in A] \cdot \Pr[e \in R(x)]
= \frac{1-\myexp^{-x_e}}{x_e} \cdot x_e = 1 - \myexp^{-x_e}
\enspace ,
\end{aligned}
\end{equation}
which equals the probability that a $\PoisD{x_e}$-random variables is at least $1$ (or, equivalently, does not equal $0$).
Given $\overline{A} \subseteq A = R(x)$, Algorithm~\ref{algo:GenMatchCRS} then sets $q_e \in \mathbb{Z}_{\geq 0}$ to be an independent realization of a $\PoisD{x_e}$-random variable conditioned on it being at least $1$ if $e \in \overline{A}$, and $0$ otherwise.
Using~\eqref{eq:GenMatchCRS_survive}, we see that the resulting distribution is indeed the same as the one of a $\PoisD{x_e}$-random variable.
Since all edges are treated independently, we conclude that the random vector $y_x^{A}$ obtained by Algorithm~\ref{algo:GenMatchCRS} when applied to $x$ and $A = R(x)$ has the same distribution as the random vector $y_x$ obtained by Algorithm~\ref{algo:GenMatchCRSshort} when applied to $x$.
\end{proof}

Finally, we now prove that Algorithm~\ref{algo:GenMatchCRS} has the balancedness claimed by Theorem~\ref{thm:mCRs_existence_general}. Due to Lemma~\ref{lem:genMatchSameDist}, this boils down to showing that for $x\in b\matchP[G]$, the random marginal vector $y_x$ returned by Algorithm~\ref{algo:GenMatchCRSshort} fulfills~\ref{item:marginalBalance} for $c=\gamma(b)$.\footnote{%
	We recall that $\gamma(b) \coloneqq  \expval \big[ \frac{1}{1 +  \PoisD{2b}  }  \big] = \frac{1-\myexp^{-2b}}{2b}$ for $b \in [0,1]$, where $\PoisD{2b}$ is a Poisson random variable with parameter $2b$. For $b=1$, it holds $\gamma(1)  \geq 0.4323$.%
}
This implies by Proposition~\ref{prop:marginalCRschemes} that Algorithm~\ref{algo:GenMatchCRS} can be efficiently transformed into a monotone $(b,\gamma(b))$-balanced CR scheme, showing Theorem~\ref{thm:mCRs_existence_general}.
The line of argument is very similar to the one of the proof of Theorem~\ref{thm:mCRs_existence_bipartite} in Section~\ref{sec:bipMatch}.

\begin{proof}[Proof of Theorem~\ref{thm:mCRs_existence_general}]
	Let $b \in [0,1]$, $x \in b \matchP[G]$, and $e = \{u,v\} \in  \supp (x)$. 
	As already mentioned above, Theorem~\ref{thm:mCRs_existence_general} follows if we can show that 
	\begin{equation}\label{eq:GenMatchCRSshortBalanc}
		\expval [  (y_x)_e ]
		\geq \gamma(b) \cdot x_e  
		\enspace ,
	\end{equation}
	where $y_x$ is the random vector returned by Algorithm~\ref{algo:GenMatchCRSshort}.

	To prove~\eqref{eq:GenMatchCRSshortBalanc}, we let $Q_g$ for $g \in E$ be independent random variables, where for each edge $g \in E$,  $Q_g$ follows a $\PoisD{x_g}$-distribution.
	Moreover, we define the set $E_{uv} \subseteq E$ of all edges in $E$ that go from $u$ to $v$ (including the edge $e$ itself), and set $x_{uv} \coloneqq x (E_{uv})$, $x_u \coloneqq x (\delta (u))$, and $x_v \coloneqq x(\delta (v))$. With this, we have that
	\begin{align*}
		\expval[ (y_x)_e  ]  
		=    \expval \bigg[  \frac{Q_e}{  \sum_{g \in \delta(u) \cup \delta(v)} Q_g }   \bigg]
		=    \expval \bigg[  \frac{Q_e}{ Q_e +   \sum_{g \in \delta(u) \setminus E_{uv} }  Q_g  + \sum_{g \in \delta(v) \setminus E_{uv} } Q_g   +  \sum_{g \in E_{uv} \setminus \{e\}} Q_g }   \bigg]
		\enspace .
	\end{align*}
	In order to analyze the above expression, we use the same simplified notation that was already helpful in the proof of Theorem~\ref{thm:mCRs_existence_bipartite}. We let $\BernD[i]{\xi}$ denote a Bernoulli random variable with parameter $\xi$ and $\PoisD[j]{\zeta}$ a Poisson random variable with parameter $\zeta$, where we assume that all random variables that appear are independent when they have different subscripts. 
Recall again that the sum of two independent Poisson random variables $\PoisD[1]{\xi}$ and $\PoisD[2]{\zeta}$ has a $\PoisD{\xi+\zeta}$-distribution and that	
	$\sum_{i=1}^k \BernD[i]{\frac{\xi}{k}}$ approaches a $\PoisD{\xi}$-distribution as $k \to \infty$. Using these properties as well as dominated convergence and linearity of expectation, we obtain
	\begin{align*}
		\expval[ (y_x)_e  ]  
		&=    \expval \bigg[  \frac{ \PoisD[e]{x_e}  }{ \PoisD[e]{x_e} +  \sum_{g \in \delta(u) \setminus E_{uv} } \PoisD[g]{x_g}   + \sum_{g \in \delta(v)  \setminus E_{uv} } \PoisD[g]{x_g}    +   \sum_{g \in E_{uv} \setminus \{e\}} \PoisD[g]{x_g} }   \bigg]   \\
		& =  \expval \bigg[  \frac{ \PoisD[e]{x_e}  }{ \PoisD[e]{x_e} +  \PoisD[u]{x_u - x_{uv}} +  \PoisD[v]{x_v-x_{uv}}  + \PoisD[uv]{x_{uv} - x_e} }   \bigg]   \\	
		&= \expval \bigg[  \lim_{k \to \infty}  \frac{ \sum_{i=1}^k \BernD[i]{\frac{x_e}{k}}  }{ \sum_{i=1}^k \BernD[i]{\frac{x_e}{k}} +   \PoisD[1]{x_u + x_v -x_{uv} - x_e} }   \bigg]  \\
		&=  \lim_{k \to \infty}  \sum_{i=1}^k \expval \bigg[  \frac{ \BernD[i]{\frac{x_e}{k}}  }{ \BernD[i]{\frac{x_e}{k}} +  \sum_{j \in [k] \setminus \{i\}} \BernD[j]{\frac{x_e}{k}} +  \PoisD[1]{x_u + x_v -x_{uv} - x_e} }   \bigg]  \\
		&=  \lim_{k \to \infty}  \sum_{i=1}^k   \frac{x_e}{k} \cdot  \expval \bigg[  \frac{ 1  }{ 1+ \sum_{j \in [k] \setminus \{i\}} \BernD[j]{\frac{x_e}{k}} +  \PoisD[1]{x_u + x_v-x_{uv} - x_e} }   \bigg]	\\
		& \geq \lim_{k \to \infty}  \sum_{i=1}^k   \frac{x_e}{k} \cdot  \expval \bigg[  \frac{ 1  }{ 1+ \sum_{j=1}^k \BernD[j]{\frac{x_e}{k}} +  \PoisD[1]{x_u + x_v-x_{uv} - x_e} }   \bigg] \\
		& = x_e \cdot   \expval \bigg[ \lim_{k \to \infty}   \frac{ 1  }{ 1+ \sum_{j=1}^k \BernD[j]{\frac{x_e}{k}} +  \PoisD[1]{x_u + x_v-x_{uv} - x_e} }   \bigg] \\
		& = x_e \cdot   \expval \bigg[   \frac{ 1  }{ 1+ \PoisD[e]{x_e} +  \PoisD[1]{x_u + x_v-x_{uv} - x_e} }   \bigg] %
		\enspace ,
	\end{align*}
	which shows 
	\begin{equation}\label{eq:GenMatchCRSBalancSharp}
	\expval[ (y_x)_e  ]  \geq  x_e \cdot   \expval \bigg[   \frac{ 1  }{ 1+    \PoisD{x_u + x_v-x_{uv}}  } \bigg]
	\enspace .
	\end{equation}
	In particular, it holds that	
	\begin{equation}\label{eq:GenMatchCRSBalancDull}
	\expval[ (y_x)_e  ]  \geq  x_e \cdot   \expval \bigg[   \frac{ 1  }{ 1+    \PoisD{2b}  } \bigg]
	 = x_e \cdot \gamma(b)
	\enspace ,
	\end{equation}
	where we used that $x \in b \matchP[G]$ implies $x(\delta(w)) \leq b$ for any vertex $w \in V$.
	The above proves~\eqref{eq:GenMatchCRSshortBalanc}, and therefore shows that Algorithm~\ref{algo:GenMatchCRS} satisfies~\ref{item:marginalBalance} with $c = \gamma(b)$.
\end{proof}
We remark that even though~\eqref{eq:GenMatchCRSBalancDull} was sufficient for our purpose above,~\eqref{eq:GenMatchCRSBalancSharp} shows that in some cases, we actually have some slack. This is something we exploit in the next subsection to show that even stronger CR schemes exist for $\matchP[G]$.
However, we note that our analysis of the balancedness of our scheme is tight. This is true since we can have an edge $e = \{u,v\} \in E$ such that $x_{uv} = x (E_{uv})$ is very small while $x_u = x(\delta(u))$ and $x_v = x(\delta(v))$ are almost equal to $b$ (as depicted, for example, in Figure~\ref{fig:CRSgeneralBadBernBern} for $b=1$).

The scheme given in Algorithm~\ref{algo:GenMatchCRS} is inspired by the very simple scheme for $\matchP[G]$ that is described in~\cite{BuchbinderFeldman2018} (it was first reported in~\cite{FeldmanNaorSchwartz2011b}).
Given a point $x \in \matchP[G]$ and a set $A \subseteq \supp (x)$, this scheme first constructs a random set $\overline{A} \subseteq A$ by including each edge $e \in A$ independently with probability $(1 - \myexp^{-x_e})/x_e$. Then, it only keeps edges in $\overline{A}$ that share no endpoint with other edges in $\overline{A}$ (clearly, the resulting set is a matching).
While our scheme starts with the same subsampling, we fix the second step which is quite wasteful.
Instead of discarding \emph{all} edges in $\overline{A}$ that share an endpoint with another edge in $\overline{A}$, we only discard an edge $e \in \overline{A}$ if it shares an endpoint with another edge in $\overline{A}$ \emph{that appears before $e$ in some random ordering of $\overline{A}$}. This idea is exactly what is reflected by our marginals.

Note that in order to turn Algorithm~\ref{algo:GenMatchCRS} into a monotone $(b, \gamma(b))$-balanced CR scheme in the sense of Definition~\ref{def:CRScheme}, we have to sample a matching according to the returned marginal vector $y_x^A \in \matchP[G]$. 
One way to do this is explicitly decomposing $y_x^A$ into a convex combination of matchings in $G$, which, as mentioned in Section~\ref{sec:ourTechniques}, can be done efficiently through standard techniques.
For the marginals $y_x^A$ produced by Algorithm~\ref{algo:GenMatchCRS}, however, there is a more direct way.
Having constructed the vector $q \in \mathbb{Z}_{\geq 0}^E$, we sample for each edge $e \in E$ an independent realization $d_e$ of an $\ExpD{q_e}$-random variable. Then, we only keep an edge $e=\{u,v\} \in E$ %
 if $d_e < \min\{d_g \mid g \in (\delta(u) \cup \delta(v)) \setminus \{e\}  \}$. 
Clearly, this results in a (random) matching $M$ in $G$. Moreover, the proof of Lemma~\ref{lem:marginals_genMatchPolytope} shows that 
\begin{equation*}
\big(\expval[\chi^M] \big)_e 
= \frac{q_e}{ \sum_{g \in \delta(u)\cup\delta(v)} q_g }
\qquad  \forall e=\{u,v\} \in E
\enspace ,
\end{equation*}
implying that $\expval[\chi^M] = y_x^A$.
By replacing line~\ref{algline:genMatchRatio} of Algorithm~\ref{algo:GenMatchCRS} with the above, we could therefore turn Algorithm~\ref{algo:GenMatchCRS} into a CR scheme according to Definition~\ref{def:CRScheme}.
Nevertheless, the marginal formulation of Algorithm~\ref{algo:GenMatchCRS} helped to analyze its balancedness and will also be useful for showing non-optimality of this scheme in the next subsection.

\subsection{Non-optimality}

Since the monotone CR scheme we developed for general matchings is very simple and attains a seemingly ``natural'' balancedness, the question arises whether it is optimal. The goal of this section is to show that it is not.
Before presenting a formal proof, we provide some intuition why it can be improved.

Since our scheme for bipartite matchings achieves a higher balancedness than our scheme for general matchings, one natural way to try to strengthen our CR scheme for general matchings is to combine it with the one for bipartite matchings.
More precisely, given a graph $G=(V,E)$, a point $x \in \matchP[G]$, and a subset $A \subseteq \supp (x)$, we would first sample a set $\overline{A} \subseteq A$ again. Then, we could identify the connected components of $G'=(V, \overline{A})$ and apply to each component that is bipartite our scheme for bipartite matchings (without the subsampling), and only to the other ones our scheme for general matchings (also without the subsampling).
As we show below, this is easily seen to yield a monotone CR scheme for the general matching polytope with no lower balancedness than that of Algorithm~\ref{algo:GenMatchCRS} (see Theorem~\ref{thm:mCRs_existence_general}).
Our goal, however, is to prove that it attains a strictly higher balancedness.
For completeness, a formal description of this combined scheme is given in 
Algorithm~\ref{algo:GenMatchCRSmixed}.
\begin{algorithm2e}[ht]
	\caption{Contention resolution for $\matchP[G]$}
	\label{algo:GenMatchCRSmixed}
	\KwIn{Point $x \in \matchP[G]$, set $A \subseteq \supp(x)$.}
	\KwOut{Point $y_x^A \in \matchP$ with $\supp(y_x^A) \subseteq A$.} %
	Sample $\overline{A} \subseteq A$ by including edge $e \in A$ independently with probability  \smash{$ (1- \myexp^{-x_e})/ x_e$}\;
	For each edge $e \in \overline{A}$, let $q_e \in \mathbb{Z}_{\geq 1}$ be an independent realization of a $\PoisD{x_e}$-random variable conditioned on it being at least $1$\;
	For each edge $e \in E \setminus \overline{A}$, let $q_e \coloneqq 0$\;
	Let $\overline{A}_b$ be the set of edges in bipartite components of $G'=(V, \overline{A})$\;
	For each edge $e = \{u,v\} \in E\setminus \overline{A}_b$, let 
	\smash{$(y_x^A)_e \coloneqq \frac{ q_e }{  \sum_{g \in \delta(u) \cup \delta(v)} q_g  }  $}\;\vspace{0.15cm}
	For each edge $e = \{u,v\} \in \overline{A}_b$, let 
	\smash{$(y_x^A)_e \coloneqq \frac{ q_e } { \max\{  \sum_{g \in \delta(u)} q_g , \sum_{g \in \delta(v)} q_g  \}  } $}\;
	Return $y_x^A$\;
\end{algorithm2e}

\begin{lemma}\label{lem:mCRs_general_combined_noworse}
	Algorithm~\ref{algo:GenMatchCRSmixed} defines a monotone $(b, \gamma(b))$-balanced contention resolution scheme for the general matching polytope. %
\end{lemma}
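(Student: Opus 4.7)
The plan is to verify conditions~\ref{item:marginalSupp}--\ref{item:marginalMonotone} of Proposition~\ref{prop:marginalCRschemes} with $c=\gamma(b)$ for the random marginal vectors $y_x^A$ produced by Algorithm~\ref{algo:GenMatchCRSmixed}. Condition~\ref{item:marginalSupp} is immediate: whenever $e\notin\overline{A}\subseteq A$ we have $q_e=0$ and therefore $(y_x^A)_e=0$, so $\supp(y_x^A)\subseteq A$.

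For~\ref{item:marginalConvComb} I would argue component by component in $G'=(V,\overline{A})$. On a bipartite component, the $\max$-formula trivially yields $y_x^A(\delta(v))\leq 1$ at every vertex, and since on a bipartite graph the matching polytope coincides with the degree relaxation, the restriction of $y_x^A$ lies in the matching polytope of that component. On a non-bipartite component, Lemma~\ref{lem:marginals_genMatchPolytope} applied to $q$ restricted to the component gives the analogous conclusion. Since the components are vertex-disjoint, $\matchP[G]$ factors as the product of the matching polytopes of its components, and combining the component-wise memberships gives $y_x^A\in\matchP[G]$.

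For monotonicity~\ref{item:marginalMonotone}, I would fix $e\in A\subseteq B\subseteq\supp(x)$ and couple $\overline{A},\overline{B}$ together with the associated Poisson variables so that $\overline{A}=\overline{B}\cap A$ and the $q$-values agree on $\overline{A}$. After reducing to the case $e\in\overline{A}\subseteq\overline{B}$, let $C_A,C_B$ denote the components of $e$ in $G'_A$ and $G'_B$. The key structural observation is that $E(C_A)\subseteq E(C_B)$ and, since adding edges cannot destroy an odd cycle, $C_A$ non-bipartite forces $C_B$ non-bipartite. This leaves three cases: (a) both components bipartite, where both sides use the $\max$-formula and the denominator grows when passing from $\overline{A}$ to $\overline{B}$; (b) both non-bipartite, analogous with the sum-formula; and (c) $C_A$ bipartite, $C_B$ non-bipartite, where one needs the inequality
\[
\max\Bigl\{\sum_{g\in\delta(u)\cap\overline{A}}q_g,\;\sum_{g\in\delta(v)\cap\overline{A}}q_g\Bigr\}\;\leq\;\sum_{g\in(\delta(u)\cup\delta(v))\cap\overline{B}}q_g,
\]
which follows by bounding each of the two terms inside the max individually by the right-hand side (using $\overline{A}\subseteq\overline{B}$ and $\delta(u),\delta(v)\subseteq\delta(u)\cup\delta(v)$). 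In all three cases $(y_x^A)_e\geq(y_x^B)_e$ pointwise, so also in expectation.

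For balancedness~\ref{item:marginalBalance}, I would compare Algorithm~\ref{algo:GenMatchCRSmixed} directly with Algorithm~\ref{algo:GenMatchCRS} under a common coupling of $\overline{A}$ and $q$. The two procedures agree on edges in non-bipartite components of $G'$, while on bipartite components the mixed scheme uses $q_e/\max\{\cdot,\cdot\}\geq q_e/(\cdot+\cdot)$, so its marginals dominate those of Algorithm~\ref{algo:GenMatchCRS} pointwise. Taking expectations over $R(x)$ and invoking the $(b,\gamma(b))$-balancedness of Algorithm~\ref{algo:GenMatchCRS} from Theorem~\ref{thm:mCRs_existence_general} yields~\ref{item:marginalBalance} with $c=\gamma(b)$, after which Proposition~\ref{prop:marginalCRschemes} delivers the claimed monotone $(b,\gamma(b))$-balanced CR scheme. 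The only mildly delicate step is the bipartite-to-non-bipartite case in the monotonicity analysis; every other case reduces to a direct comparison of denominators.
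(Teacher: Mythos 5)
Your proposal is correct and follows essentially the same route as the paper: component-wise verification of~\ref{item:marginalSupp} and~\ref{item:marginalConvComb}, balancedness via the pointwise domination of the $\max$-denominator over the sum-denominator combined with the analysis of Algorithm~\ref{algo:GenMatchCRS}, and monotonicity via a coupling together with the observation that removing edges cannot make a bipartite component non-bipartite. Your explicit three-case analysis for monotonicity merely spells out in more detail what the paper states tersely, so there is nothing substantively different to compare.
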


\begin{proof}
	Given a graph $G=(V,E)$, we consider the procedure described in Algorithm~\ref{algo:GenMatchCRSmixed} and show that it satisfies~\ref{item:marginalSupp}--\ref{item:marginalMonotone} of Proposition~\ref{prop:marginalCRschemes}.
	This then implies Lemma~\ref{lem:mCRs_general_combined_noworse}.
	
	Since we can treat different connected components separately and since the procedures in Algorithm~\ref{algo:BipMatchCRS} and Algorithm~\ref{algo:GenMatchCRS} are valid CR schemes for the bipartite and the general matching polytope, respectively, we immediately get that~\ref{item:marginalSupp} and~\ref{item:marginalConvComb} are fulfilled.
	
	Moreover, %
	we note that it holds for any $q \in \mathbb{Z}^E_{\geq 0}$ and any $e = \{u,v\} \in E$ that
	\begin{equation}\label{eq:mixedCRSdomination}
	\frac{ q_e } { \max\{  \sum_{g \in \delta(u)} q_g , \sum_{g \in \delta(v)} q_g  \}  }
	\geq
	\frac{ q_e } {\sum_{g \in \delta(u) \cup \delta(v)} q_g   }
	\enspace .
	\end{equation}
	This implies that the marginals constructed by Algorithm~\ref{algo:BipMatchCRS} always dominate the marginals constructed by Algorithm~\ref{algo:GenMatchCRS}.
	In particular, the balancedness of this new CR scheme will be no worse than the balancedness of our previous scheme for general matchings (Algorithm~\ref{algo:GenMatchCRS}), which was shown to be $(b, \gamma(b))$-balanced.
	
	In order to get monotonicity of our scheme, it only remains to observe that whenever an edge belongs to a bipartite connected component of some graph, then removing some other edges from the graph will not change this (the connected component might get smaller, though).
	Together with~\eqref{eq:mixedCRSdomination} and the fact that Algorithm~\ref{algo:BipMatchCRS} and Algorithm~\ref{algo:GenMatchCRS} are monotone schemes, this implies that Algorithm~\ref{algo:GenMatchCRSmixed} is monotone as well.

	Since the above shows that Algorithm~\ref{algo:GenMatchCRSmixed} also satisfies~\ref{item:marginalBalance} and~\ref{item:marginalMonotone}, Lemma~\ref{lem:mCRs_general_combined_noworse} follows.
\end{proof}

In the following, we provide a formal proof that Algorithm~\ref{algo:GenMatchCRSmixed} actually achieves a strictly higher balancedness than Algorithm~\ref{algo:GenMatchCRS}.
First, however, we try to build up intuition why this should indeed be true.
Given a graph $G=(V,E)$ and a point $x \in \matchP[G]$, consider an edge $e = \{u,v\} \in \supp (x)$.
As an example, assume that after the independent rounding and the subsampling step, $e$ ended up in a connected component consisting of $e=\{u,v\}$, one other edge $g$ incident to $u$ (but not $v$), and one other edge $h$ incident to $v$ (but not $u$). Moreover, assume that the respective (conditioned) Poisson random variables all realized to $1$, i.e., we have $q_e = q_g = q_h = 1$.
As depicted in Figure~\ref{fig:genCRSloss}, this can happen in two different ways.
For the edge $e$, the scheme from Algorithm~\ref{algo:GenMatchCRS} does not distinguish between these two cases and assigns the same value of $\frac{1}{3}$ to $e$.
On the other hand, in the case where $e$, $g$, and $h$ do not form a triangle, our new scheme from Algorithm~\ref{algo:GenMatchCRSmixed} applies the scheme for bipartite matchings to this component. This results in $e$ being assigned a value of $\frac{1}{2}$ instead of $\frac{1}{3}$.
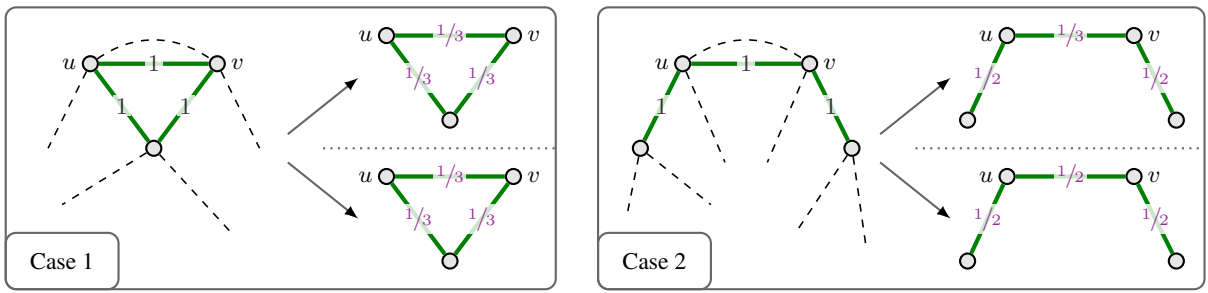
\begin{figure}[ht]
	\centering

\tikzstyle{vertex}=[circle, draw=black, thick, inner sep=1pt, minimum size=2mm, fill=black!10]
\tikzstyle{edgenode}=[rectangle, fill=white, fill opacity=0.8, text opacity=1, inner sep=1pt]
\tikzstyle{edge}=[thick, draw=black]
\tikzstyle{selectedEdge}=[ultra thick, draw=\colorofset]
\tikzstyle{discardedEdge}=[semithick, dashed, draw=black]
\tikzstyle{arc}=[thick, draw=black!60, -latex]
\tikzstyle{separator}=[thick, draw=black!60, dotted]
\tikzstyle{boxed}=[thick, draw=black!60, rounded corners]
\tikzstyle{boxedFat}=[thick, draw=black, rounded corners]
\tikzstyle{boxedLight}=[thick, draw=black, rounded corners, fill = black!15, minimum height=0.75*1cm]

\begin{tikzpicture}[scale=0.75]

\begin{scope}[]

\draw[boxed] (0,0) rectangle (9.75,5);
\draw[boxed] (0,0) rectangle (2,1);
\node[font=\footnotesize] at (1,0.5) {Case 1};

\node[vertex] (u) at (1.5,4) {};
\node[left=0.05, font=\footnotesize] at (u) {$u$};
\node[vertex] (v) at (3.75,4) {};
\node[right=0.05, font=\footnotesize] at (v) {$v$};
\node[vertex] (w0) at (2.625,2.5) {};
\coordinate (w1) at (0.75,2.5) {};
\coordinate (w2) at (4.5,2.5) {};
\coordinate (w3) at (1,1.5) {};
\coordinate (w4) at (4,1) {};

\draw[selectedEdge] (u) -- node[font=\footnotesize, color=black, edgenode, pos=0.5] {$1$} (v);
\draw[selectedEdge] (u) -- node[font=\footnotesize, color=black, edgenode, pos=0.5] {$1$} (w0);
\draw[selectedEdge] (v) -- node[font=\footnotesize, color=black, edgenode, pos=0.5] {$1$} (w0);
\draw[discardedEdge, bend angle=35, bend left] (u) to (v);
\draw[discardedEdge] (u) -- (w1);
\draw[discardedEdge] (v) -- (w2);
\draw[discardedEdge] (w0) -- (w3);
\draw[discardedEdge] (w0) -- (w4);

\node[vertex] (1u) at (6.75,4.5) {};
\node[left=0.05, font=\footnotesize] at (1u) {$u$};
\node[vertex] (1v) at (9,4.5) {};
\node[right=0.05, font=\footnotesize] at (1v) {$v$};
\node[vertex] (1w0) at (7.875,3) {};
\draw[selectedEdge] (1u) -- node[font=\footnotesize, color=\colorofy, edgenode, pos=0.5] {$\sfrac{1}{3}$} (1v);
\draw[selectedEdge] (1u) -- node[font=\footnotesize, color=\colorofy, edgenode, pos=0.5] {$\sfrac{1}{3}$} (1w0);
\draw[selectedEdge] (1v) -- node[font=\footnotesize, color=\colorofy, edgenode, pos=0.5] {$\sfrac{1}{3}$} (1w0);

\node[vertex] (2u) at (6.75,2) {};
\node[left=0.05, font=\footnotesize] at (2u) {$u$};
\node[vertex] (2v) at (9,2) {};
\node[right=0.05, font=\footnotesize] at (2v) {$v$};
\node[vertex] (2w0) at (7.875,0.5) {};
\draw[selectedEdge] (2u) -- node[font=\footnotesize, color=\colorofy, edgenode, pos=0.5] {$\sfrac{1}{3}$} (2v);
\draw[selectedEdge] (2u) -- node[font=\footnotesize, color=\colorofy, edgenode, pos=0.5] {$\sfrac{1}{3}$} (2w0);
\draw[selectedEdge] (2v) -- node[font=\footnotesize, color=\colorofy, edgenode, pos=0.5] {$\sfrac{1}{3}$} (2w0);

\draw[arc] (5,2.75) -- (6.25,3.75);
\draw[arc] (5,2.25) -- (6.25,1.25);

\draw[separator] (5.625,2.5) -- (9.75,2.5);

\end{scope}

\begin{scope}[shift={(10.5,0)}]

\draw[boxed] (0,0) rectangle (10.75,5);
\draw[boxed] (0,0) rectangle (2,1);
\node[font=\footnotesize] at (1,0.5) {Case 2};

\node[vertex] (u) at (1.5,4) {};
\node[left=0.05, font=\footnotesize] at (u) {$u$};
\node[vertex] (v) at (3.75,4) {};
\node[right=0.05, font=\footnotesize] at (v) {$v$};
\node[vertex] (w0) at (0.75,2.5) {};
\node[vertex] (w1) at (4.5,2.5) {};
\coordinate (w2) at (2.25,2.25) {};
\coordinate (w3) at (3,2.25) {};
\coordinate (w4) at (0.5,1.25) {};
\coordinate (w5) at (2,1.5) {};
\coordinate (w6) at (3.5,1) {};
\coordinate (w7) at (4.75,0.75) {};

\draw[selectedEdge] (u) -- node[font=\footnotesize, color=black,, edgenode, pos=0.5] {$1$} (v);
\draw[selectedEdge] (u) -- node[font=\footnotesize, color=black, edgenode, pos=0.5] {$1$} (w0);
\draw[selectedEdge] (v) -- node[font=\footnotesize, color=black, edgenode, pos=0.5] {$1$} (w1);
\draw[discardedEdge, bend angle=35, bend left] (u) to (v);
\draw[discardedEdge] (u) -- (w2);
\draw[discardedEdge] (v) -- (w3);
\draw[discardedEdge] (w0) -- (w4);
\draw[discardedEdge] (w0) -- (w5);
\draw[discardedEdge] (w1) -- (w6);
\draw[discardedEdge] (w1) -- (w7);

\node[vertex] (1u) at (7.25,4.5) {};
\node[left=0.05, font=\footnotesize] at (1u) {$u$};
\node[vertex] (1v) at (9.5,4.5) {};
\node[right=0.05, font=\footnotesize] at (1v) {$v$};
\node[vertex] (1w0) at (6.55,3) {};
\node[vertex] (1w1) at (10.25,3) {};
\draw[selectedEdge] (1u) -- node[font=\footnotesize, color=\colorofy, edgenode, pos=0.5] {$\sfrac{1}{3}$} (1v);
\draw[selectedEdge] (1u) -- node[font=\footnotesize, color=\colorofy, edgenode, pos=0.5] {$\sfrac{1}{2}$} (1w0);
\draw[selectedEdge] (1v) -- node[font=\footnotesize, color=\colorofy, edgenode, pos=0.5] {$\sfrac{1}{2}$} (1w1);

\node[vertex] (2u) at (7.25,2) {};
\node[left=0.05, font=\footnotesize] at (2u) {$u$};
\node[vertex] (2v) at (9.5,2) {};
\node[right=0.05, font=\footnotesize] at (2v) {$v$};
\node[vertex] (2w0) at (6.55,0.5) {};
\node[vertex] (2w1) at (10.25,0.5) {};
\draw[selectedEdge] (2u) -- node[font=\footnotesize, color=\colorofy, edgenode, pos=0.5] {$\sfrac{1}{2}$} (2v);
\draw[selectedEdge] (2u) -- node[font=\footnotesize, color=\colorofy, edgenode, pos=0.5] {$\sfrac{1}{2}$} (2w0);
\draw[selectedEdge] (2v) -- node[font=\footnotesize, color=\colorofy, edgenode, pos=0.5] {$\sfrac{1}{2}$} (2w1);

\draw[arc] (5,2.75) -- (6.25,3.75);
\draw[arc] (5,2.25) -- (6.25,1.25);

\draw[separator] (5.625,2.5) -- (10.75,2.5);

\end{scope}

\end{tikzpicture}  	\caption{It can happen in two different ways that the edge $e = \{u,v\}$ ends up in a component consisting of $e$, one other edge incident to $u$ (but not $v$), and one other edge incident to $v$ (but not $u$), where the respective (conditioned) Poisson random variables all realized to $1$.
		For both cases, the realizations of the (conditioned) Poisson variables are shown on the left, while the values set by the schemes from Algorithm~\ref{algo:GenMatchCRS} and Algorithm~\ref{algo:GenMatchCRSmixed} are given on the top and bottom right, respectively. 
		Since the component containing $e$ is not bipartite in Case~1, there is no difference between the two schemes.
		In Case~2, however, Algorithm~\ref{algo:GenMatchCRSmixed} assigns a value of $\frac{1}{2}$ instead of $\frac{1}{3}$ to $e$.
	}\label{fig:genCRSloss}
\end{figure}

Besides the example described above, there are many more settings in which an edge ends up in a bipartite connected component and gains, to different extents, from applying the bipartite scheme to this component.
A sharp analysis of our new procedure would require to consider all these different cases, and therefore seems very difficult.
However, as mentioned in the introduction already, the main motivation for studying this scheme is showing that one can improve on the scheme from Algorithm~\ref{algo:GenMatchCRS}.
To do this, an exact analysis is not needed; in fact, it is actually enough to only account for the case described above %
 (Case~2 in Figure~\ref{fig:genCRSloss}). 
Moreover, we only consider the balancedness for $b=1$.
This enables us to show  that Algorithm~\ref{algo:GenMatchCRSmixed} yields a monotone $(1, \gamma(1) + 0.0003)$-balanced CR scheme for the matching polytope $\matchP[G]$, thereby proving non-optimality of the scheme given in Algorithm~\ref{algo:GenMatchCRS}.\footnote{%
	Since the analysis of the balancedness of the scheme from Algorithm~\ref{algo:GenMatchCRS} was tight, getting a monotone $(1, \gamma(1) + \epsilon)$-balanced CR scheme for some $\epsilon > 0$ really does constitute an improvement.%
}

\begin{proof}[Proof of Theorem~\ref{thm:mCRs_existence_general_improved}]
Given a graph $G=(V,E)$, we consider the procedure defined by Algorithm~\ref{algo:GenMatchCRSmixed}.
From Lemma~\ref{lem:mCRs_general_combined_noworse}, we already know that it is a monotone CR scheme for the general matching polytope.
Hence, it only remains to consider the balancedness. %

	To do this, we fix a point $x \in \matchP[G]$ and an edge $e = \{u,v\} \in \supp (x)$. 
	We would like to show that conditioned on $e$ appearing in the independently rounded set $R(x)$, the following happens with constant probability (see Case~2 in Figure~\ref{fig:genCRSloss}): 
	\begin{enumerate}[label=\normalfont($C_\arabic*$), leftmargin=4em]
		\item\label{item:connectCompSurvive} $e=\{u,v\}$ survives the subsampling step, %
		\item\label{item:connectCompPath} it ends up in a connected component that is a path of length~$3$ with $e$ as middle edge,
		\item\label{item:connectCompPois} the realizations of the Poisson random variables of the edges in this path all equal~$1$. 
	\end{enumerate}
	The last condition is important since the gain from using the bipartite scheme could become smaller when the Poisson random variables take on different values.
	
	For an arbitrary $x \in \matchP[G]$, however, the above does not happen with constant probability.
	The good news is that it can only fail for $x \in \matchP[G]$ for which the balancedness of the edge $e$ is higher than $\gamma(1)$ already, namely in the following cases.
	 If the $x$-load $x_u \coloneqq x(\delta(u))$ or $x_v \coloneqq x(\delta(v))$ on one of the endpoints of $e$ is bounded away from $1$, then~\eqref{eq:GenMatchCRSBalancSharp} shows that the balancedness of $e$ is strictly larger than $\gamma (1) = \expval \big[ \frac{1}{1+\PoisD{2}}  \big]$. The same is also true if the $x$-load $x_{uv} \coloneqq x(E_{uv})$ between the endpoints of $e$ is bounded away from $0$, where $E_{uv}$ again denotes the set of all edges in $E$ that go from $u$ to $v$ (including $e$).
	This happens, for example, if $x_e$ itself is bounded away from $0$.
		More precisely, if $x(\delta(u) \setminus E_{uv}) < 0.99$, $x(\delta(u) \setminus E_{uv}) < 0.99$, or $x(E_{uv}) > 0.01$, it follows from~\eqref{eq:mixedCRSdomination} and~\eqref{eq:GenMatchCRSBalancSharp} that
	\begin{equation*}
		\expval[ (y_x^{R(x)})_e  ]  \geq  x_e \cdot   \expval \bigg[   \frac{ 1  }{ 1+    \PoisD{1.99}  } \bigg]
		= x_e \cdot \frac{1 - \myexp^{-1.99}}{1.99}
		\geq x_e \cdot 0.4338
		\geq x_e \cdot (\gamma(1) + 0.0003)
		\enspace .
	\end{equation*}
	Hence, we can without loss of generality assume that $x(\delta(u) \setminus E_{uv}) \geq 0.99$, $x(\delta(u) \setminus E_{uv}) \geq 0.99$, and $x(E_{uv}) \leq 0.01$.

	For such an $x \in \matchP[G]$, we can show that conditioned on $e \in R(x)$,~\ref{item:connectCompSurvive}--\ref{item:connectCompPois} happens with probability at least $0.0018$. This is the statement of Lemma~\ref{lem:CRSmixed_condProb} in Appendix~\ref{app:CRSmixed_condProb}, where we also provide a formal proof.
	Thus, using the scheme for bipartite matchings in this case increases the balancedness of $e$ by at least
	\begin{equation}
		0.0018 \cdot \Big(\frac{1}{2} - \frac{1}{3} \Big) 
		= 0.0018 \cdot \frac{1}{6} 
		= 0.0003
		\enspace .
	\end{equation}
	This means that the overall balancedness of $e$ is at least $\gamma(1) + 0.0003$, as desired.
\end{proof}

As was done with Algorithm~\ref{algo:BipMatchCRS} and Algorithm~\ref{algo:GenMatchCRS}, we can also combine the independent rounding step with Algorithm~\ref{algo:GenMatchCRSmixed}.
The resulting procedure, which is given in Algorithm~\ref{algo:GenMatchCRSmixedshort}, only takes a point $x \in \matchP[G]$ and returns a random vector $y_x \in \matchP[G]$ that has the same distribution as the random vector $y_x^{R(x)}$ returned by Algorithm~\ref{algo:GenMatchCRS} when applied to $R(x)$.
One can also interpret Algorithm~\ref{algo:GenMatchCRSmixedshort} as the combination of Algorithm~\ref{algo:BipMatchCRSshort} and Algorithm~\ref{algo:GenMatchCRSshort}.
Besides being useful for rounding points $x \in \matchP[G]$, Algorithm~\ref{algo:GenMatchCRSmixedshort} also helps to simplify the proof of Lemma~\ref{lem:CRSmixed_condProb} in Appendix~\ref{app:CRSmixed_condProb}.

\begin{algorithm2e}[ht]	
	\caption{Combining independent rounding with Algorithm~\ref{algo:GenMatchCRSmixed}}
	\label{algo:GenMatchCRSmixedshort}
	\KwIn{Point $x \in \matchP[G]$.}
	\KwOut{Point $y_x \in \matchP$ with $\supp(y_x) \subseteq \supp(x)$.}
	For each edge $e \in E$, let $q_e \in \mathbb{Z}_{\geq 0}$ be an independent realization of a $\PoisD{x_e}$-random variable\;
	Let $E_b$ be the set of edges in bipartite components of $ G' =(V,\supp (q) )$\;
	For each edge $e = \{u,v\} \in E \setminus E_b$, let 
	\smash{$(y_x)_e \coloneqq \frac{ q_e } {\sum_{g \in \delta(u) \cup \delta(v)} q_g   } $}\;\vspace{0.15cm}
	For each edge $e = \{u,v\} \in E_b$, let 
	\smash{$(y_x)_e \coloneqq \frac{ q_e } { \max\{  \sum_{g \in \delta(u)} q_g , \sum_{g \in \delta(v)} q_g  \}  } $}\;
	Return $y_x$\;
\end{algorithm2e}

\section{Conclusion}

In this work, we introduced a novel, very general technique for the construction and analysis of contention resolution schemes. 
The main idea is to take a different viewpoint and consider the marginals rather than the actual sets returned by the scheme, which allows for using polyhedral techniques.
We demonstrated the usefulness of this technique by presenting improved monotone contention resolution schemes for the bipartite and the general matching polytope.
While our technique allowed for giving simple analyses of our schemes, it moreover enabled us to prove that our scheme for bipartite matchings is optimal among all monotone such schemes.
For general matchings, we first presented a new monotone contention resolution scheme %
whose balancedness significantly improves on prior work and matches the previously best known lower bound on the correlation gap of the matching polytope. While the existence of a (not necessarily monotone) contention resolution scheme attaining this balancedness already followed from the lower bound on the correlation gap, it was not known before whether the same can also be achieved by a monotone scheme.
Moreover, we could show that combining our schemes for bipartite and general matchings yields a monotone contention resolution scheme for the general matching polytope with a strictly higher balancedness than that of the scheme for general matchings alone. %
However, we do not know the exact balancedness that our combined scheme attains.
More generally, it is still open what balancedness can be achieved by a monotone contention resolution scheme for general matchings.
Both our results on monotone contention resolution schemes for bipartite and general matchings also improved the best known lower bounds on the correlation gap of the bipartite and the general matching polytope, respectively.

\appendix

\section{Appendix}

\subsection{Contention resolution for points with small components}\label{app:CRscheme_smallx}

In the following, we prove a generalized version of Lemma~\ref{lem:CRscheme_smallx_bipmatch} which shows that for many constraint types, the existence of a monotone $c$-balanced contention resolution scheme follows from the existence of such a scheme that is guaranteed to achieve this balancedness for points with small components only.
The idea is that whenever we have an element with a large $x$-value, we ``split'' this element into so-called ``siblings'' that have the same properties with respect to feasibility and distribute the $x$-value uniformly among them. We then apply the CR scheme for this new instance and map the set that we get back to a feasible set of the original instance. Since an element that got split can only be taken once, we have to ensure that the CR scheme selects at most one of the siblings at a time.
In the following, we formalize this idea.

\begin{definition}\label{def:siblings}
	Consider an independence system $(E, \mathcal{F})$.
	We call distinct elements $e_1, \ldots, e_k \in E$ \emph{siblings} (or \emph{siblings of each other}) if
	\begin{enumerate}[label=\normalfont(\roman*), leftmargin=4em]
		\item for any subset $S \subseteq E \setminus \{e_1, \ldots, e_k\}$ and $i,j \in [k]$, it holds that $S \cup \{e_i\} \in \mathcal{F}$ if and only if $S \cup \{e_j\}\in \mathcal{F}$, and
		\item every subset $S \in \mathcal{F}$ contains at most one of the elements $e_1, \ldots, e_k$. 	
	\end{enumerate}
\end{definition}

As an example, let $G=(V,E)$ be a graph and let $\match[G] \subseteq 2^E$ be the family of all matchings in $G$. If we have more than one edge between a given pair of vertices, then these edges are easily seen to be siblings according to the above definition.

\begin{definition}\label{def:splitting}
	Consider an independence system $(E, \mathcal{F})$, an element $e \in E$, and $k \in \mathbb{Z}_{> 0}$.
	We define the independence system $(E', \mathcal{F}')$ that is obtained by \emph{splitting $e$ into $k$ siblings} by
	\begin{equation*}
		E' \coloneqq E \dcup \{e_1, \ldots, e_{k-1}\}
		\quad \textup{and} \quad
		\mathcal{F}' \coloneqq  \mathcal{F}
							\cup  \{  (S\setminus\{e\}) \cup \{e_i\} \mid e \in S \in \mathcal{F}, i \in [k-1]  \}
				\enspace .
	\end{equation*}
	This means that $k-1$ new elements $e_1, \ldots, e_{k-1}$ are added to $E$ such that $e, e_1, \ldots, e_{k-1}$ are siblings.
\end{definition}

In the example mentioned above, where we have a graph $G=(V,E)$ and the family of all its matchings $\match[G]$, splitting an edge $e \in E$ into $k$ siblings is exactly the same as simply adding $k-1$ new edges that are parallel to $e$. Also, the new family %
 that we get from $\match[G]$ is simply the family of all matchings in the new graph. 
Hence, the considered constraint type did not change. 
This is important for our reduction to work. 

\begin{definition}\label{def:familiesClosedSplitting}
	Consider a family of independence systems $\big( (E_i, \mathcal{F}_i ) \big)_{i \in I}$.
	We say that such a family is \emph{closed under splitting into siblings} if for any independence system $(E, \mathcal{F})$ it contains, any element $e \in E$, and any $k \in \mathbb{Z}_{>0}$, 
	the independence system $(E',\mathcal{F}')$ that is obtained by splitting $e$ into $k$ siblings is also a member of $\big( (E_i, \mathcal{F}_i ) \big)_{i \in I}$.\footnote{%
		Formally, it is enough if the family contains another independence system $(E_j, \mathcal{F}_j)$ such that there is a bijection $\rho \colon E' \to E_j$ which preserves feasibility.
}
\end{definition}

Some examples of families of independence systems that are easily seen to be closed under splitting into siblings are:
\begin{itemize}
	\item The family of all graphic matroids.
	\item The family of all linear matroids.
	\item The family of all (general) matroids.
	\item The family of all bipartite graphs $G = (V, E)$, where the feasible subsets of $E$ are the matchings in $G$.
	\item The family of all (general) graphs $G = (V, E)$, where the feasible subsets of $E$ are the matchings in $G$.
\end{itemize}

For such families of independence systems, the following lemma shows that it is enough to have a monotone contention resolution scheme that is guaranteed to be $c$-balanced for vectors with small components only.
\begin{lemma}\label{lem:CRscheme_smallx_general}
	Consider a family of independence systems $\big( (E_i, \mathcal{F}_i ) \big)_{i \in I}$ that is closed under splitting into siblings.
	Moreover, let $b,c \in [0,1]$ and $\epsilon > 0$.
	If there is a monotone contention resolution scheme for all combinatorial polytopes $P_{\mathcal{F}_i}$, $i \in I$, that is $c$-balanced for all $x \in b P_{\mathcal{F}_i}$ with $x \leq \epsilon \cdot \chi^{E_i}$, then there exists a monotone $(b,c)$-balanced contention resolution scheme for all combinatorial polytopes $P_{\mathcal{F}_j}$, $j \in I$ (without any restriction).
\end{lemma}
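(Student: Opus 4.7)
The plan is to reduce to the small-$x$ case by element splitting. Given any $(E,\mathcal{F})$ in the family, I would set $k \coloneqq \lceil 1/\epsilon \rceil$ and split each element $e \in E$ into $k$ siblings $e_1, \ldots, e_k$, obtaining a system $(E',\mathcal{F}')$ that by closure is again in the family. The natural split point $x' \in \mathbb{R}^{E'}_{\geq 0}$ defined by $x'_{e_i} \coloneqq x_e/k$ satisfies $x' \in bP_{\mathcal{F}'}$ --- this can be checked by writing $x/b = \sum_i \lambda_i \chi^{S_i}$ for $S_i \in \mathcal{F}$ and observing that $x'/b$ is the expectation, over independent uniformly random sibling-assignments $\phi_i: S_i \to [k]$, of $\sum_i \lambda_i \chi^{\phi_i(S_i)}$ --- and clearly $x' \leq (1/k)\chi^{E'} \leq \epsilon\chi^{E'}$. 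The hypothesis then provides a monotone CR scheme $\pi'$ that is $c$-balanced on $(x',\cdot)$.

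Given such $\pi'$, I would construct a CR scheme $\pi$ for the original system as follows. For input $(x, A)$ with $A \subseteq \supp(x)$: for each $e \in E$ independently, if $e \notin A$ set $n_e \coloneqq 0$; if $e \in A$, with probability $q(x_e) \coloneqq ((1-x_e/k)^k - (1-x_e))/x_e$ (which lies in $[0,1]$ by Bernoulli's inequality) set $n_e \coloneqq 0$, and otherwise sample $n_e$ from $\BinD{k}{x_e/k}$ conditioned on $n_e \geq 1$. Pick a uniformly random subset $I_e \subseteq [k]$ of size $n_e$, set $A' \coloneqq \{e_i : e \in E, i \in I_e\}$, apply $\pi'$ to $(x', A')$ to obtain $S' \in \mathcal{F}'$, and return $S \coloneqq \{e : e_i \in S' \text{ for some } i\}$.

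The central feature of this construction --- and the reason for the probability $q(x_e)$ --- is that when $A = R(x)$, a direct computation shows $\Pr[n_e = j] = \binom{k}{j}(x_e/k)^j(1-x_e/k)^{k-j}$ for all $j \in \{0,\ldots,k\}$, so $A'$ is distributed exactly as $R(x')$. With this I would verify properties~\ref{item:marginalSupp}--\ref{item:marginalMonotone} of Proposition~\ref{prop:marginalCRschemes}. For~\ref{item:marginalSupp}: $e_i \in A'$ forces $n_e \geq 1$, hence $e \in A$. For~\ref{item:marginalConvComb}: $S'$ contains at most one sibling of each $e$, so $S$ is the image of $S'$ under the sibling-to-original map and lies in $\mathcal{F}$ by the definition of splitting. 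For~\ref{item:marginalBalance}, since $A' \sim R(x')$ and at most one $e_i$ can be in $\pi'_{x'}(R(x'))$,
\[
	\Pr[e \in S] = \sum_{i=1}^k \Pr[e_i \in \pi'_{x'}(R(x'))] \geq c \sum_{i=1}^k x'_{e_i} = c \cdot x_e,
\]
and the inclusion $\{e \in S\} \subseteq \{e \in R(x)\}$ yields $\Pr[e \in S \mid e \in R(x)] \geq c$.

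For~\ref{item:marginalMonotone}, given $e \in A \subseteq B$, I would couple the two runs so that the samples $(n_f, I_f)$ for $f \in A$ and the internal randomness of $\pi'$ are identical in both runs, with the extra randomness for $f \in B \setminus A$ used only in the $B$-run. Then $A'_A \subseteq A'_B$ with $A'_A \cap \{e_1, \ldots, e_k\} = A'_B \cap \{e_1, \ldots, e_k\}$, so applying monotonicity of $\pi'$ sibling-wise to $e$ and summing over $i$ gives $\Pr[e \in \pi_x(A)] \geq \Pr[e \in \pi_x(B)]$. The main obstacle is the mismatch $\Pr[e \in R(x)] = x_e \geq 1 - (1-x_e/k)^k = \Pr[\exists i,\, e_i \in R(x')]$; the extra mass on $\{e \in R(x)\}$ must be ``absorbed'' by sometimes setting $n_e = 0$ even when $e \in A$, which is exactly what the probability $q(x_e)$ is tuned to achieve.
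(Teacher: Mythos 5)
Your proposal is correct and follows essentially the same route as the paper: split elements into siblings, lift $A$ to the split ground set by a randomized rule tuned so that $R(x)$ maps to a set distributed exactly as $R(x')$ (your drop probability $q(x_e)=((1-x_e/k)^k-(1-x_e))/x_e$ is precisely the paper's implicit probability of choosing an empty sibling set $D$), then map the output of the small-$x$ scheme back and transfer balancedness via the fact that a feasible set contains at most one sibling. The only difference is organizational: the paper proves a one-element splitting lemma and iterates it, whereas you split all elements simultaneously into $\lceil 1/\epsilon\rceil$ copies and additionally spell out the (easy) verification that $x'\in bP_{\mathcal{F}'}$ and the monotonicity coupling, which the paper leaves implicit.
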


As observed above, the family of all bipartite graphs $G = (V, E)$, where the feasible subsets of $E$ are the matchings in $G$, is closed under splitting into siblings. Therefore, Lemma~\ref{lem:CRscheme_smallx_bipmatch} is an immediate consequence of Lemma~\ref{lem:CRscheme_smallx_general}.

Instead of proving Lemma~\ref{lem:CRscheme_smallx_general} directly, we show the statement below. %
By iteratively applying it, Lemma~\ref{lem:CRscheme_smallx_general} follows.

\begin{lemma}\label{lem:CRscheme_smallx_oneElement}
	Let $b,c \in [0,1]$ and consider an independence system $(E, \mathcal{F})$. For $e \in E$ and $k \in \mathbb{Z}_{> 0}$, let $(E', \mathcal{F}')$ be the independence system obtained by splitting $e$ into $k$ siblings $e, e_1, \ldots, e_{k-1}$ (see Definition~\ref{def:splitting}).
	Moreover, fix $x \in b P_{\mathcal{F}}$ and let $x' \in b P_{\mathcal{F}'}$ be the point we get from $x$ by uniformly distributing $x_e$ onto the elements $e, e_1,\ldots, e_{k-1}$ , i.e., 
		\begin{equation*}
		x'_g \coloneqq \begin{cases}
			x_g & \textup{if } g \notin \{e,e_1, \ldots, e_{k-1} \}  \enspace ,\\
			\frac{x_e}{k} & \textup{if } g \in \{e,e_1, \ldots, e_{k-1} \}  \enspace ,
		\end{cases} \qquad \forall g\in E'\enspace.
	\end{equation*}	
	If there is a monotone contention resolution scheme for $P_{\mathcal{F}'}$ %
	that is $c$-balanced for $x'$, then there exists a monotone contention resolution scheme for $P_{\mathcal{F}}$ %
	 that is $c$-balanced for $x$.
\end{lemma}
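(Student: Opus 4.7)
The plan is to build $\pi_x$ from $\pi'$ by coupling each input $A \subseteq \supp(x)$ with a random set $A' \subseteq \supp(x')$ satisfying (i) $A'$ and $A$ agree on every element outside the sibling set $\{e, e_1, \ldots, e_{k-1}\}$, (ii) $A' \cap \{e, e_1, \ldots, e_{k-1}\} = \emptyset$ whenever $e \notin A$, and (iii) when $A = R(x)$ the distribution of $A'$ coincides with that of $R(x')$. Given such a coupling, we define $\pi_x(A)$ by first running $\pi'_{x'}(A')$ to obtain $T' \in \mathcal{F}'$ and then replacing the (at most one) sibling $e_i$ it contains by $e$.

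The heart of the construction is a conditional distribution $\mu$ on $2^{\{e, e_1, \ldots, e_{k-1}\}}$ that specifies $A' \cap \{e, e_1, \ldots, e_{k-1}\}$ given $e \in A$. For (iii) to hold when $A = R(x)$, we need $x_e \cdot \mu(S) + (1 - x_e) \cdot \mathbb{1}[S = \emptyset]$ to equal $(x_e/k)^{|S|}(1 - x_e/k)^{k - |S|}$ for every $S \subseteq \{e, e_1, \ldots, e_{k-1}\}$. For $S \neq \emptyset$ this forces $\mu(S) = (x_e/k)^{|S|}(1 - x_e/k)^{k - |S|}/x_e$, and for $S = \emptyset$ it forces $\mu(\emptyset) = \bigl((1 - x_e/k)^k - (1 - x_e)\bigr)/x_e$. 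Non-negativity of $\mu(\emptyset)$ is exactly Bernoulli's inequality $(1 - x_e/k)^k \geq 1 - x_e$, and a short computation verifies that $\mu$ sums to one.

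The remaining checks are then routine. Feasibility ($\pi_x(A) \subseteq A$ and $\pi_x(A) \in \mathcal{F}$): any sibling appearing in $T' \subseteq A'$ forces $e \in A$ by (i)--(ii), and the siblings property of Definition~\ref{def:siblings} combined with down-closure of $\mathcal{F}$ ensures that the collapsed set lies in $\mathcal{F}$. Monotonicity: for $A \subseteq B \subseteq \supp(x)$, use shared internal randomness when sampling the sibling parts; inspecting the three cases $e \notin B$, $e \in A \cap B$, and $e \in B \setminus A$ in turn shows $A' \subseteq B'$, after which monotonicity of $\pi'$ coordinatewise (and, for the coordinate $e$, summed over the $k$ sibling coordinates using that a feasible set in $\mathcal{F}'$ contains at most one sibling, so the events $\{e_i \in \pi'_{x'}(A')\}$ are disjoint) yields $\Pr[g \in \pi_x(A)] \geq \Pr[g \in \pi_x(B)]$.

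Balancedness at $x$ follows immediately from (iii): with $A = R(x)$ we get $A' \stackrel{d}{=} R(x')$, hence $\pi'_{x'}(A') \stackrel{d}{=} \pi'_{x'}(R(x'))$. For a non-sibling $g \in \supp(x)$ this gives $\Pr[g \in \pi_x(R(x))] = \Pr[g \in \pi'_{x'}(A')] \geq c \cdot x'_g = c \cdot x_g$, and for $e$ the collapsing step together with the disjointness of sibling events in $\pi'_{x'}(A')$ yields $\Pr[e \in \pi_x(R(x))] = \sum_{i=0}^{k-1} \Pr[e_i \in \pi'_{x'}(A')] \geq k \cdot c \cdot (x_e/k) = c \cdot x_e$, where $e_0 \coloneqq e$. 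The main obstacle is isolating the right distribution $\mu$ in the second paragraph; once (iii) is in place, every other property is immediate.
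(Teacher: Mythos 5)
Your proposal is correct and follows essentially the same route as the paper's proof: the same coupling of $A$ with a random $A'$ via a distribution on sibling subsets (the paper's set $D$ is exactly your $\mu$, whose existence it leaves as an "easy check" where you verify normalization and Bernoulli's inequality explicitly), the same collapsing of the at-most-one sibling back to $e$, and the same balancedness computation summing the disjoint sibling events. Your monotonicity argument via shared randomness giving $A' \subseteq B'$ is just a more detailed version of the step the paper states in one line.
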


While the proof below is constructive, the construction is typically not efficient. This is because we may split an element into many smaller ones, potentially resulting in an exponential increase in the instance size. 
Therefore, we cannot directly use this construction to get efficient CR schemes, meaning that the statements we prove here are only existential. Nevertheless, as we have seen in the previous sections on (bipartite) matchings, Lemma~\ref{lem:CRscheme_smallx_general} tells us which instances are inherently hard and thereby helps to find good schemes. Moreover, the CR schemes we provided are heavily motivated by this non-efficient reduction and realize the splitting idea in an efficient way.

\begin{proof}[Proof of Lemma~\ref{lem:CRscheme_smallx_oneElement}]

Let $\pi'$ be a monotone CR scheme for $P_{\mathcal{F}'}$ %
 that is $c$-balanced for $x'$.
We now have to show that there exists a  monotone CR scheme $\pi$ for $P_{\mathcal{F}}$ %
 that is $c$-balanced for $x$.

For $x \in b P_{\mathcal{F}}$, we want to define $\pi_x$ via $\pi'_{x'}$ such that the $c$-balancedness of $\pi'_{x'}$ passes on to $\pi_x$.	
Note that to somehow use the $c$-balancedness property of $\pi'_{x'}$, we need to feed a set into $\pi'_{x'}$ that has the same distribution as $R(x') \subseteq E'$, i.e., the set we get by independently rounding $x' \in bP_{\mathcal{F}'}$.
However, because our goal is to define $\pi_x$ such that it is $c$-balanced, we start with the random set $R(x) \subseteq E$ we get by independently rounding $x \in b P_{\mathcal{F}}$.
In a first step, we transform the set $R(x)$ into a set $R'(x) \subseteq E'$ that has the same distribution as $R(x')$. We can then apply $\pi'_{x'}$ and, as we see later, map $\mathcal{F}'$-feasible subsets of $R'(x)$ back to $\mathcal{F}$-feasible subsets of $R(x)$ in a straightforward way.

We now formally describe how this transformation works for a general set. For $A \subseteq \supp (x) \subseteq E$, we define the random set $A' \subseteq \supp (x') \subseteq E'$ %
by
\begin{equation*}
		A' \coloneqq \begin{cases}
		A & \textup{if } e \notin A  \enspace ,\\
			(A\setminus\{e\}) \cup D & \textup{if } e \in A  \enspace ,
		\end{cases}
	\end{equation*}
where $D \subseteq \{e,e_1, \ldots, e_{k-1}\}$ is a random set such that $\Pr [ D = J  ] = \frac{1}{x_e} \cdot \big( \frac{x_e}{k} \big)^{j} \cdot \big( 1- \frac{x_e}{k}\big)^{k-j}$ for every non-empty subset $J \subseteq \{e, e_1, \ldots, e_{k-1} \} $ of cardinality $|J|=j$.
As one can easily check, such a probability distribution indeed exists and it holds for every non-empty subset $J \subseteq \{e,e_1, \ldots, e_{k-1} \}$ of cardinality $|J| = j$ that
$\Pr [ J \subseteq D ] = \frac{1}{x_e} \cdot \big( \frac{x_e}{k} \big)^{j} $.
As we will see below, applying this transformation to the random set $R(x) \subseteq \supp(x)$ yields a set $R'(x) \subseteq E'$ that has the same distribution as $R(x') \subseteq \supp (x')$, as desired.

For $A \subseteq \supp (x) \subseteq E$, we then simply let the new scheme return the feasible set 
$	\pi_x (A) \coloneqq \pi'_{x'} (A') $.
Observe that the set $\pi'_{x'} (A')$ might not be a subset of $E$. However, since it contains at most one of the elements $e,e_1, \ldots, e_{k-1}$, we can just replace that element with $e$ whenever this happens. This way, we can also interpret $\pi'_{x'} (A')$ as a subset of $E$.
Moreover, note that monotonicity of $\pi'_{x'}$ ensures that $\pi_x$ is monotone as well.

For the balancedness, we consider the random set $R'(x) \subseteq E'$ we get by independently rounding $x$ and then applying the above described transformation to the resulting set $R(x) \subseteq E$. 
For any element $g \in E'\setminus \{e,e_1, \ldots, e_{k-1}\}$, we have 
\begin{equation*}
\Pr [ g \in R'(x) ] = \Pr [ g \in R(x) ] = x_g = x'_g
\enspace .
\end{equation*}
Moreover, for $e' \in \{e, e_1, \ldots, e_{k-1}\}$, it holds that
\begin{equation*}
	\Pr [ e' \in R'(x) ] = \Pr [ e \in R(x) ] \cdot \Pr[ e' \in D ] = x_e \cdot \frac{1}{x_e} \cdot \frac{x_e}{k} = \frac{x_e}{k} =
	 x'_{e'}
	\enspace .
\end{equation*}
Since one can easily verify that the distribution of $D$ ensures mutual independence of the above events, the distribution of $R'(x) \subseteq E'$ is indeed exactly the same as the one of the set $R(x')$ which we get by independently rounding the point $x'$.
Consequently, the balancedness of all elements $g \in E \setminus \{e\}$ remains the same, i.e., we have $\Pr[g \in \pi_x (R(x))] \geq c \cdot x_g$.
	Moreover, for the element $e$, we obtain
	\begin{align*}
		\Pr [ e \in \pi_x (R(x))]
		&= \Pr [ e' \in \pi'_{x'} (R'(x)) \textup{ for any } e' \in \{e, e_1, \ldots, e_{k-1}\} ]  \\
		&= \Pr [ e' \in \pi'_{x'} (R(x')) \textup{ for any } e' \in \{e, e_1, \ldots, e_{k-1}\} ]
		\enspace .
	\end{align*}
	For ease of notation, we now denote the element $e \in E'$ by $e_0$. Using that $\pi'_{x'}$ is $c$-balanced, we then get for the above that
	\begin{align*}
		\Pr [ e \in \pi_x (R(x))] 
		&= \Pr [ e_i \in \pi'_{x'} (R(x')) \textup{ for any } i \in \{0,1,\ldots, k-1\} ] \\
		&=  \sum_{i=0}^{k-1} \Pr [ e_i \in \pi'_{x'} (R(x'))]  
		\geq \sum_{i=0}^{k-1} c \cdot x'_{e_i}    
		= \sum_{i=0}^{k-1} c \cdot \frac{x_e}{k}    
		= c \cdot x_e
		\enspace ,
	\end{align*}
	where we again used that no feasible set in $\mathcal{F}'$ contains more than one of the elements $e=e_0, e_1, \ldots, e_{k-1}$.
	This shows the desired balancedness of $\pi_x$, and thereby finishes the proof of Lemma~\ref{lem:CRscheme_smallx_oneElement}.	
\end{proof}

Note that the fractional points $x$ we have to round in the context of~\ref{eq:CSFM} are typically rational (they are the output of some preceding computation). If we restrict ourselves to rational points, then it is easy to see that Lemma~\ref{lem:CRscheme_smallx_oneElement} actually implies a stronger statement than Lemma~\ref{lem:CRscheme_smallx_general}. Namely, it shows that for a given family of independence systems that is closed under splitting into siblings, the existence of a monotone CR scheme that is guaranteed to be $c$-balanced for rational points actually follows from the existence of a monotone CR scheme that is guaranteed to be $c$-balanced only for rational points whose non-zero components are all equal to a \emph{single} small value.
\begin{corollary}\label{cor:CRscheme_smallEqualx_general}
	Consider a family of independence systems $\big( (E_i, \mathcal{F}_i ) \big)_{i \in I}$ that is closed under splitting into siblings.
	Moreover, let $b,c \in [0,1]$ and $\epsilon > 0$.
	If there is a monotone contention resolution scheme for all combinatorial polytopes $P_{\mathcal{F}_i}$, $i \in I$, that is $c$-balanced for all rational $x \in b P_{\mathcal{F}_i}$ such that there exists $N \in \mathbb{Z}_{> 0}$ with $ x_e = \frac{1}{N} \leq \epsilon$ for all $e \in \supp(x)$, then there exists a monotone contention resolution scheme for all combinatorial polytopes $P_{\mathcal{F}_j}$, $j \in I$, that is $c$-balanced for all rational $x \in b P_{\mathcal{F}_j}$ (without any further restriction).
\end{corollary}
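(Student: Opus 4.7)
The plan is to reduce the rational input $x$ to a point with equal small components via a common denominator, and then undo the reduction by iteratively applying Lemma~\ref{lem:CRscheme_smallx_oneElement}. Fix $(E_j,\mathcal{F}_j)$ from the family and a rational $x \in b P_{\mathcal{F}_j}$. Since $x$ is rational, there exists $N \in \mathbb{Z}_{>0}$ such that $N x_e \in \mathbb{Z}_{\geq 0}$ for every $e \in E_j$; by taking $N$ a large enough multiple of such a common denominator, I can additionally ensure $\tfrac{1}{N} \leq \epsilon$. Set $k_e \coloneqq N x_e \in \mathbb{Z}_{\geq 1}$ for every $e \in \supp(x)$.

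For each $e \in \supp(x)$ with $k_e \geq 2$, I split $e$ into $k_e$ siblings as in Definition~\ref{def:splitting}. Because $\big( (E_i, \mathcal{F}_i )\big)_{i \in I}$ is closed under splitting, the independence system $(\widehat{E}, \widehat{\mathcal{F}})$ obtained after performing all these splits is again a member of the family. The uniform redistribution rule from Lemma~\ref{lem:CRscheme_smallx_oneElement} ensures that the corresponding point $\widehat{x} \in b P_{\widehat{\mathcal{F}}}$ satisfies $\widehat{x}_g = \tfrac{x_e}{k_e} = \tfrac{1}{N}$ for every sibling $g$ of any $e \in \supp(x)$, so all non-zero components of $\widehat{x}$ equal $\tfrac{1}{N} \leq \epsilon$. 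Hence $\widehat{x}$ lies in the class of rational points on which the hypothesis of the corollary is applicable, so by assumption there is a monotone CR scheme for $P_{\widehat{\mathcal{F}}}$ that is $c$-balanced at $\widehat{x}$.

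Now I undo the splits one at a time, applying Lemma~\ref{lem:CRscheme_smallx_oneElement} at each step: each application turns a monotone CR scheme for the current polytope that is $c$-balanced at the current point into a monotone CR scheme for the less split polytope that is $c$-balanced at the corresponding less split point. The intermediate independence systems remain in the family thanks to closure under splitting, ensuring that each step is legitimate. After undoing all splits, I obtain a monotone CR scheme for $P_{\mathcal{F}_j}$ that is $c$-balanced at the original rational $x$. Carrying this out separately for every rational $x \in b P_{\mathcal{F}_j}$ (with an $N$ chosen specifically for $x$) and combining the resulting per-point procedures yields a single monotone CR scheme for $P_{\mathcal{F}_j}$ that is $c$-balanced at every rational point of $b P_{\mathcal{F}_j}$, as desired. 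The main obstacle I foresee is merely bookkeeping: making sure the iterative application of Lemma~\ref{lem:CRscheme_smallx_oneElement} respects monotonicity at each undo step, and that closure under splitting is sufficient to keep every intermediate system inside the family. No new combinatorial ingredient beyond the one-element splitting lemma should be needed.
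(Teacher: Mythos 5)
Your proposal is correct and follows essentially the same route as the paper: pick a common denominator $N$ with $\tfrac{1}{N}\leq\epsilon$, split each $e$ into $k_e = N x_e$ siblings so the resulting point has all non-zero entries equal to $\tfrac{1}{N}$, and then iteratively apply Lemma~\ref{lem:CRscheme_smallx_oneElement} to transfer the monotone $c$-balanced scheme back to the original instance. The per-point construction is unproblematic since a CR scheme is by definition specified separately for each $x$, so no further combining argument is needed.
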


\begin{proof}
	As Lemma~\ref{lem:CRscheme_smallx_general}, we also obtain Corollary~\ref{cor:CRscheme_smallEqualx_general} by iteratively applying Lemma~\ref{lem:CRscheme_smallx_oneElement} for carefully chosen numbers $k$.
	More precisely, for an independence system $(E, \mathcal{F})$ of the given family and a rational $x \in b P_{\mathcal{F}}$, there exists $N \in \mathbb{Z}_{> 0}$ such that for every $e \in E$, we have $x_e = \frac{k_e}{N}$ for some $k_e \in \mathbb{Z}_{\geq 0}$. Moreover, we can choose $N$ such that $\frac{1}{N} \leq \epsilon$.
	With the help of Lemma~\ref{lem:CRscheme_smallx_oneElement}, we then iteratively split each $e\in E$ into $k_e$ siblings.
\end{proof}

\subsection{Stochastic dominance}\label{app:stochDom}

At the end of the proof of Theorem~\ref{thm:mCRs_existence_bipartite}, a claim regarding the stochastic dominance of different distributions was made. Here, we formally state and prove it.
\begin{lemma}\label{lem:stochDom}
	Let $P,Q,X,Y,Z$ be mutually independent random variables such that
	\begin{itemize}
		\item $P,Q,X,Y,Z$ only take values in $\mathbb{Z}_{\geq 0}$, and
		\item $X,Y,Z$ all have the same distribution.
	\end{itemize} 
	It then holds that the distribution of $X + \max\{P,Q\}$ is stochastically dominated by the distribution of $\max\{ Y+P, Z+Q \}$.
\end{lemma}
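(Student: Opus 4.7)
The plan is to reduce the statement to a simple pointwise comparison by conditioning on $(P,Q)$. Concretely, I would prove the equivalent tail inequality: for every $t\in\mathbb{R}$,
\[
\Pr[X+\max\{P,Q\}\geq t]\;\leq\;\Pr[\max\{Y+P,Z+Q\}\geq t].
\]
Because $X,Y,Z$ are independent of $(P,Q)$, it suffices, by the law of total probability, to verify the inequality after conditioning on any realization $P=p$, $Q=q$.

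Fix such $p,q$ and assume first that $p\geq q$, so that $\max\{p,q\}=p$. The conditional left-hand side is then $\Pr[X\geq t-p]$, while the conditional right-hand side satisfies
\[
\Pr[\max\{Y+p,Z+q\}\geq t]\;\geq\;\Pr[Y+p\geq t]\;=\;\Pr[Y\geq t-p]\;=\;\Pr[X\geq t-p],
\]
where in the last step I use that $Y$ and $X$ have the same distribution. This gives the desired pointwise bound on the event $\{P\geq Q\}$.

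For the complementary event $\{P<Q\}$ I would run the symmetric argument, bounding $\max\{Y+p,Z+q\}$ from below by $Z+q$ and using $Z\stackrel{d}{=}X$. Averaging the two cases over the joint law of $(P,Q)$ yields the inequality on tail probabilities, which is exactly the definition of stochastic dominance.

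I do not expect any substantive obstacle; the only thing to be careful about is that the right-hand side $\max\{Y+P,Z+Q\}$ is not pairwise symmetric in $(P,Q)$ alone, but becomes symmetric once one also swaps $(Y,Z)$, and since $Y$ and $Z$ are identically distributed and independent of $(P,Q)$, this symmetry is enough to cover the case $p<q$ without additional work. Integrality of the variables is irrelevant for the argument.
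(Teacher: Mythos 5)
Your proposal is correct and is essentially identical to the paper's proof: both condition on $(P,Q)=(p,q)$, reduce to the case $p\geq q$ by the symmetry under swapping $(P,Y)\leftrightarrow(Q,Z)$, and then bound the max from below by $Y+p$ using $Y\stackrel{d}{=}X$ and independence. No gaps.
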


\begin{proof}
	In order to prove Lemma~\ref{lem:stochDom}, we have to show that for any $k \in \mathbb{Z}_{\geq 0}$, it holds that
	\begin{equation}\label{eq:stochasticDominance}
	\Pr[ \max\{Y + P, Z + Q\} \geq k] \geq \Pr[ X + \max\{ P,  Q\} \geq k]
	\enspace .
	\end{equation}
	We do this by proving that for all $p,q \in \mathbb{Z}_{\geq 0}$, we have
	\begin{equation}\label{eq:stochasticDominanceConditioned}
	\Pr[ \max\{Y + P, Z + Q\} \geq k \mid P=p, Q=q] 
	\geq \Pr[ X + \max\{ P,  Q\} \geq k \mid P=p, Q=q ]  
	\enspace ,
	\end{equation}
	which is even stronger than~\eqref{eq:stochasticDominance}.
	
	To show~\eqref{eq:stochasticDominanceConditioned}, fix $p,q \in \mathbb{Z}_{\geq 0}$. By symmetry, we can without loss of generality assume that $p \geq q$.
	Moreover, using that the random variables $P,Q,X,Y,Z$ are mutually independent and that $X,Y,Z$ all have the same distribution, proving~\eqref{eq:stochasticDominanceConditioned} becomes trivial by observing that
	\begin{align*}
	\Pr[ X + \max\{ P,  Q\} \geq k \mid P=p, Q=q ] 
	&=\Pr[ X + \max\{ p, q\} \geq k ] \\
	&= \Pr[ X +  p \geq k ] \\
	& =  \Pr[ Y +  p \geq k ] \\
	& \leq \Pr[ \max\{Y + p, Z + q\} \geq k] \\
	& = \Pr[ \max\{Y + P, Z + Q\} \geq k \mid P=p, Q=q]
	\enspace .
	\end{align*}
	This shows~\eqref{eq:stochasticDominanceConditioned}, and therefore proves Lemma~\ref{lem:stochDom}.
\end{proof}

\subsection{Probability of ending up in a good connected component}\label{app:CRSmixed_condProb}

Let $G=(V,E)$ be a graph, and fix a point $x \in \matchP[G]$ and an edge $e = \{u,v\} \in \supp (x)$.
In the proof of Theorem~\ref{thm:mCRs_existence_general_improved}, it was claimed that under certain conditions, with constant probability the edge $e$ ends up in a specific situation that enables the scheme from Algorithm~\ref{algo:GenMatchCRSmixed} to assign a larger value to $e$ than the scheme from Algorithm~\ref{algo:GenMatchCRS} does.
The formal statement is given in the following lemma.
\begin{lemma}\label{lem:CRSmixed_condProb}
	Let $x \in \matchP[G]$ and $e = \{u,v\} \in \supp (x)$ such that 
	$x(\delta(u) \setminus E_{uv}) \geq 0.99$, $x(\delta(u) \setminus E_{uv}) \geq 0.99$, and $x(E_{uv}) \leq 0.01$, where $E_{uv}$ denotes the set of all edges in $E$ that go from $u$ to $v$ (including $e$ itself).
	Let $C$ be the event that the following happens when independently rounding $x$ and applying Algorithm~\ref{algo:GenMatchCRSmixed} to the resulting set $R(x)$:
	\begin{enumerate}[label=\normalfont($C_\arabic*$), leftmargin=4em]
		\item $e=\{u,v\}$ survives the subsampling step, %
		\item\label{item:connectCompP3} it ends up in a connected component that is a path of length~$3$ with $e$ as middle edge,
		\item the realizations of the Poisson random variables of the edges in this path all equal~$1$. 
	\end{enumerate}
	It then holds that
	\begin{equation*}
	\Pr [ C | e \in R(x) ] \geq 0.0018
	\enspace .
	\end{equation*}
\end{lemma}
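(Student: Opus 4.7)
The plan is to work in the formulation of Algorithm~\ref{algo:GenMatchCRSmixedshort}, which produces the same output distribution as independently rounding $x$ and then applying Algorithm~\ref{algo:GenMatchCRSmixed} to $R(x)$, but uses only the mutually independent Poisson variables $q_g \sim \PoisD{x_g}$. Since the event $C$ forces $q_e = 1$ and therefore implies $e \in R(x)$, we have $\Prb[C] = x_e \cdot \Prb[C \mid e \in R(x)]$, so it suffices to compute $\Prb[C]$ in the Poisson formulation and show it is at least $0.0018\, x_e$.

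Writing $W_u \coloneqq \delta(u) \setminus E_{uv}$ and $W_v \coloneqq \delta(v) \setminus E_{uv}$, the event $C$ says that there exist unique edges $g_u = \{u, u'\} \in W_u$ and $g_v = \{v, v'\} \in W_v$ with $u' \neq v'$ such that (a) $q_e = q_{g_u} = q_{g_v} = 1$, and (b) $q_g = 0$ for every $g \in \delta(S) \setminus \{e, g_u, g_v\}$, where $S \coloneqq \{u, v, u', v'\}$. Condition (a) places the three path-edges with the correct multiplicity, while (b) ensures that the connected component of $e$ in $\supp(q)$ neither extends beyond $S$ nor contains any extra edge inside $S$ (such as a parallel edge in $E_{uv}$). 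By independence of the Poisson variables, the probability of this configuration for a fixed valid pair $(g_u, g_v)$ equals
\begin{equation*}
x_e e^{-x_e} \cdot x_{g_u} e^{-x_{g_u}} \cdot x_{g_v} e^{-x_{g_v}} \cdot e^{-x(\delta(S) \setminus \{e, g_u, g_v\})} \;=\; x_e \, x_{g_u} x_{g_v} \, e^{-x(\delta(S))}.
\end{equation*}
Summing and dividing by $x_e$ therefore yields
\begin{equation*}
\Prb[C \mid e \in R(x)] \;=\; \sum_{\substack{g_u \in W_u,\, g_v \in W_v \\ u' \neq v'}} x_{g_u}\, x_{g_v} \, e^{-x(\delta(S))}.
\end{equation*}

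To lower-bound this sum I will use two very loose estimates. Since $x \in \matchP[G]$ gives $x(\delta(w)) \leq 1$ for every $w \in V$, it follows that $x(\delta(S)) \leq \sum_{w \in S} x(\delta(w)) \leq 4$, hence $e^{-x(\delta(S))} \geq e^{-4}$. Setting $X \coloneqq x(W_u) \geq 0.99$, $Y \coloneqq x(W_v) \geq 0.99$, and letting $a_w$ (resp.\ $b_w$) denote the total $x$-weight of edges of $W_u$ (resp.\ $W_v$) between $u$ (resp.\ $v$) and a vertex $w \in V \setminus \{u,v\}$, the sum over non-adjacent pairs can be written as
\begin{equation*}
\sum_{\substack{g_u \in W_u,\, g_v \in W_v \\ u' \neq v'}} x_{g_u} x_{g_v} \;=\; XY - \sum_{w \in V \setminus \{u,v\}} a_w b_w.
\end{equation*}
The key inequality is $a_w + b_w \leq x(\delta(w)) \leq 1$, which combined with AM--GM gives $a_w b_w \leq (a_w + b_w)^2/4 \leq (a_w + b_w)/4$, and hence $\sum_w a_w b_w \leq (X+Y)/4 \leq 1/2$. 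Combining,
\begin{equation*}
\Prb[C \mid e \in R(x)] \;\geq\; e^{-4}\bigl(XY - (X+Y)/4\bigr) \;\geq\; e^{-4}\bigl(0.99^2 - 1/2\bigr) \;\geq\; 0.0087,
\end{equation*}
which comfortably exceeds the required $0.0018$.

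The main obstacle is conceptual rather than computational: one must correctly translate the structural event $C$ into a pointwise condition on the independent Poisson variables, in particular identifying the precise set of edges (namely $\delta(S) \setminus \{e, g_u, g_v\}$) that must have $q = 0$ so that the component of $e$ is locked to be the three-edge path $u'{-}u{-}v{-}v'$, and tracking the non-adjacency condition $u' \neq v'$ when enumerating. Once this setup is in place, the two coarse inequalities $x(\delta(S)) \leq 4$ and $\sum_w a_w b_w \leq (X + Y)/4$ leave a slack factor of roughly five over the required bound, so no part of the analysis needs to be tightened.
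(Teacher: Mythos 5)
Your proof is correct, and it takes a genuinely different route from the paper's. The paper does not compute $\Pr[C \mid e \in R(x)]$ directly; instead it first greedily partitions $V \setminus \{u,v\}$ into two sides $V_u, V_v$ with $x(E_{u,V_u}), x(E_{v,V_v}) \geq 0.33$, and bounds the probability of a sub-event $D \subseteq C$ in which the two outer path edges are forced to land on opposite sides of the partition — this sidesteps the triangle issue by construction, and lets the authors factor the probability into independent (or positively correlated) pieces, yielding $0.33^2 \cdot \myexp^{-4} \geq 0.0018$. You instead write $\Pr[C]$ \emph{exactly} as a disjoint sum over ordered pairs $(g_u,g_v)$ of candidate outer edges, with each term factoring as $x_e x_{g_u} x_{g_v} \myexp^{-x(\delta(S))}$ by independence of the Poisson variables, and you handle the triangle exclusion analytically via $\sum_{u' \neq v'} x_{g_u}x_{g_v} = XY - \sum_w a_w b_w$ together with $a_w b_w \leq (a_w+b_w)/4$. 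This completely avoids the paper's partition and its correlation bookkeeping (the paper needs a positive-correlation observation because its event ``no other edge at $u',v'$'' depends on which edges realized the events $D_3,D_4$; your per-pair factorization makes everything exactly independent). Your approach also gives the stronger constant $0.0087$, which would in principle propagate to a slightly better improvement in Theorem~\ref{thm:mCRs_existence_general_improved}. The only cosmetic caveat is that you should state explicitly that $\delta(S)$ denotes the set of \emph{all} edges incident to $S=\{u,v,u',v'\}$ (including those with both endpoints in $S$), since that is what both the exclusion condition and the bound $x(\delta(S)) \leq \sum_{w\in S} x(\delta(w)) \leq 4$ require.
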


\begin{proof}
	Note that we cannot replace~\ref{item:connectCompP3} by the condition that $e$ ends up in a connected component consisting of $e$, one other edge $g$ incident to $u$ (but not $v$), and one other edge $h$ incident to $v$ (but not $u$).
	The reason why this does not work is that we could end up with a connected component that is a triangle (see Figure~\ref{fig:genCRSloss}).
	Since the possibility of getting triangles complicates the analysis of $\Pr [C | e \in R(x)]$, we try to avoid triangles by proving something stronger.
	
	To do this, we first observe that we can partition the vertices in $V \setminus \{u,v\}$ into two sets $V_u$ and $V_v$ such that $x(E_{u,V_u}) \geq 0.33$ and $x(E_{v, V_u}) \geq 0.33$,
	where for two disjoint sets $S_1, S_2 \subseteq V$, we denote the set of all edges in $E$ having one endpoint in $S_1$ and the other in $S_2$ by $E_{S_1, S_2}$ (set braces are left out for singletons).
	In fact, one can obtain such a partition by starting with $V_u = V_v = \emptyset$ and greedily assigning vertices $w \in V \setminus \{u,v\}$ to $V_u$ or $V_v$, depending on whether the total $x$-load between $w$ and $u$ or $w$ and $v$ is higher.
	As soon as we have assigned enough vertices for either $x(E_{u,V_u}) \geq 0.33$ or $x(E_{v,V_v}) \geq 0.33$ to hold, we assign all remaining vertices to the other set.
	When considering the vertices $w \in V \setminus \{u,v\}$ in decreasing order with respect to $x(E_{w,\{u,v\}})$, then this is easily seen to yield a partition with the desired properties.

	\begin{figure}[ht]
		\centering

\tikzstyle{vertex}=[circle, draw=black, thick, inner sep=1pt, minimum size=2mm, fill=black!10]
\tikzstyle{edgenode}=[rectangle, fill=white, fill opacity=0.8, text opacity=1, inner sep=1pt]
\tikzstyle{edge}=[thick, draw=black]
\tikzstyle{selectedEdge}=[ultra thick, draw=\colorofset]
\tikzstyle{discardedEdge}=[semithick, dashed, draw=black]
\tikzstyle{arc}=[thick, draw=black!60, -latex]
\tikzstyle{boxed}=[thick, draw=black!60, rounded corners]
\tikzstyle{boxedFat}=[thick, draw=black, rounded corners]
\tikzstyle{boxedLight}=[thick, draw=black, rounded corners, fill = black!15, minimum height=0.75*1cm]

\begin{tikzpicture}[scale=0.75]

\begin{scope}[]

\draw [black!80, thick, fill=black!20] plot [smooth cycle] coordinates {(1.125,3) (0.25,2.5) (0.25,0.5) (2.25,0.75)  (2.5,3)};
\draw [black!80, thick, fill=black!20] plot [smooth cycle] coordinates {(4.125,3) (3.25,2.5) (3.25,0.5) (5.25,0.375) (5.5,1.5) (5.125,2.75)};
\node[black!80, font=\footnotesize] at (1,0.75) {$V_u$};
\node[black!80, font=\footnotesize] at (4,0.75) {$V_v$};

\node[vertex] (u) at (1.5,4) {};
\node[left=0.05, font=\footnotesize] at (u) {$u$};
\node[vertex] (v) at (3.75,4) {};
\node[right=0.05, font=\footnotesize] at (v) {$v$};
\node[vertex] (w0) at (0.75,2.5) {};
\node[vertex] (w1) at (4.5,2.5) {};
\coordinate (w2) at (1.5,1.5) {};
\coordinate (w3) at (0.5,1.25) {};
\coordinate (w4) at (4,1.25) {};
\coordinate (w5) at (4.75,0.75) {};
\coordinate (w6) at (1.75,2.5) {};
\coordinate (w7) at (3.75,2) {};

\draw[selectedEdge] (u) -- node[font=\footnotesize, color=black,, edgenode, pos=0.5] {$1$} (v);
\draw[selectedEdge] (u) -- node[font=\footnotesize, color=black, edgenode, pos=0.4] {$1$} (w0);
\draw[selectedEdge] (v) -- node[font=\footnotesize, color=black, edgenode, pos=0.4] {$1$} (w1);
\draw[discardedEdge, bend angle=35, bend left] (u) to (v);
\draw[discardedEdge] (u) to (w1);
\draw[discardedEdge] (v) -- (w2);
\draw[discardedEdge] (w0) -- (w3);
\draw[discardedEdge] (w0) -- (w4);
\draw[discardedEdge] (w1) -- (w5);
\draw[discardedEdge] (u) -- (w6);
\draw[discardedEdge] (v) -- (w7);

\end{scope}

\end{tikzpicture}  		\caption{Above, a partition $V \setminus \{u,v\} = V_u \dcup V_v$ and a possible outcome of $\overline{R}(x)$, i.e., the set of edges we get after the independent rounding and the subsampling step, is shown.
		Only the connected component of $e =\{u,v\}$ in $G'=(V, \overline{R}(x))$ is considered, where dashed edges indicate edges in $E$ which are incident to this component but did not appear in  $\overline{R}(x)$.
		For the edges in $\overline{R}(x)$, the realizations of the (conditioned) Poisson random variables are given as well.
		In the case depicted above, the event~$D$ holds.
		}\label{fig:genCRSpartition}
	\end{figure}
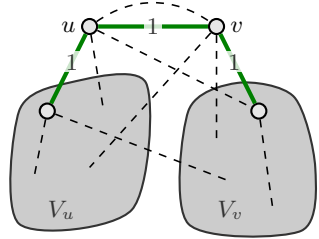

	Using such a partition $V\setminus \{u,v\} = V_u \dcup V_v$,~\ref{item:connectCompP3} is implied by $e$ ending up in a connected component consisting of $e$, one other edge $g$ going from $u$ to $V_u$, and one other edge $h$ going from $v$ to $V_v$. Using this stronger condition, we do not have to worry about triangles anymore.
	Instead of $C$, we therefore consider the event $D$ that the following happens when independently rounding $x$ and applying Algorithm~\ref{algo:GenMatchCRSmixed} to the resulting set $R(x)$ (see Figure~\ref{fig:genCRSpartition}):
	\begin{enumerate}[label=\normalfont($D_\arabic*$), leftmargin=4em]
		\item\label{item:connectComp1} $q_e = 1$, meaning that $e=\{u,v\}$ appears in $\overline{R}(x)$ and its Poisson random variable realizes to $1$,
		\item\label{item:connectComp2} $\sum_{g \in E_{u,v} \setminus \{e\} } q_g = 0$, meaning that apart from $e$, no other edge in $E_{uv}$ appears in $\overline{R}(x)$,
		\item\label{item:connectComp3} $\sum_{g \in E_{u,V_u} } q_g = 1$, meaning that exactly one edge $e_u = \{u,u'\}$ in $E_{u,V_u}$ appears in $\overline{R}(x)$ and its Poisson random variable realizes to $1$,
		\item\label{item:connectComp4} $\sum_{g \in E_{v,V_v} } q_g = 1$, meaning that exactly one edge $e_v = \{v,v'\}$ in $E_{v,V_v}$ appears in $\overline{R}(x)$ and its Poisson random variable realizes to $1$,
		\item\label{item:connectComp5} $\sum_{g \in E_{u,V_v} } q_g + \sum_{g \in E_{v,V_u} } q_g = 0$, meaning that no edge in $E_{u,V_v}$ and no edge in $E_{v, V_u}$ appears in $\overline{R}(x)$,
		\item\label{item:connectComp6} $\sum_{g \in \delta(u') \setminus \{e_u\} } q_g + \sum_{g \in \delta(v') \setminus \{e_v\} } q_g = 0$, meaning that apart from $e_u = \{u,u'\}$, no other edge in $\delta(u')$ appears in $\overline{R}(x)$, and apart from $e_v =\{v, v'\}$, no other edge in $\delta(v')$ appears in $\overline{R}(x)$.
	\end{enumerate}
	
	Clearly, the event $D$ implies the event $C$.
	Our goal is therefore to get a lower bound on the probability of the event $D$ conditioned on $e$ appearing in the independently rounded set $R(x)$.
	In order to do this, we directly use Algorithm~\ref{algo:GenMatchCRSmixedshort}, which combines the independent rounding step with Algorithm~\ref{algo:GenMatchCRSmixed}.
	Recall that in this setting, we can identify $\overline{R}(x)$ with $\supp (q)$.
	Moreover, we make the following useful observations.
	\begin{itemize}
		\item The event~\ref{item:connectComp1} is the only event that is affected by conditioning on $e \in R(x)$.
		\item The events~\ref{item:connectComp1}--\ref{item:connectComp5} are mutually independent.
		\item The event~\ref{item:connectComp6} depends on the edges which appear in~\ref{item:connectComp3} and~\ref{item:connectComp4}. It is possible to use a bound for the probability of this event which is independent of these particular edges, though.
		\item The event~\ref{item:connectComp6} is independent of~\ref{item:connectComp1} and~\ref{item:connectComp2}, and it is positively correlated with~\ref{item:connectComp3},~\ref{item:connectComp4}, and~\ref{item:connectComp5}. 
		\item For the events~\ref{item:connectComp2}--\ref{item:connectComp6}, the quantities we are interested in follow a Poisson distribution.\footnote{%
			Here, we use again that the sum of two independent Poisson random variables $\PoisD[1]{\xi}$ and $\PoisD[2]{\zeta}$ has a $\PoisD{\xi+\zeta}$-distribution.
			}
	\end{itemize}

	Using Algorithm~\ref{algo:GenMatchCRSmixedshort} and the observations from above, we obtain
	\begin{align*}
		\Pr [ D | e \in R(x) ]
		 & = \Pr[ D_1, \ldots, D_6 | e \in R(x) ] \\
		 & = \Pr[D_1 | e \in R(x) ] \cdot \Pr[D_2] \cdot \Pr[D_3] \cdot \Pr[D_4] \cdot \Pr[D_5] \cdot \Pr[D_6 | D_3, D_4, D_5] \\
		 & \geq \Pr[D_1 | e \in R(x) ] \cdot \Pr[D_2] \cdot \Pr[D_3] \cdot \Pr[D_4] \cdot \Pr[D_5] \cdot \Pr[D_6 ] \\
		 & \geq \frac{\Pr[\PoisD{x_e} = 1]}{x_e} \cdot \Pr[\PoisD{x(E_{u,v}) - x_e} = 0] \cdot \Pr[\PoisD{x(E_{u,V_u})} = 1]  \\
		 & \phantom{{}\geq{} } \cdot  \Pr[\PoisD{x(E_{v,V_v})} = 1]  \cdot \Pr[\PoisD{x(E_{u,V_v}) + x(E_{v,V_u}) } = 0]  \\
		 & \phantom{{}\geq{} }  \cdot \Pr[\PoisD{x(\delta(u')) + x(\delta(v'))} = 0] \\
		 & = \myexp^{- x_e} \cdot \myexp^{-x(E_{u,v})+x_e} \cdot \big( x(E_{u,V_u}) \cdot \myexp^{-x(E_{u,V_u})}  \big) \cdot \big( x(E_{v,V_v}) \cdot \myexp^{-x(E_{v,V_v})}  \big) \\
		 &  \phantom{{}={} }  \cdot  \myexp^{-x(E_{u,V_v}) - x(E_{v,V_u})}  \cdot  \myexp^{-x(\delta(u')) - x(\delta(v'))}  \\
		 & = x(E_{u,V_u}) \cdot x(E_{v,V_v}) \cdot  \myexp^{-x(\delta(u) \cup \delta(v) ) + x_e} \cdot \myexp^{-x(\delta(u')) - x(\delta(v'))}
		\enspace ,
	\end{align*} 
	where the second inequality follows from $\Pr[ \PoisD{\xi} = 0 ] = \myexp^{-\xi}$ being decreasing in $\xi \geq 0$.
	Since $x(E_{u,V_u}) \geq 0.33$, $x(E_{v,V_v}) \geq 0.33$, and $x(\delta(w)) \leq 1$ for any $w \in V$, we get from the above that
	\begin{align*}
		\Pr [ D | e \in R(x) ]
		& \geq x(E_{u,V_u}) \cdot x(E_{v,V_v}) \cdot  \myexp^{-x(\delta(u) \cup \delta(v) ) + x_e} \cdot \myexp^{-x(\delta(u')) - x(\delta(v'))} \\
		& \geq 0.33 \cdot 0.33 \cdot \myexp^{-2} \cdot \myexp^{-2} \\
		& \geq 0.0018
		\enspace .
	\end{align*} 
	Recalling that the event $D$ implies the event $C$, we conclude that
	\begin{equation*}
		\Pr [ C | e \in R(x) ] \geq \Pr [ D | e \in R(x) ] \geq 0.0018
		\enspace ,
	\end{equation*}
	as desired.	
\end{proof}

\bibliographystyle{plain}

\begin{thebibliography}{10}

\bibitem{Adamczyk2015}
M.~Adamczyk.
\newblock Non-negative submodular stochastic probing via stochastic contention
  resolution schemes, 2015.
\newblock \url{https://arxiv.org/abs/1508.07771}.

\bibitem{AgeevSviridenko2004}
A.~A. Ageev and M.~Sviridenko.
\newblock Pipage rounding: A new method of constructing algorithms with proven
  performance guarantee.
\newblock {\em Journal of Combinatorial Optimization}, 8(3):307--328, 2004.

\bibitem{AgrawalDingSaberiYe2012}
S.~Agrawal, Y.~Ding, A.~Saberi, and Y.~Ye.
\newblock Price of correlations in stochastic optimization.
\newblock {\em Operations Research}, 60(1):150--162, 2012.

\bibitem{BalcanBlumMansour2008}
M.-F. Balcan, A.~Blum, and Y.~Mansour.
\newblock Item pricing for revenue maximization.
\newblock In {\em Proceedings of the 9th ACM Conference on Electronic Commerce
  (EC)}, pages 50--59, 2008.

\bibitem{BruggmannZenklusen2017}
S.~Bruggmann and R.~Zenklusen.
\newblock Submodular maximization through the lens of linear programming, 2017.
\newblock \url{https://arxiv.org/abs/1711.11316}.

\bibitem{buchbinder_2016_constrained}
N.~Buchbinder and M.~Feldman.
\newblock Constrained submodular maximization via a non-symmetric technique,
  2016.
\newblock \url{https://arxiv.org/abs/1611.03253}.

\bibitem{BuchbinderFeldman2018}
N.~Buchbinder and M.~Feldman.
\newblock Submodular functions maximization problems.
\newblock In T.~F. Gonzalez, editor, {\em Handbook of Approximation Algorithms
  and Metaheuristics}, volume~1, chapter~42, pages 753--788. Chapman \&
  Hall/CRC, second edition, 2018.

\bibitem{CalinescuChekuriPalVondrak2011}
G.~C{\u{a}}linescu, C.~Chekuri, M.~P{\'{a}}l, and J.~Vondr{\'{a}}k.
\newblock Maximizing a monotone submodular function subject to a matroid
  constraint.
\newblock {\em {SIAM} Journal on Computing}, 40(6):1740--1766, 2011.

\bibitem{chekuri_2009_dependent}
C.~Chekuri, J.~Vondr{\'a}k, and R.~Zenklusen.
\newblock Dependent randomized rounding for matroid polytopes and applications,
  2009.
\newblock \url{https://arxiv.org/abs/0909.4348}.

\bibitem{ChekuriVondrakZenklusen2010}
C.~Chekuri, J.~Vondr{\'{a}}k, and R.~Zenklusen.
\newblock Dependent randomized rounding via exchange properties of
  combinatorial structures.
\newblock In {\em Proceedings of the 51st Annual IEEE Symposium on Foundations
  of Computer Science (FOCS)}, pages 575--584, 2010.

\bibitem{ChekuriVondrakZenklusen2011}
C.~Chekuri, J.~Vondr{\'{a}}k, and R.~Zenklusen.
\newblock Submodular function maximization via the multilinear relaxation and
  contention resolution schemes.
\newblock In {\em Proceedings of the 43rd Annual ACM Symposium on Theory of
  Computing (STOC)}, pages 783--792, 2011.

\bibitem{chekuri_2014_submodular}
C.~Chekuri, J.~Vondr\'{a}k, and R.~Zenklusen.
\newblock Submodular function maximization via the multilinear relaxation and
  contention resolution schemes.
\newblock {\em SIAM Journal on Computing}, 43(6):1831--1879, 2014.

\bibitem{ConfortiCornuejols1984}
M.~Conforti and G.~Cornu\'{e}jols.
\newblock Submodular set functions, matroids and the greedy algorithm: Tight
  worst-case bounds and some generalizations of the {R}ado-{E}dmonds theorem.
\newblock {\em Discrete Applied Mathematics}, 7(3):251--274, 1984.

\bibitem{CyganGrandoniMastrolilli2013}
M.~Cygan, F.~Grandoni, and M.~Mastrolilli.
\newblock How to sell hyperedges: The hypermatching assignment problem.
\newblock In {\em Proceedings of the 24th Annual ACM-SIAM Symposium on Discrete
  Algorithms (SODA)}, pages 342--351, 2013.

\bibitem{ene_2016_constrained}
A.~Ene and H.~L. Nguy$\tilde{\hat{\mathrm{e}}}$n.
\newblock Constrained submodular maximization: Beyond $1/e$.
\newblock In {\em Proceedings of the 57th Annual IEEE Symposium on Foundations
  of Computer Science (FOCS)}, pages 248--257, 2016.

\bibitem{FeigeMirrokniVondrak2007}
U.~Feige, V.~S. Mirrokni, and J.~Vondr\'{a}k.
\newblock Maximizing non-monotone submodular functions.
\newblock In {\em Proceedings of the 48th Annual IEEE Symposium on Foundations
  of Computer Science (FOCS)}, pages 461--471, 2007.

\bibitem{feige_2011_maximizing}
U.~Feige, V.~S. Mirrokni, and J.~Vondr\'{a}k.
\newblock Maximizing non-monotone submodular functions.
\newblock {\em SIAM Journal on Computing}, 40(4):1133--1153, 2011.

\bibitem{Feldman2013}
M.~Feldman.
\newblock {\em Maximization Problems with Submodular Objective Functions}.
\newblock PhD thesis, Computer Science Department, Technion - Israel Institute
  of Technology, Israel, 2013.

\bibitem{FeldmanHarshawKarbasi2017}
M.~Feldman, C.~Harshaw, and A.~Karbasi.
\newblock Greed is good: Near-optimal submodular maximization via greedy
  optimization.
\newblock In {\em Proceedings of the 30th Annual Conference on Learning Theory
  (COLT)}, pages 758--784, 2017.

\bibitem{FeldmanNaorSchwartz2011a}
M.~Feldman, J.~(S.) Naor, and R.~Schwartz.
\newblock Nonmonotone submodular maximization via a structural continuous
  greedy algorithm.
\newblock In {\em Proceedings of the 38th International Colloquium on Automata,
  Languages, and Programming (ICALP)}, pages 342--353, 2011.

\bibitem{FeldmanNaorSchwartz2011b}
M.~Feldman, J.~(S.) Naor, and R.~Schwartz.
\newblock A unified continuous greedy algorithm for submodular maximization.
\newblock In {\em Proceedings of the 52nd Annual IEEE Symposium on Foundations
  of Computer Science (FOCS)}, pages 570--579, 2011.

\bibitem{FeldmanNaorSchwartzWard2011}
M.~Feldman, J.~(S.) Naor, R.~Schwartz, and J.~Ward.
\newblock Improved approximations for $k$-exchange systems.
\newblock In {\em Proceedings of the 19th Annual European Symposium on
  Algorithms (ESA)}, pages 784--798, 2011.

\bibitem{FeldmanSvenssonZenklusen2016}
M.~Feldman, O.~Svensson, and R.~Zenklusen.
\newblock Online contention resolution schemes.
\newblock In {\em Proceedings of the 27th Annual ACM-SIAM Symposium on Discrete
  Algorithms (SODA)}, pages 1014--1033, 2016.

\bibitem{FisherNemhauserWolsey1978}
M.~L. Fisher, G.~L. Nemhauser, and L.~A. Wolsey.
\newblock An analysis of approximations for maximizing submodular set functions
  -- {II}.
\newblock {\em Mathematical Programming Study}, 8:73--87, 1978.

\bibitem{GuptaNagarajan2013}
A.~Gupta and V.~Nagarajan.
\newblock A stochastic probing problem with applications.
\newblock In {\em Proceedings of the 16th Conference on Integer Programming and
  Combinatorial Optimization (IPCO)}, pages 205--216, 2013.

\bibitem{GuptaRothSchoenebeckTalwar2010}
A.~Gupta, A.~Roth, G.~Schoenebeck, and K.~Talwar.
\newblock Constrained non-monotone submodular maximization: Offline and
  secretary algorithms.
\newblock In A.~Saberi, editor, {\em Internet and Network Economics}, pages
  246--257. Springer, 2010.

\bibitem{GuruganeshLee2017}
G.~Guruganesh and E.~Lee.
\newblock Understanding the correlation gap for matchings.
\newblock In {\em Proceedings of the 37th Annual IARCS Conference on
  Foundations of Software Technology and Theoretical Computer Science
  (FSTTCS)}, pages 32:1--32:15, 2017.

\bibitem{HartlineMirrokniSundararajan2008}
J.~Hartline, V.~S. Mirrokni, and M.~Sundararajan.
\newblock Optimal marketing strategies over social networks.
\newblock In {\em Proceedings of the 17th International World Wide Web
  Conference (WWW)}, pages 189--198, 2008.

\bibitem{KarpSipser1981}
R.~M. Karp and M.~Sipser.
\newblock Maximum matching in sparse random graphs.
\newblock In {\em Proceedings of the 22nd Annual Symposium on Foundations of
  Computer Science (SFCS)}, pages 364--375, 1981.

\bibitem{korte_2018_combinatorial}
B.~Korte and J.~Vygen.
\newblock {\em Combinatorial Optimization, Theory and Algorithms}.
\newblock Springer, 6th edition, 2018.

\bibitem{kulik_2013_approximations}
A.~Kulik, H.~Shachnai, and T.~Tamir.
\newblock Approximations for monotone and non-monotone submodular maximization
  with knapsack constraints.
\newblock {\em Mathematics of Operations Research}, 38(4):729--739, 2013.

\bibitem{LeeMirrokniNagarajanSviridenko2009}
J.~Lee, V.~S. Mirrokni, V.~Nagarajan, and M.~Sviridenko.
\newblock Non-monotone submodular maximization under matroid and knapsack
  constraints.
\newblock In {\em Proceedings of the 41st Annual ACM Symposium on Theory of
  Computing (STOC)}, pages 323--332, 2009.

\bibitem{lee_2010_maximizing}
J.~Lee, V.~S. Mirrokni, V.~Nagarajan, and M.~Sviridenko.
\newblock Maximizing nonmonotone submodular functions under matroid or knapsack
  constraints.
\newblock {\em SIAM Journal on Discrete Mathematics}, 23(4):2053--2078, 2010.

\bibitem{LeeSviridenkoVondrak2010}
J.~Lee, M.~Sviridenko, and J.~Vondr\'{a}k.
\newblock Submodular maximization over multiple matroids via generalized
  exchange properties.
\newblock {\em Mathematics of Operations Research}, 35(4):795--806, 2010.

\bibitem{LehmannLehmannNisan2006}
B.~Lehmann, D.~Lehmann, and N.~Nisan.
\newblock Combinatorial auctions with decreasing marginal utilities.
\newblock {\em Games and Economic Behavior}, 55(2):270--296, 2006.

\bibitem{MirrokniSchapiraVondrak2008}
V.~Mirrokni, M.~Schapira, and J.~Vondr\'ak.
\newblock Tight information-theoretic lower bounds for welfare maximization in
  combinatorial auctions.
\newblock In {\em Proceedings of the 9th ACM Conference on Electronic Commerce
  (EC)}, pages 70--77, 2008.

\bibitem{MirzasoleimanKarbasiSarkarKrause2013}
B.~Mirzasoleiman, A.~Karbasi, R.~Sarkar, and A.~Krause.
\newblock Distributed submodular maximization: Identifying representative
  elements in massive data.
\newblock In {\em Proceedings of the 26th International Conference on Neural
  Information Processing Systems (NIPS)}, pages 2049--2057, 2013.

\bibitem{NemhauserWolsey1978}
G.~L. Nemhauser and L.~A. Wolsey.
\newblock Best algorithms for approximating the maximum of a submodular set
  function.
\newblock {\em Mathematics of Operations Research}, 3(3):177--188, 1978.

\bibitem{NemhauserWolseyFisher1978}
G.~L. Nemhauser, L.~A. Wolsey, and M.~L. Fisher.
\newblock An analysis of approximations for maximizing submodular set functions
  -- {I}.
\newblock {\em Mathematical Programming}, 14(1):265--294, 1978.

\bibitem{OveisGharanVondrak2011}
S.~Oveis~Gharan and J.~Vondr\'ak.
\newblock Submodular maximization by simulated annealing.
\newblock In {\em Proceedings of the 22nd Annual ACM-SIAM Symposium on Discrete
  Algorithms (SODA)}, pages 1098--1116, 2011.

\bibitem{Schrijver1998}
A.~Schrijver.
\newblock {\em Theory of Linear and Integer Programming}.
\newblock Wiley-Interscience Series in Discrete Mathematics and Optimization.
  John Wiley \& Sons, 1998.

\bibitem{schrijver_2003_combinatorial}
A.~Schrijver.
\newblock {\em Combinatorial Optimization, Polyhedra and Efficiency}.
\newblock Springer, 2003.

\bibitem{Sviridenko2004}
M.~Sviridenko.
\newblock A note on maximizing a submodular set function subject to a knapsack
  constraint.
\newblock {\em Operations Research Letters}, 32(1):41--43, 2004.

\bibitem{Vondrak2008}
J.~Vondr\'ak.
\newblock Optimal approximation for the submodular welfare problem in the value
  oracle model.
\newblock In {\em Proceedings of the 40th Annual ACM Symposium on Theory of
  Computing (STOC)}, pages 67--74, 2008.

\bibitem{Vondrak2013}
J.~Vondr\'{a}k.
\newblock Symmetry and approximability of submodular maximization problems.
\newblock {\em {SIAM} Journal on Computing}, 42(1):265--304, 2013.

\bibitem{Ward2012}
J.~Ward.
\newblock {A $(k+3)/2$-approximation algorithm for monotone submodular $k$-set
  packing and general $k$-exchange systems}.
\newblock In {\em Proceedings of the 29th International Symposium on
  Theoretical Aspects of Computer Science (STACS)}, pages 42--53, 2012.

\bibitem{WeiIyerBilmes2014}
K.~Wei, R.~Iyer, and J.~Bilmes.
\newblock Fast multi-stage submodular maximization.
\newblock In {\em Proceedings of the 31st International Conference on Machine
  Learning (ICML)}, pages 1494--1502, 2014.

\end{thebibliography}

\end{document}